\documentclass[a4paper,
               DIV=14, 
               titlepage=off,
               abstract=true,
               parskip=half]{scrartcl}
\usepackage{amssymb,amsmath,amsthm,amsfonts,braket}
\usepackage{mathpazo}

\newif\ifusebiblatex
\IfFileExists{biblatex.sty}{
    \usebiblatextrue
    \usepackage[backend=biber,style=alphabetic]{biblatex}
    \addbibresource{citations.bib}
}{
    \usebiblatexfalse
}

\usepackage{verbatim}
\usepackage{centernot}
\usepackage{mathtools}
\usepackage[dvipsnames]{xcolor}
\usepackage[colorlinks=true,linkcolor=blue,urlcolor=blue,citecolor=blue,anchorcolor=green]{hyperref}
\usepackage{xspace,xparse}
\usepackage{microtype}
\usepackage{tikz}
\usetikzlibrary{arrows.meta,positioning,calc}
\usepackage{authblk}

\newtheorem{thm}{Theorem}[section]
\newtheorem{prop}[thm]{Proposition}
\newtheorem{lem}[thm]{Lemma}
\newtheorem{cor}[thm]{Corollary}

\theoremstyle{definition}
\newtheorem{dfn}[thm]{Definition}
\newtheorem{eg}[thm]{Example}

\theoremstyle{remark}
\newtheorem{rmk}[thm]{Remark}

\theoremstyle{definition}

\numberwithin{equation}{section} 

\newcommand{\N}{\mathbb N}

\newcommand{\zo}{\{0,1\}}

\newcommand*{\qaczero}{\mathsf{QAC^0}}

\newcommand*{\aczero}{\mathsf{AC^0}}
\newcommand*{\toff}{\mathsf{TOFFOLI}}
\newcommand*{\fanout}{\mathsf{FANOUT}}
\newcommand*{\revor}{\mathsf{REVOR}}

\title{The power of constant-depth quantum circuits of unbounded size}
\author{Sergii Strelchuk\thanks{\texttt{Sergii.Strelchuk@cs.ox.ac.uk}}}
\author{Sathyawageeswar Subramanian\thanks{\texttt{Sathya.Subramanian@cs.ox.ac.uk}}}
\author{M\'{a}t\'{e} Weisz\thanks{\texttt{Mate.Weisz@st-annes.ox.ac.uk}}}
\affil{\small \textit{Department of Computer Science, University of Oxford, Parks Rd, Oxford OX1 3QG, United Kingdom}}

\date{}

\begin{document}

\maketitle

\begin{abstract}
Classical circuits with unbounded fan-in can compute any Boolean function in constant depth when their size is unrestricted. We ask whether removing the restrictions on circuit size and ancillary qubits also allows quantum circuits built from arbitrary single-qubit gates and generalised Toffoli gates to implement every unitary in constant depth.

We give exact constant-depth constructions for arbitrary permutations of computational basis states, diagonal unitaries and the preparation of arbitrary pure states. These connect quantum state preparation to reversible classical computation and the preparation of probability distributions. With fanout in the gate set, they use exponentially many gates and ancillary qubits and return all ancillary qubits to zero. Replacing fanout by an exact circuit over the original gate set preserves constant depth, although the size bounds can become doubly exponential.

The implementation of arbitrary unitaries in constant depth remains open. We give equivalent formulations in terms of copying the vectors of a specified orthonormal basis, extracting their labels and implementing restricted families of unitaries. We also reduce arbitrary unitary implementation to that of traceless unitary involutions using one additional clean qubit. With adaptive measurements, gate teleportation gives depth proportional to the level of a gate in the Clifford hierarchy.

Towards arbitrary unitary implementation in constant depth, we study port-based teleportation. For input dimension $d$ and $M\geq d^2-1$ ports, we construct a unitary circuit of depth $O(\sqrt d)$, independent of $M$ and including resource preparation and port selection, with entanglement fidelity at least $(1-(d^2-1)/(2M))^2$. Thus, at fixed $d$, the approximation can be made arbitrarily accurate without increasing depth. Whether the dependence on $d$ can also be removed remains open.
\end{abstract}

\clearpage
\tableofcontents
\clearpage

\section{Introduction}

The classical circuit class $\aczero$ is defined by polynomial-size circuit families of constant depth over unbounded fan-in AND and OR gates and NOT
gates~\cite[Section~4.5]{odonnell2014analysis}. Notably, parity lies outside
this class~\cite{furst1984parity}. The restriction to polynomial size is essential to the limitations of classical circuits of constant depth with unbounded fan-in, since with unrestricted size, circuits over the same gate set can compute every Boolean function in constant depth. In particular, every function on $n$ bits has a disjunctive normal form (DNF) with at most $2^n$ conjunctions. These conjunctions can be evaluated
in parallel and combined by a single OR gate, giving constant depth and $O(2^n)$ gates~\cite[Section~4.1]{odonnell2014analysis}. 

While this does not place every Boolean function in $\aczero$, it demonstrates that if efficiency is set aside, constant depth alone imposes no restriction on which Boolean functions can be computed.

We investigate a quantum counterpart of this phenomenon, focusing on circuits built from arbitrary single-qubit gates and generalised Toffoli gates with an unrestricted number of controls. When the size is restricted to be polynomial, circuits over this gate set form the class $\qaczero$. 

Suppose instead that we relax the restrictions on both circuit size and ancillary space. Can every unitary then be implemented exactly in constant depth? We approach this question through four increasingly general tasks:
\begin{enumerate}
    \item computing membership in any set $L\subseteq\zo^n$,
    \item implementing any permutation of computational basis states,
    \item preparing any pure quantum state,
    \item implementing any unitary on every input state.
\end{enumerate}

We give general constructions showing that unrestricted size and ancillary space suffice for the first three tasks, but our constructions do not extend directly to implementing arbitrary unitaries. The difficulty is no longer to produce an arbitrary output state from a fixed input, but to prescribe the action on every input state simultaneously while preserving unitarity, making the fourth task of implementing unitaries qualitatively different from the first three. Nevertheless, we provide explicit constructions for common families of unitaries, and provide several equivalent characterisations of necessary and sufficient conditions for arbitrary unitaries.

Classically, removing the size restriction reduces constant-depth universality to the elementary DNF construction. The quantum setting points to a broader question: once efficiency is set aside, what ultimately limits constant-depth computation? 

\subsection{Main results}

The first two tasks admit a common solution. We construct a reversible
encoding that maps each $n$-bit string to its $2^n$-dimensional indicator
vector. The construction tests the input against every possible string in
parallel, while supplying the input to the separate tests and subsequently
clearing all additional registers. Composing two such encodings implements
an arbitrary permutation of bitstrings, and hence an arbitrary permutation
of computational basis states. The same construction also computes any
Boolean function while retaining its input, through the reversible map
\begin{equation}
    (x,y)\longmapsto(x,y\oplus f(x)).
\end{equation}
Retaining the input is necessary in general, since $x\mapsto f(x)$ is not
reversible when $f$ is noninjective. In the quantum setting, the indicator
encoding also lets us implement an arbitrary diagonal unitary: we compute
the indicator of the input, apply all possible phases to the corresponding
indicator qubits in parallel, and uncompute the indicators.

For the third task, we first give a probabilistic classical construction
that prepares an arbitrary probability distribution. We sample bits
independently and retain only the first $1$, using its position, together
with the all-zero outcome, to encode the desired distribution. We then make
this construction coherent to prepare arbitrary pure quantum states.
Classically, the bits following the first $1$ can simply be erased;
quantumly, the corresponding qubits must instead be returned to $\ket 0$
without destroying the superposition over outputs. We achieve this by
applying inverse rotations to their known states. The resulting
construction preserves the amplitudes associated with the possible
positions of the first $1$, after which the inverse indicator encoding
converts these positions into computational basis states and a diagonal
unitary restores the required phases.

With fanout included in the gate set, all of these constructions have
constant depth. Arbitrary permutations and diagonal unitaries use
$O(n2^n)$ gates and $O(n2^n)$ bits or qubits. Arbitrary probability
distributions and pure quantum states use $O(n2^n)$ elementary gates and
$O(4^n)$ bits or qubits. In every case, all ancillary registers are returned
to zero.

Fanout is not part of our original quantum gate set. In Section~4, we
combine a result of Grier, Morris and Wu~\cite{grier2026mathsfqac0containsmathsftc0with}
with Rosenthal's parity construction and reduction from parity to
fanout~\cite{rosenthal2020boundsqac0complexityapproximating} to implement
fanout exactly in constant depth using only single-qubit and generalised
Toffoli gates. Consequently, each construction above has an exact
constant-depth implementation over our original gate set. Replacing
fanout in this way can, however, increase the size and ancillary-space
bounds to doubly exponential in $n$. Thus unrestricted circuit size and
ancillary space suffice for the first three tasks.

The fourth task remains open. The distinction from state preparation is
fundamental: preparing a state specifies the action of a circuit on one
fixed input, whereas implementing a unitary specifies its action on every
input. Section~5 studies this remaining problem through several reductions.
For a unitary diagonal in an orthonormal basis
$\{\ket{\psi_1},\ldots,\ket{\psi_N}\}$, we show that it suffices to
implement any of several operations that expose enough information about
the unknown basis vector $\ket{\psi_i}$. These include producing sufficiently
many copies of $\ket{\psi_i}$, producing a suitable permutation of all
basis vectors, and coherently extracting the computational basis label
$i$. Conversely, if arbitrary unitaries admit constant-depth
implementations, each of these operations does as well. They therefore
give equivalent formulations of arbitrary unitary implementation in our
unrestricted-resource model.

We obtain two further structural reductions. First, the normal form of
Idel and Wolf~\cite{idel2015sinkhorn} reduces arbitrary unitary
implementation to unitaries whose rows and columns all sum to $1$, since
the required diagonal factors are already available in constant depth.
Second, arbitrary unitary implementation reduces, using one additional
clean qubit, to implementing traceless unitary involutions: for every
unitary $U$, the block unitary
\begin{equation}
    \begin{pmatrix}
        0 & U^\dagger \\
        U & 0
    \end{pmatrix}
\end{equation}
is traceless, squares to the identity, and can be used to apply $U$ while
returning the additional qubit to zero. We also give an alternative
description of the circuit model: the same families of unitaries are
implementable using a constant number of layers, each consisting of
Hadamard gates, an arbitrary permutation of computational basis states,
or an arbitrary diagonal unitary.

Allowing adaptive intermediate measurements gives another route to unitary
implementation. Using gate teleportation~\cite{gottesman1999teleportation},
we show that every gate at level $\ell$ of the Clifford hierarchy has an
adaptive implementation of depth $O(\ell)$. The teleportation correction
drops by one level of the hierarchy at each round. More generally, gate
teleportation reduces arbitrary unitary implementation to the same family
of traceless unitary involutions: apart from the identity branch, the
required correction is a unitary conjugate of a nonidentity Pauli and is
therefore a traceless involution.

Finally, in Section~\ref{sec-pbt-fixed}, we use port-based teleportation
(PBT) to approach arbitrary unitary implementation without corrections
that depend on the unitary~\cite{pbt-ishizaka2008,pbt-wills2024}. For input
dimension $d$ and $M\geq d^2-1$ ports, we construct a unitary circuit using
only single-qubit and generalised Toffoli gates, with depth $O(\sqrt d)$
and entanglement fidelity
\begin{equation}
    F_{\mathrm e}
    \geq
    \left(1-\frac{d^2-1}{2M}\right)^2.
\end{equation}
The depth includes resource preparation and output selection, requires no
intermediate measurements, and is independent of $M$. Replacing the
maximally entangled resource states by copies of the Choi state of a
unitary gives an approximate implementation of that unitary. Hence, for
every fixed input dimension, the approximation can be made arbitrarily
accurate by increasing the number of ports without increasing the depth.
Whether the dependence on $d$ can also be removed remains open.

Figures~\ref{figure-preparation} and~\ref{figure-unitaries} summarise the
constructions and reductions.
\begin{figure}[htbp]
    \centering
\begin{tikzpicture}[
    font=\normalfont\normalsize,
    every node/.style={align=center},
    box/.style={draw=black,line width=0.6pt,minimum height=1.15cm,
                inner xsep=8pt,inner ysep=8pt},
    arrow/.style={->,>=latex,line width=0.7pt}
]
\node[box,text width=5.05cm] (diagonal) at (-4.0,2.85)
    {Diagonal gates};
\node[box,text width=5.05cm] (permutations) at (3.0,2.85)
    {Classical permutations};
\node[box,text width=5.45cm] (states) at (-1.3,0)
    {Pure quantum states};
\node[box,text width=4.25cm] (probabilities) at (5.0,0)
    {Probability vectors};
\draw[arrow] (-3.0,2.275) -- (-3.0,0.575);
\draw[arrow] (0.9,2.275) -- (0.9,0.575);
\draw[arrow] (4.6,2.275) -- (4.6,0.575);
\end{tikzpicture}
    \caption{Ingredients used in the preparation constructions of Propositions~\ref{Probability distribution vector in constant depth} and~\ref{Any state in constant depth}. The arrows indicate parts of these constructions. Probability preparation also uses single-bit stochastic gates, and quantum preparation also uses single-qubit rotations. The final conversion from indicators to binary labels uses the reversible circuit of Lemma~\ref{permutation to indicator in constant depth}.}
    \label{figure-preparation}
\end{figure}

\begin{figure}[htbp]
    \centering
    \begin{tikzpicture}[
    font=\normalfont\normalsize,
    every node/.style={align=center},
    box/.style={draw=black,line width=0.6pt,minimum height=1.9cm,
                inner xsep=7pt,inner ysep=8pt,text width=4.05cm},
    arrow/.style={->,>=latex,line width=0.7pt}
]
\node[anchor=west,font=\normalfont\bfseries,align=left]
    at (-7.28,6.0) {(a) Exact implementation in constant depth};
\node[box] (labels) at (-5.0,3.0)
    {Basis labels\\[3pt]
     {\small Proposition~\ref{basis labels} and\\Corollary~\ref{labels imply unitaries}}};
\node[box] (involutions) at (0,4.5)
    {Traceless involutions\\[3pt]
     {\small Direct unitary reduction\\
      after Proposition~\ref{traceless involutions}}};
\node[box] (copying) at (5.0,3.0)
    {Copying basis vectors\\[3pt]
     {\small Proposition~\ref{copying basis vectors}}};
\node[box,text width=4.5cm,minimum height=1.35cm] (unitary) at (0,0.65)
    {Arbitrary unitary\\implementation\\[3pt]
     {\small Open problem}};
\node[box,text width=4.55cm] (sums) at (-4.4,-2.0)
    {Unitaries with row\\and column sums $1$\\[3pt]
     {\small Proposition~\ref{row and column sums}}};
\node[box,text width=4.55cm] (lists) at (4.4,-2.0)
    {Permuted lists\\of basis vectors\\[3pt]
     {\small Proposition~\ref{permuted basis vectors}}};
\draw[arrow] (labels.south) -- (unitary.north west);
\draw[arrow] (involutions.south) -- (unitary.north);
\draw[arrow] (copying.south) -- (unitary.north east);
\draw[arrow] (sums.north) -- (unitary.south west);
\draw[arrow] (lists.north) -- (unitary.south east);

\draw[black!35,line width=0.45pt] (-7.28,-3.55) -- (7.28,-3.55);
\node[anchor=west,font=\normalfont\bfseries,align=left]
    at (-7.28,-4.05) {(b) Approximation using port-based teleportation with $M$ ports};

\node[box,minimum height=2.05cm] (choi) at (-5.0,-5.85)
    {Choi states of $U$\\[3pt]
     {\small Constant-depth preparation\\
      Theorem~\ref{Any state in constant depth}}};
\node[box,minimum height=2.05cm] (pbt) at (0,-5.85)
    {Port-based teleportation\\[3pt]
     {\small Depth $O(\sqrt d)$\\
      Theorem~\ref{pbt-main}}};
\node[box,minimum height=2.05cm] (approximate) at (5.0,-5.85)
    {Approximate unitary\\implementation\\[3pt]
     {\small Depth $O(\sqrt d)$\\Independent of $M$}};
\draw[arrow] (choi.east) -- (pbt.west);
\draw[arrow] (pbt.east) -- (approximate.west);

\end{tikzpicture}
\caption{Exact reductions and approximation via port-based
teleportation. (a) Each arrow gives a sufficient condition
for exact constant-depth implementation of arbitrary
unitaries. The traceless-involution arrow uses the direct
unitary reduction following
Proposition~\ref{traceless involutions}, with one additional
clean qubit. (b) Preparing copies of the Choi state of $U$
and applying port-based teleportation gives an approximate
implementation of $U$. The quality of approximation is measured by the entanglement fidelity relative
to $U$, and the depth includes resource preparation and
output selection.
The number of ancillary qubits are
unrestricted in both panels.}
    \label{figure-unitaries}
\end{figure}

\subsection{Related work}
As mentioned briefly in the introduction, the broader context for our question may be understood by considering the case of quantum parity and fanout. On computational basis states the $n+1$-qubit quantum parity gate replaces its target bit by its XOR with all control bits and leaves the controls unchanged. This action extends linearly to arbitrary quantum states. Moore showed that a quantum version of fanout and parity can be reduced to one another in constant depth using single-qubit gates~\cite{moore1999fanout}.
Further expanding on the power of such circuits,  H{\o}yer and {\v S}palek subsequently showed that adding fanout to the gate set permits polynomial-size, constant-depth quantum circuits to approximate a range of counting and arithmetic operations~\cite{hoyer2005fanout}.

Whether fanout can itself be implemented in constant depth from single-qubit and generalised Toffoli gates depends on the available ancillary space. Indeed, Fang et al.\ proved that parity and fanout require at least logarithmic depth over this gate set when only a constant number of ancillary qubits is available~\cite{fang2006fanout}. This restriction on ancillary
qubits leaves open the possibility that additional space, together with sufficiently large circuits, permits constant depth implementations.

The focus of our investigation is therefore whether relaxing restrictions on circuit size and ancillary space can compensate for operations that are otherwise unavailable at constant depth.

Rosenthal~\cite{rosenthal2026querydepth} proved that arbitrary pure states can be prepared exactly in constant depth using exponentially many ancillary qubits when fanout is included in the gate set. Gretta, Gupta and Joshi~\cite[Theorem~1.1]{gretta2026logarithmic} give exact constant-depth preparation using $2^{O(n)}$ gates and ancillary qubits with only single-qubit and generalised Toffoli gates. Both constructions return the ancillary qubits to zero. Consequently, the potentially doubly exponential size obtained by replacing fanout in our construction is not the best known bound for arbitrary state preparation over this gate set. Our construction instead gives a direct preparation procedure that makes explicit the connection between classical probability preparation and quantum state preparation.

Much smaller circuits suffice for restricted families of states. A Dicke state is the normalised equal superposition of computational-basis states of a fixed Hamming weight. Gretta, Gupta and Joshi~\cite{gretta2026dicke} give constant-depth constructions using only single-qubit and generalised Toffoli gates for Dicke states whose weight is bounded by a polynomial in $\log n$. They also give constant-depth constructions with fanout for arbitrary symmetric states, i.e., states invariant under every permutation of their qubits. These works focus on identifying families of states that admit polynomial-resource preparation, whereas our results concern arbitrary states and permit the circuit size to grow without restriction.

\subsection{Circuit conventions}
\label{circuit-conventions}

We work with reversible classical, probabilistic classical and quantum circuits. In Section~2 we use generalised Toffoli and fanout gates as elementary gates. In Section~3 we also allow arbitrary single-bit stochastic gates. In Sections~4 and~5 we use arbitrary single-qubit unitaries and generalised Toffoli gates. We denote this quantum model by unbounded size $\qaczero$ and write $\mathsf{QAC^0_f}$ when fanout is also included in the gate set. In both quantum models we remove the polynomial size restriction.

Each layer consists of gates acting on disjoint sets of bits or qubits, including controls and targets. Depth is the number of layers, size is the number of elementary gates, and width is the total number of bits or qubits, including ancillary registers. We count each elementary gate once regardless of its number of inputs. A fanout with any number of targets counts as one gate and occupies one layer when it is included in the gate set. Otherwise we implement fanout by a circuit over the chosen gate set and count the gates, depth and ancillary qubits of that implementation.

For each constant-depth construction, we require a single depth bound that holds for every input length and every target covered by the result. The circuit, its width and its gate parameters may depend on the target. We treat each chosen single-bit stochastic map or single-qubit unitary as an exact elementary gate. We make no assumption that the circuit or its parameters can be found efficiently.

We call an ancillary qubit clean if it is initialised in $\ket{0}$. To implement an $n$-qubit unitary $U$ we allow a circuit $C$ on $n+c$ qubits for some $c\geq0$ and require
\[
    C\bigl(\ket{\psi}\ket{0}^{\otimes c}\bigr)
    =U\ket{\psi}\ket{0}^{\otimes c}
    \qquad\text{for every input state }\ket{\psi}.
\]
The first register contains the data and the second contains the ancillary qubits. Returning the ancillary qubits to zero allows us to compose these implementations. State preparation specifies the output only on the all-zero input and also requires all ancillary qubits to return to zero. In reversible and probabilistic classical circuits the input register contains the specified bitstring or probability vector. All other bits start in zero and every bit outside the output register ends in zero.

Input and output registers may occupy different locations. The indicator encoding in Section~2 uses this convention. 
Some intermediate maps in Section~5 produce additional states in ancillary registers. We return these registers to zero by applying the inverse of the map that produced them. Each construction specifies how to do this while preserving the required output. We introduce adaptive measurements only in the final part of Section~5, where we require the corrected quantum output to agree with the target unitary and allow measurement records to be discarded.

\section{Reversible classical circuits}
\label{section-reversible}

We begin by constructing constant-depth reversible circuits for arbitrary permutations of bitstrings. These constructions will also apply to quantum circuits, where the same gates permute computational basis states. We first specify the gates and their action on selected bits, using the circuit conventions from Section~\ref{circuit-conventions}. We write $[m]=\{1,\ldots,m\}$.

\begin{dfn}
    A reversible classical gate on $n$ bits is a permutation $\pi:\zo^n\rightarrow\zo^n$. On input $x=x_1\ldots x_n$, it outputs $\pi(x)$.
\end{dfn}

\begin{rmk}
    We write $P_n$ for the group of permutations of $n$-bit strings, so $P_n\cong S_{2^n}$.
\end{rmk}

\begin{eg}
    The generalised Toffoli gate $\toff_n$ has $n-1$ control bits and one target bit. It replaces the target by its XOR with
the AND of the controls:
    \begin{equation}
        \toff_n(x_1\ldots x_n)
        :=x_1\ldots x_{n-1}\bigl(x_n\oplus(x_1\wedge\cdots\wedge x_{n-1})\bigr).
    \end{equation}
    It exchanges $1^{n-1}0$ and $1^n$ and leaves every other bitstring unchanged. The first $n-1$ bits are its controls and the last bit is its target. The controls may be listed in any order.

    By convention, $\toff_1=\mathsf{NOT}$.
\end{eg}

\begin{eg}
    The $n$-bit $\fanout$ gate replaces each of the last $n-1$ bits by its XOR with the first bit:
    \begin{equation}
        \fanout_n(x_1\ldots x_n)
        :=x_1(x_2\oplus x_1)\ldots(x_n\oplus x_1).
    \end{equation}
    The first bit is its control and the remaining bits are its targets. The targets may be listed in any order. Applying the gate twice gives the identity.

    By convention, $\fanout_1$ is the identity.
\end{eg}

\begin{eg}
\label{example-revor}
    The $n$-bit $\revor$ gate replaces the last bit by its XOR with the OR of the first $n-1$ bits:
    \begin{equation}
        \revor_n(x_1\ldots x_n)
        :=x_1\ldots x_{n-1}\bigl(x_n\oplus(x_1\lor\cdots\lor x_{n-1})\bigr).
    \end{equation}
    The first $n-1$ bits are its controls and the last bit is its target. Applying this gate twice also gives the identity. We take an empty OR to be zero, so $\revor_1$ is the identity.

    For $n\geq2$, we implement this gate in three layers using Toffoli and NOT gates. We negate all controls, apply $\toff_n$, and then negate all controls and the target. These operations restore the controls to their original values and replace the target by
    \begin{equation}
    \begin{split}
        x_n\oplus(\neg x_1\wedge\cdots\wedge\neg x_{n-1})\oplus1
        &=x_n\oplus(x_1\lor\cdots\lor x_{n-1}).
    \end{split}
    \end{equation}
    The implementation uses $2n$ elementary gates.
\end{eg}

\begin{dfn}
    An embedded reversible gate on $n$ bits consists of a gate $\pi\in P_m$ and an injective location function $f:[m]\rightarrow[n]$. The location $f(j)$ holds the $j$th input and output of $\pi$, while the remaining bits are unchanged. Writing $\pi_j$ for the $j$th output bit of $\pi$, its action sends $x_1\ldots x_n$ to $y_1\ldots y_n$, where
    \begin{equation}
        y_i=
        \begin{cases}
            \pi_j(x_{f(1)}\ldots x_{f(m)})&\text{if }i=f(j)\text{ for some }j\in[m],\\
            x_i&\text{otherwise}.
        \end{cases}
    \end{equation}
\end{dfn}

We denote this embedded gate by $(\pi,f)$.

\begin{rmk}
    Injectivity of $f$ ensures that each output is assigned to a distinct bit. For a Toffoli gate, we need only specify the set of control locations and the target location, since permuting the controls does not change its action. Similarly, a fanout gate is specified by its control location and the set of target locations.
\end{rmk}

\begin{dfn}
    Let $\mathcal G\subseteq\bigcup_{i\in\N}P_i$ be a set of elementary gates. A layer $L$ of embedded $\mathcal G$-gates on $n$ bits is a set
    \begin{equation}
        L=\{(\pi_1,f_1),\ldots,(\pi_k,f_k)\},
    \end{equation}
    where every $\pi_i$ belongs to $\mathcal G$ and the ranges of the $f_i$ are pairwise disjoint. The disjointness condition allows all gates in the layer to act in parallel.
\end{dfn}

\begin{rmk}
    Gates in a layer act on disjoint sets of bits and hence commute. Their composition gives the action of the layer in any order.
\end{rmk}

\begin{dfn}
    A reversible classical circuit $C$ on $n$ bits over a gate set $\mathcal G$ is a tuple $(L_1,\ldots,L_d)$ of layers of embedded $\mathcal G$-gates. We apply the layers in the order $L_1,\ldots,L_d$. The circuit has depth $d$ and width $n$.
\end{dfn}

\begin{rmk}
    Identifying circuits and layers with their actions, we have $C=L_d\circ\cdots\circ L_1$.
\end{rmk}

For the rest of this section, we use Toffoli and fanout gates of arbitrary arity as elementary gates.

\begin{dfn}
    A family of reversible classical circuits $(C_m)_{m\in\N}$ over the gate set
    \begin{equation}
        \mathcal G=\{\toff_n,\fanout_n:n\geq1\}
    \end{equation}
    is a family of constant-depth circuits over Toffoli and fanout gates, abbreviated CDTF, if a single constant bounds the depth of every $C_m$.
\end{dfn}

\begin{rmk}
    We may add identity layers so that all circuits in the family have the same depth.
\end{rmk}

\begin{dfn}
    A circuit $C$ on $n$ bits realises a permutation $\pi\in P_m$ at distinct locations $(b_1,\ldots,b_m)\in[n]^m$ if it agrees with the embedded gate $(\pi,f)$, where $f(i)=b_i$, on every input that is zero outside these locations. Thus the circuit applies $\pi$ to the selected bits and returns every additional bit to zero.
\end{dfn}

\begin{dfn}
\label{definition-classical-implementation}
    Let $S\subseteq\zo^k$ and $k,m\leq n$. A circuit $C$ on $n$ bits implements an injective function $f:S\rightarrow\zo^m$ if, for chosen input and output locations,
    \begin{equation}
        C(x0^{n-k})=f(x)0^{n-m}\qquad\forall x\in S.
    \end{equation}
    We write each side with its corresponding data register first. The input and output locations may differ, and every bit outside the output register returns to zero. Since a reversible circuit is a permutation, $f$ must be injective on $S$. Conversely, any such injective map extends to a permutation of all $n$-bit strings: we match the remaining inputs bijectively with the remaining outputs.
\end{dfn}

To realise any permutation, we will rely on Lemma \ref{permutation to indicator in constant depth}, which shows how to map any ordering of $\zo^n$ into a fixed list of strings. More concretely, we will use a list of the strings of length $2^n$ with Hamming weight 1. Using the lemma twice will allow us to go from any ordering to any other.

Our construction encodes each bitstring by an indicator vector, with one entry for each possible input. Exactly one entry is 1, identifying the input. We first compute these entries into an additional register, then use them to clear the original input register. Reversing the circuit recovers the bitstring from its indicator vector. We will implement an arbitrary permutation by composing a circuit from Lemma \ref{permutation to indicator in constant depth} with the inverse of another.

\begin{lem}
\label{permutation to indicator in constant depth}
    Let $n\geq1$, let $m=2^n$, and let $x^1,\ldots,x^m$ be any ordering of $\zo^n$. The map
    \begin{equation}
        f:\zo^n\rightarrow\zo^m,\qquad
        x\longmapsto\mathbb{1}_{x^1}(x)\ldots\mathbb{1}_{x^m}(x)
    \end{equation}
    can be implemented by a reversible classical circuit over Toffoli and fanout gates with depth at most $10$, width $(n+1)m$ and size $O(nm)$. Here $\mathbb{1}_{x^i}(x)$ is one if $x=x^i$ and zero otherwise. The total number of gate inputs is also $O(nm)$.
\end{lem}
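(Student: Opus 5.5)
The plan is to build the circuit in four stages: copy the input, compute all indicators in parallel, erase the input using the indicators, and then clean up the copies. The width $(n+1)m$ suggests laying it out as $m$ blocks of $n$ bits, each holding a copy of $x$, plus $m$ indicator bits. One block can be taken to be the original input register.

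\textbf{Stage 1: copying.} Each input bit $x_j$ is fanned out to the $j$th position of the other $m-1$ blocks, using one $\fanout_m$ gate per $j$. These $n$ gates act on disjoint bits, so they form one layer. Every block now holds $x$.

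\textbf{Stage 2: indicators.} In block $i$ we apply NOT to each position $j$ with $x^i_j=0$. Block $i$ is then $1^n$ exactly when $x=x^i$. A $\toff_{n+1}$ gate with block $i$ as controls and the $i$th indicator bit as target writes $\mathbb 1_{x^i}(x)$. We then reapply the NOTs. This stage takes three layers and runs in parallel over all $i$, since the blocks are disjoint.

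\textbf{Stage 3: erasing the input.} This is the step that makes the map $f$ rather than $x\mapsto(x,f(x))$, and I expect it to be the main obstacle. The key observation is that $x_j=\bigvee_{i:\,x^i_j=1}\mathbb 1_{x^i}(x)$. So, for each $j$, a $\revor$ gate with controls the indicator bits $\{i: x^i_j=1\}$ and target one copy of $x_j$ zeroes that copy.

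The difficulty is that different $j$ use overlapping indicator bits as controls, so these gates cannot share a layer. The fix is to first fan out each indicator bit into $n$ fresh ancilla copies, using one layer of $\fanout_{n+1}$ gates. Then the $n$ $\revor$ gates can act on disjoint copies in parallel, taking three layers by Example~\ref{example-revor}. Finally we unfan the indicator copies with one more layer.

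This needs $nm$ extra bits, which overshoots the width budget. To stay within $(n+1)m$, I would instead reuse the copy blocks from Stage 1. Before Stage 3 we clear blocks $2,\ldots,m$ by reapplying the Stage 1 fanouts, which takes one layer. The remaining $nm - n$ free bits, together with some bookkeeping, suffice to hold the indicator copies. Concretely, for each $j$ the target bit whose copy we erase is $x_j$ in block 1, and the indicator copies live in blocks $2,\ldots,m$ and the now-free positions. If the count is off by a constant number of bits, we can instead let the $i$th indicator serve directly as one of its own copies.

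\textbf{Counting.} The layers are: 1 for copying, 3 for indicators, 1 to uncopy, 1 to fan out indicators, 3 for $\revor$, and 1 to unfan, for a total of 10. The gate count is $O(nm)$: there are $O(nm)$ NOTs, $m$ Toffolis, $O(n+m)$ fanouts and $n$ $\revor$ gates. The total number of gate inputs is also $O(nm)$, since each gate's arity is at most $O(n)$ or $O(m)$, and only $O(n)$ gates have arity $\Theta(m)$.

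Finally, we check that on input $x0\cdots0$ the only nonzero bits at the end are the indicator bits, which hold $f(x)$. This matches Definition~\ref{definition-classical-implementation}, with the output register located at the indicator bits.
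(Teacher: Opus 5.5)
Your proposal is correct and is essentially the paper's proof. It copies the input with column fanouts, writes the indicators with a NOT/Toffoli/NOT layer, and undoes the copies. It then fans out each indicator, clears each $x_j$ with a $\revor$ controlled by the indicators in $S_j=\{i:(x^i)_j=1\}$, and undoes the indicator fanout, for the same $5+5=10$ layers.

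The one point to tighten is the width count. With $n$ fresh copies per indicator you would be short by $n$ bits, which is not a constant. Your fallback fixes this: let each indicator bit serve as one of its own copies, using $\fanout_n$ per indicator. That needs only $m(n-1)$ free bits, and you have $n(m-1)$ of them, since $n(m-1)-m(n-1)=m-n\geq0$. This is exactly the relabelling the paper uses.
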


\begin{proof}

We construct the circuit in two parts: first we compute the indicator vector, then we use it to clear the input. The arrays below illustrate the construction for $n=3$, with the ordering
\begin{equation}
    x^1=110, x^2=010, x^3=001, x^4=000, x^5=101, x^6=100, x^7=111, x^8=011.
\end{equation}

For general $n$, we arrange $mn$ bits in an $m\times n$ array $A$. Initially, its first row contains the input $x$, with $x_j$ at position $A_{1j}$, and all remaining bits are zero. We use a column $B$ of $m$ further bits, also initially zero, to store the indicators. The superscripts on the arrays label the displayed stages, each of which may involve several circuit layers.

\begin{equation}
   A^{(0)}= \begin{array}{|c|c|c|}
        \hline
         x_1 & x_2 & x_3 \\
         \hline
         0 & 0 & 0 \\
         \hline
         0 & 0 & 0 \\
         \hline
         \vdots & \vdots & \vdots \\
         \hline
         0 & 0 & 0  \\
         \hline
    \end{array}
    \hspace{1cm}
    B^{(0)}=\begin{array}{|c|}
        \hline
          0 \\
          \hline
          0 \\
          \hline
          0 \\
          \hline
          \vdots \\
          \hline
          0 \\
          \hline
    \end{array}
\end{equation}

We copy the input into every row of $A$ by applying a $\fanout_m$ gate to each column, using the first bit of the column as its control. The columns are disjoint, so these gates form one layer. The register $B$ remains zero.

\begin{equation}
   A^{(1)}= \begin{array}{|c|c|c|}
        \hline
         x_1 & x_2 & x_3 \\
         \hline
         x_1 & x_2 & x_3 \\
         \hline
         x_1 & x_2 & x_3\\
         \hline
         \vdots & \vdots & \vdots\\
         \hline
         x_1 & x_2 & x_3\\
         \hline
    \end{array}
\end{equation}

To test whether $x=x^i$ in each row $i$, we apply a $\mathsf{NOT}$ gate at every position $A_{ij}$ for which $(x^i)_j=0$. After this layer, all entries in row $i$ are 1 exactly when $x=x^i$. Since the list contains every bitstring once, there is precisely one such row, which we denote by $s$. This layer leaves $B$ unchanged.

    \begin{equation}
   A^{(2)}= \begin{array}{|c|c|c|}
        \hline
         x_1 & x_2 & \neg x_3 \\
         \hline
         \neg x_1 & x_2 & \neg x_3 \\
         \hline
         \neg x_1 & \neg x_2 & x_3\\
         \hline
         \neg x_1 & \neg x_2 & \neg x_3\\
         \hline
         x_1 & \neg x_2 & x_3\\
         \hline
         x_1 & \neg x_2 & \neg x_3\\
         \hline
         x_1 & x_2 & x_3\\
         \hline
         \neg x_1 & x_2 & x_3\\
         \hline
    \end{array}
\end{equation}

We now write the result of each test into $B_i$ by applying a $\toff_{n+1}$ gate with row $i$ as its controls and $B_i$ as its target. These gates act on disjoint sets of bits, so they form one layer. The register $B$ then contains the indicator vector: $B_s=1$ and all its other entries are zero. The register $A$ is unchanged.

    \begin{equation}
    B^{(3)}=\begin{array}{|c|}
        \hline
          \mathbb{1}_{x^1}(x) \\
          \hline
          \mathbb{1}_{x^2}(x) \\
          \hline
          \mathbb{1}_{x^3}(x) \\
          \hline
          \mathbb{1}_{x^4}(x) \\
         \hline
          \mathbb{1}_{x^5}(x) \\
         \hline
          \mathbb{1}_{x^6}(x) \\
         \hline
          \mathbb{1}_{x^7}(x) \\
          \hline
          \mathbb{1}_{x^8}(x) \\
          \hline
    \end{array}
\end{equation}

We reverse the NOT layer and then the column fanout layer to restore $A$ to its initial state. Its first row again contains $x=x^s$, with zero in every other entry. Neither layer acts on $B$, so the indicator vector is preserved.

\begin{equation}
   A^{(4)}= \begin{array}{|c|c|c|}
        \hline
         x_1 & x_2 & x_3 \\
         \hline
         0 & 0 & 0 \\
         \hline
         0 & 0 & 0 \\
         \hline
         \vdots & \vdots & \vdots\\
         \hline
         0 & 0 & 0 \\
         \hline
    \end{array}
\end{equation}

For the second part of the construction, we use the indicators to clear the input. We denote the first row of $A$ by $C$ and arrange the register $B$, together with $m(n-1)$ of the zero bits in $A$, as an $m\times n$ array $D$. The first column of $D$ is $B$ and all its other entries are zero. This is a relabelling of existing bit locations and requires no gates. We have enough zero bits because
\begin{equation}
 n(m-1)-m(n-1)=m-n\geq0,\qquad m=2^n,\quad n\geq1.
\end{equation}
The remaining $m-n$ bits stay zero during the rest of the construction. The register $C$ still contains $x$.

\begin{equation}
    D^{(0)}=\begin{array}{|c|c|c|}
        \hline
           \mathbb{1}_{x^1}(x) & 0 & 0\\
          \hline
          \mathbb{1}_{x^2}(x)  & 0 & 0\\
          \hline
          \mathbb{1}_{x^3}(x)  & 0 & 0\\
          \hline
          \vdots &\vdots & \vdots \\
          \hline
           \mathbb{1}_{x^8}(x)  & 0 & 0\\
          \hline
    \end{array}
\end{equation}

To clear the $n$ bits of $C$ in parallel, we first make $n$ copies of each indicator. We apply $\fanout_n$ to every row of $D$, using its first bit as the control. These gates form one layer, since the rows are disjoint. The register $C$ is unchanged.

\begin{samepage}
\begin{equation}
    D^{(1)}=\begin{array}{|c|c|c|}
        \hline
           \mathbb{1}_{x^1}(x) & \mathbb{1}_{x^1}(x) & \mathbb{1}_{x^1}(x)\\
          \hline
          \mathbb{1}_{x^2}(x)  & \mathbb{1}_{x^2}(x) & \mathbb{1}_{x^2}(x)\\
          \hline
          \mathbb{1}_{x^3}(x)  & \mathbb{1}_{x^3}(x) & \mathbb{1}_{x^3}(x)\\
          \hline
          \vdots &\vdots & \vdots \\
          \hline
           \mathbb{1}_{x^8}(x)  & \mathbb{1}_{x^8}(x) & \mathbb{1}_{x^8}(x)\\
          \hline
    \end{array}
\end{equation}
\end{samepage}

For each $j\in[n]$, let $S_j:=\{i\in[m]:(x^i)_j=1\}$. We apply a $\revor$ gate with the bits $D_{ij}$ for $i\in S_j$ as its controls and $C_j$ as its target. Since exactly one indicator is 1, the OR of these controls recovers the input bit $x_j$:
\begin{equation}
 \bigvee_{i\in S_j}\mathbb{1}_{x^i}(x)=x_j.
\end{equation}
Each gate thus replaces $C_j=x_j$ by $x_j\oplus x_j=0$. The gates for different $j$ use different columns of $D$ and distinct targets in $C$, so they have disjoint supports. Using Example~\ref{example-revor}, we implement all of them in three layers. The register $D$ is unchanged.

\begin{equation*}
   C^{(2)}= \begin{array}{|c|c|c|}
        \hline
         0 & 0 & 0 \\
         \hline
         \end{array}
\end{equation*}

Finally, we reverse the fanout layer on the rows of $D$ to clear the additional indicator copies. The first column, which is the original register $B$, retains the indicators, and every other bit returns to zero. The register $C$ remains zero.

\begin{equation}
    D^{(3)}=\begin{array}{|c|c|c|}
        \hline
           \mathbb{1}_{x^1}(x) & 0 & 0\\
          \hline
          \mathbb{1}_{x^2}(x)  & 0 & 0\\
          \hline
          \mathbb{1}_{x^3}(x) & 0 & 0\\
          \hline
          \vdots &\vdots & \vdots \\
          \hline
           \mathbb{1}_{x^8}(x)  & 0 & 0\\
          \hline
    \end{array}
\end{equation}
We have used five layers to compute the indicators and restore $A$, followed by five layers to clear the input and the additional indicator copies. The circuit has depth at most $10$ and uses exactly $(n+1)m$ bits. Counting the elementary gates in these layers gives size $O(nm)$, and the total number of gate inputs is also $O(nm)$.
\end{proof}

We now implement an arbitrary permutation by composing an indicator circuit with the inverse of another. We choose the two orderings of the bitstrings so that the composition applies the required permutation and returns all additional bits to zero.

\begin{prop}
\label{Any permutation in constant depth}
    CDTF circuits are universal for reversible classical computation. More precisely, every permutation $\pi\in P_n$ can be realised by a circuit over Toffoli and fanout gates of depth at most $20$, width $(n+1)2^n$ and size $O(n2^n)$.
\end{prop}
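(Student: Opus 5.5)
The plan is to compose two instances of Lemma~\ref{permutation to indicator in constant depth}, one run forwards and one run backwards, with the two orderings of $\zo^n$ chosen so that the composite applies $\pi$. Let $m=2^n$. Fix any ordering $x^1,\ldots,x^m$ of $\zo^n$, and let $C_1$ be the circuit from the lemma for this ordering. Then $C_1$ sends $x0^{nm+m-n}$ to the indicator vector $\mathbb{1}_{x^1}(x)\ldots\mathbb{1}_{x^m}(x)$ in register $B$, with every other bit zero.

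Next define the second ordering $y^i:=\pi(x^i)$. Since $\pi$ is a bijection, this is again an ordering of $\zo^n$. Let $C_2$ be the lemma's circuit for this ordering, on the same $(n+1)m$ bits with the same register layout. On the input $\pi(x)0\ldots0$, the circuit $C_2$ produces the indicator vector with $1$ in position $i$ exactly when $\pi(x)=\pi(x^i)$, that is, exactly when $x=x^i$. This is the same vector that $C_1$ produces on input $x$, located in the same register $B$.

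The circuit is therefore $C:=C_2^{-1}\circ C_1$. Here $C_2^{-1}$ is obtained by reversing the order of the layers; every Toffoli and fanout gate is its own inverse, so each layer inverts itself and the reversed circuit uses the same gate set. Suppose the input is $x$ in the input locations and zero elsewhere. Then $C_1$ yields the indicator of $x$ with zero elsewhere. This string is exactly the image under $C_2$ of $\pi(x)$ with zero elsewhere, so $C_2^{-1}$ returns $\pi(x)$ in the input locations of $C_2$ with every other bit zero.

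It remains to check two bookkeeping points. First, the input location of $C_2$ must be chosen to coincide with the first row of $A$ used by $C_1$, so that $C$ realises $\pi$ at fixed locations $(b_1,\ldots,b_n)$. Second, both circuits must use the same bit as $B_i$ for each $i$, so that the intermediate strings match exactly. Both points hold because the lemma's layout does not depend on the ordering; only the NOT pattern and the $\revor$ control sets change.

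The resulting bounds follow directly. The depth is at most $10+10=20$, the width is $(n+1)2^n$, and the size is $O(n2^n)$. The only real obstacle is the matching of register layouts and the fact that $C_2$ is specified only on inputs that are zero outside the input row. That suffices here, because we apply $C_2^{-1}$ only to strings lying in the image of such inputs.
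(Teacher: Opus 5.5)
Your proposal is correct and is essentially the paper's own proof. The paper also forms $C_2^{-1}\circ C_1$, with $C_1$ built for a fixed (lexicographic) ordering and $C_2$ for the ordering $\pi(x^1),\ldots,\pi(x^m)$, places both on the same input and indicator locations, and reverses $C_2$ layer by layer using that Toffoli and fanout gates are self-inverse, giving depth $20$.
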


\begin{proof}
    Let $m=2^n$ and let $x^1,\ldots,x^m$ be the lexicographical ordering of $\zo^n$. By Lemma~\ref{permutation to indicator in constant depth}, we obtain a circuit $C_1$ of depth at most $10$ that implements
    \begin{equation}
        x\longmapsto\mathbb{1}_{x^1}(x)\ldots\mathbb{1}_{x^m}(x).
    \end{equation}
    We apply the same lemma to the ordering $\pi(x^1),\ldots,\pi(x^m)$ to obtain a circuit $C_2$ with the same bounds that implements
    \begin{equation}
        x\longmapsto\mathbb{1}_{\pi(x^1)}(x)\ldots\mathbb{1}_{\pi(x^m)}(x).
    \end{equation}
    We choose the same input and indicator locations for both circuits. On input $x^j$, $C_1$ produces the indicator vector with its 1 in position $j$ and clears every other bit. Since $C_2$ maps $\pi(x^j)$ to this same indicator vector, $C_2^{-1}\circ C_1$ sends $x^j$ to $\pi(x^j)$ and returns all additional bits to zero.

    Toffoli and fanout gates are their own inverses, so we reverse $C_2$ without changing its depth or gate set. The composition has depth at most $20$, uses the same $(n+1)2^n$ bits, and has $O(n2^n)$ elementary gates. Its total number of gate inputs is also $O(n2^n)$.
\end{proof}

We will also require the following construction of a reversible classical circuit that we will use as a subroutine in later constructions in Sections \ref{sec:prob-classical-ckts} and \ref{quantum circuits}.

\begin{lem}
\label{prefix-values-linear-gates}
Let $N\geq1$ and $T=N(N-1)/2$. There is a reversible classical circuit
$C$ over generalised Toffoli and fanout gates such that
\begin{equation}
    C(x,z,0^T)=(x,z\oplus s(x),0^T)
    \qquad\text{for every }x,z\in\zo^N.
\end{equation}
The circuit has depth at most five, width $2N+T$ and at most
$6(N-1)$ elementary gates. The same circuit implements this map
exactly on quantum states and returns all $T$ ancillary qubits to zero.
\end{lem}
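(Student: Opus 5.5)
The plan is to read $s(x)$ as the ``first one'' indicator used in the probability-preparation construction:
\[
    s(x)_i=x_i\wedge\neg x_1\wedge\cdots\wedge\neg x_{i-1}.
\]
Its $i$th entry is the AND of $x_i$ with the negations of all earlier bits. I reconstructed this meaning of $s$ from the depth-five budget and from $T=N(N-1)/2$; the statement itself does not define it. Each $s(x)_i$ is then a single Toffoli condition on $i$ bits. The obstacle is the usual one, as in Lemma~\ref{permutation to indicator in constant depth}: the bit $x_j$ is needed as a control in every test $i>j$, so the tests cannot share it within one layer. We therefore supply it by fanout into the ancilla register and count
\[
    \sum_{i=1}^{N}(i-1)=T
\]
copies in total. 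Concretely, the $T$ ancillas are arranged as a triangular array $E$ with entries $E_{ij}$ for $1\le j<i\le N$. Row $i$ of $E$ is reserved for the copies of $x_1,\ldots,x_{i-1}$ used in the $i$th test.

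The circuit has five layers.
\begin{enumerate}
    \item Apply $\mathsf{NOT}$ to $x_1,\ldots,x_{N-1}$, using $N-1$ gates.
    \item For each $j\le N-1$, apply one $\fanout$ gate with control $x_j$ (now holding $\neg x_j$) and targets $E_{ij}$ for all $i>j$. This is $N-1$ gates with disjoint supports, since column $j$ of $E$ belongs to a single control.
    \item For each $i$, apply a Toffoli gate with controls $x_i$ and $E_{i1},\ldots,E_{i,i-1}$ and target $z_i$. For $i=1$ this is a CNOT from $x_1$; since $x_1$ is negated at that point, this gate needs slightly different handling (either a NOT on $z_1$ or placing the negation after this gate for bit 1). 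The supports are disjoint because $x_i$ appears only as a control in test $i$, and row $i$ of $E$ is used only there.
    \item Reverse the fanout layer.
    \item Reverse the NOT layer.
\end{enumerate}
The $i$th Toffoli XORs $x_i\wedge\bigwedge_{j<i}\neg x_j=s(x)_i$ into $z_i$. Because the fanout and NOT layers are self-inverse and never act on $z$, the final two layers restore $x$ and return every $E_{ij}$ to zero.

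For the counting, the width is $2N+T$ by construction. The gates number $4(N-1)+N$, which is at most $6(N-1)$ once $N\ge2$. The degenerate case $N=1$ should be checked separately against the exact form of $s$ in the paper; it may require a single CNOT, or the bound may count it differently.

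The step I expect to need the most care is bit $x_1$, which is tested positively in position 1 but negatively for all $i>1$. The cleanest fix is to negate only the fanout targets rather than the source. Alternatively, we keep the source negation and correct the first Toffoli with one extra $\mathsf{NOT}$ on $z_1$ in the same layer. The latter stays within the depth and gate budget, since the Toffoli layer can absorb it as the $\toff_1=\mathsf{NOT}$ convention suggests.

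Finally, every gate is a permutation of computational basis states, and on every basis input $(x,z,0^T)$ the circuit outputs $(x,z\oplus s(x),0^T)$. By linearity it therefore implements the same map exactly on quantum superpositions, with the ancillas returned to $\ket{0}^{\otimes T}$ and no garbage entangled with the data.
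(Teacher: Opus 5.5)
\textbf{Your construction computes the wrong function.} The $s$ in this lemma is the vector of strict prefix ORs (the ``prefix values''),
\[
s_1(x)=0,\qquad s_j(x)=x_1\vee\cdots\vee x_{j-1}\quad(2\le j\le N),
\]
and not the first-one indicator. The statement itself rules out your reading. At $N=1$ the bound $6(N-1)=0$ permits no gates, so $s_1$ must vanish identically, whereas your $s_1(x)=x_1$ needs a CNOT.

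The way the lemma is used confirms this. In the preparation constructions, bit $i$ of the first register is erased (or un-rotated) whenever $s_i(x)=1$. This must act on every position after the first $1$, so $s(y^j)=0^j1^{N-j}$ for $y^j=0^{j-1}10^{N-j}$. With your $s$, it would remove the first $1$ itself.

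With the correct $s$, your skeleton goes through and coincides with the paper's proof:
\begin{itemize}
\item $x_j$ never appears positively, so the $j$th Toffoli ($j\ge2$) uses only the copies of $\neg x_1,\ldots,\neg x_{j-1}$ as controls.
\item There is no gate for $j=1$, and $x_N$ is never negated or copied.
\item In layer~4, in parallel with the reversed fanouts, NOT gates on $z_2,\ldots,z_N$ turn $\bigwedge_{i<j}\neg x_i$ into $\bigvee_{i<j}x_i$.
\end{itemize}
The gate count is then $(N-1)+(N-1)+(N-1)+2(N-1)+(N-1)=6(N-1)$.

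\textbf{Separately, layer~3 is not correct even for the function you chose.} Layer~1 negates every $x_i$ with $i\le N-1$, and these bits stay negated until layer~5. So for every $2\le i\le N-1$, not only $i=1$, the $i$th Toffoli reads $\neg x_i$ where you want $x_i$. It therefore computes $\neg x_1\wedge\cdots\wedge\neg x_i$. A NOT on the target repairs only $i=1$, where the gate is a CNOT on a single literal. It cannot turn an AND with one wrong literal into the right AND. That NOT also cannot share a layer with the CNOT on $z_1$, though it would fit in layer~4.

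Your alternative of negating the $T$ fanout copies instead of the sources keeps five layers. However, it adds $2T$ NOT gates, for a total of $N(N-1)+3N-2$, which exceeds $6(N-1)$ once $N\ge3$.
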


\begin{proof}
For $N=1$ we have $s(x)=0$, so the identity circuit suffices.
We assume $N\geq2$. Besides the input register $x$ and target register
$z$, we use one ancillary bit $a_{ij}$ for each pair
$1\leq i<j\leq N$. Each is initially zero. The bit $a_{ij}$ supplies
the negated value of input $i$ to the calculation of prefix $j$.
There are $T$ such bits.

We implement $C$ in five layers. Throughout the calculation,
$x_i$ and $z_j$ denote the original input values.

\textbf{Layer 1.}
We apply NOT to the first $N-1$ input bits. They now contain $1-x_i$.
The last input bit is unchanged because it does not occur in any
prefix.

\textbf{Layer 2.}
For each $i<N$ we apply one $\fanout_{N-i+1}$ gate with input bit $i$
as its control and $a_{i,i+1},\ldots,a_{iN}$ as its targets. This gives
\begin{equation*}
    a_{ij}=1-x_i
    \qquad(1\leq i<j\leq N).
\end{equation*}
The fanouts have distinct controls and disjoint sets of targets, so
they form one layer.

\textbf{Layer 3.}
For each $j=2,\ldots,N$ we apply $\toff_j$ with controls
$a_{1j},\ldots,a_{j-1,j}$ and target bit $j$ of $z$. The target becomes
\begin{equation*}
    z_j\oplus\bigwedge_{i=1}^{j-1}(1-x_i).
\end{equation*}
Each gate has its own controls and target. We can thus apply all
$N-1$ gates in one layer. The ancillary control bits remain unchanged.

\textbf{Layer 4.}
We repeat the fanouts from Layer 2 to return every $a_{ij}$ to zero.
In the same layer we apply NOT to target bits $2,\ldots,N$.
These operations can run together because the target register is
disjoint from the registers used by the fanouts. Each target now
contains
\begin{equation*}
    z_j\oplus\bigwedge_{i=1}^{j-1}(1-x_i)\oplus1
    =z_j\oplus s_j(x).
\end{equation*}
The first target bit is unchanged, as required by $s_1(x)=0$.

\textbf{Layer 5.}
We undo the NOT gates from Layer 1 to restore the input register.
All temporary copies are already zero, so the complete output is
$(x,z\oplus s(x),0^T)$.

The five layers use respectively $N-1$, $N-1$, $N-1$, $2(N-1)$
and $N-1$ elementary gates. Their sum is $6(N-1)$.
The two registers and the $T$ temporary copies give width $2N+T$.
Reversing the circuit gives $C^{-1}$ with the same depth and gate
count.

NOT, generalised Toffoli and fanout act on quantum states by
permuting the computational basis without introducing phases.
The calculation above gives
\begin{equation}
    C\bigl(\ket{x}\ket{z}\ket{0}^{\otimes T}\bigr)
    =\ket{x}\ket{z\oplus s(x)}\ket{0}^{\otimes T}
\end{equation}
on every basis input. By linearity this identity holds on arbitrary
superpositions, with all ancillary qubits returned to zero.
\end{proof}

The circuit uses $O(N)$ elementary gates and $O(N^2)$ bits or qubits.
The separate copies allow all prefix conjunctions to be computed in
one layer. Its total number of gate inputs is also $O(N^2)$.
For $N\geq2$ the two fanout layers contribute $2T+2(N-1)$ gate inputs,
the Toffoli layer contributes $T+N-1$, and the NOT gates contribute
$3(N-1)$. The total is $3T+6(N-1)$.

\section{Probabilistic classical circuits}
\label{sec:prob-classical-ckts}

We now extend the reversible classical model to allow probabilistic operations. Notably, our models are related to recent developments in classical computing with probabilistic bits~\cite{Kaiser2021pbits,Aadit21pbits,Chowdhury2025pbits}. We represent a distribution over bitstrings by a column vector and the action of a gate by a column stochastic matrix. We retain the conventions for gate locations and circuit layers from Section~\ref{circuit-conventions}. 

\begin{dfn}
    A classical probabilistic gate on $m$ bits is a $2^m\times 2^m$ column stochastic matrix $M$. Its entries are nonnegative real numbers, and each column sums to 1. Its action on a probability distribution, represented by a column vector $p$, is $p\mapsto Mp$. Parallel composition of gates $M_1,M_2$ is given by $M_1\otimes M_2$, and applying $M_1$ followed by $M_2$ gives $M_2M_1$.
\end{dfn}

\begin{eg}
    The single-bit erasure gate sets the bit to 0 for either input. Its matrix is
    \begin{equation}
        \left(\begin{array}{cc}
           1 & 1\\
           0 & 0
        \end{array}\right).
    \end{equation}
    The controlled erasure gate leaves the first bit unchanged and sets the second bit to 0 if the first bit is 1. In the basis order $00,01,10,11$, its matrix is
    \begin{equation}
        \left(\begin{array}{cccc}
            1 & 0 & 0 & 0\\
            0 & 1 & 0 & 0\\
            0 & 0 & 1 & 1\\
            0 & 0 & 0 & 0
        \end{array}\right).
    \end{equation}
\end{eg}

We build circuits from arbitrary single-bit stochastic gates, $\toff$ gates and $\fanout$ gates. We will use these elementary gates to prepare any probability distribution while returning all additional bits to zero.

\begin{dfn}
    Let $M$ be a classical probabilistic gate on $m$ bits, where $m\leq n$, and let $b_1,\ldots,b_m$ be distinct locations in $[n]$. A circuit $C$ on $n$ bits realises $M$ at these locations if it acts as $M$ on this register and returns every other bit to 0 whenever those other bits are initially 0. It suffices to check this on each definite input, since the action on probability distributions follows by linearity.
\end{dfn}

\begin{prop}
    The controlled erasure gate can be implemented in depth 3 with one ancillary bit initially set to 0.
\end{prop}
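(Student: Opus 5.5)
We implement controlled erasure, which maps $(c,t)\mapsto(c,t\wedge\neg c)$, using one ancillary bit $a$ that starts at $0$. The idea is to move the target bit into the ancilla conditioned on the control, and then erase the ancilla unconditionally with a single-bit erasure gate. An unconditional erasure is a legitimate single-bit stochastic gate in our gate set, so this avoids any need for a genuinely controlled stochastic operation.

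\textbf{Layer 1.} We apply $\toff_3$ with controls $c,t$ and target $a$. The ancilla becomes $a=c\wedge t$.

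\textbf{Layer 2.} We apply $\toff_3$ with controls $c,a$ and target $t$. The target becomes
\begin{equation*}
    t\oplus(c\wedge c\wedge t)=t\oplus(c\wedge t)=t\wedge\neg c.
\end{equation*}
This is exactly the required output on the second bit.

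\textbf{Layer 3.} We apply the single-bit erasure gate to $a$, returning it to $0$.

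To verify the construction, we check the four definite inputs with $a=0$. If $c=0$, nothing changes, since both Toffoli gates are inactive and erasing $a=0$ does nothing. If $c=1,t=0$, the ancilla stays $0$ and the output is $(1,0,0)$. If $c=1,t=1$, Layer 1 sets $a=1$, Layer 2 sets $t=0$, and Layer 3 resets $a=0$, giving $(1,0,0)$. This agrees with the controlled erasure matrix on every definite input. By the definition of realising a probabilistic gate, this suffices, since the action on distributions follows by linearity.

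Each layer consists of a single gate, so the layering constraints hold trivially, and the depth is $3$.

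The only subtle point is that erasure must discard information. Reversible gates alone cannot implement controlled erasure, because it is not a permutation. That is why the irreversible step is confined to the unconditional erasure of the ancilla in Layer 3, and why we first arrange that the ancilla holds exactly the information to be discarded.
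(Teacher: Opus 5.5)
Your proof is correct and follows the paper's construction: compute $c\wedge t$ into the ancilla with a Toffoli, use the ancilla to clear the target, then apply the unconditional single-bit erasure to the ancilla. The only difference is that the paper's second layer is a plain CNOT ($\toff_2$) controlled by the ancilla. Since the ancilla already equals $c\wedge t$, your extra control on $c$ is redundant but harmless.
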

\begin{proof}
    Let $x_1,x_2$ be the input bits and $x_3=0$ the ancillary bit. We first compute $x_1x_2$ into $x_3$ using $\toff_3$. We then apply $\toff_2$ with control $x_3$ and target $x_2$, and erase $x_3$ with the single-bit erasure gate. These three layers give
    \begin{equation}
        \bigl(x_1,x_2\oplus(x_1x_2),0\bigr)
        =\bigl(x_1,x_2(1-x_1),0\bigr).
    \end{equation}
    We obtain the required action on each definite input, with the ancillary bit returned to zero. Linearity gives the action on an arbitrary probability distribution.
\end{proof}

\begin{rmk}
    To make the connection with quantum circuits explicit, we write $\ket{x}$ for the basis vector corresponding to $x\in\zo^n$. A probability vector has the form $\ket{p}=\sum_x p_x\ket{x}$, where $p_x\geq0$ and $\sum_xp_x=1$. Its coefficients are probabilities, rather than quantum amplitudes.
\end{rmk}

\begin{rmk}
    Permutation matrices are stochastic matrices, so this model contains the reversible classical circuits of the previous section.
\end{rmk}

\begin{dfn}
    Let $m\leq n$. A probabilistic classical gate $M$ on $n$ bits prepares a probability vector $\ket{p}$ on $m$ bits if
    \begin{equation}
        M\ket{0^n}=\ket{p}\ket{0^{n-m}}.
    \end{equation}
    A probabilistic classical circuit prepares $\ket{p}$ if its matrix prepares $\ket{p}$.
\end{dfn}

To prepare a distribution, we first choose bits independently and then erase every 1 after the first. By choosing the individual probabilities appropriately, we make the position of the first 1 follow the desired distribution, with the all-zero string accounting for one further outcome. We then convert this representation to an indicator vector and recover the corresponding bitstring using the inverse circuit from Lemma~\ref{permutation to indicator in constant depth}.

\begin{thm}
\label{Probability distribution vector in constant depth}
Any probability vector on $n\geq1$ bits can be prepared by a classical
probabilistic circuit of depth at most $29$, using $O(n2^n)$ elementary
gates and $O(4^n)$ bits. The gate set consists of arbitrary single-bit
stochastic gates, generalised Toffoli gates and fanout gates.
Every bit outside the output register is returned to zero.
\end{thm}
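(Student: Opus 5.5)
The plan follows the outline given just before the statement. Write $m=2^n$ and order $\zo^n$ as $x^1,\ldots,x^m$. The target distribution is $p_1,\ldots,p_m$, with $p_i$ the probability of $x^i$.

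\textbf{Step 1: geometric sampling.} Use $N=m-1$ fresh bits $y_1,\ldots,y_N$. Apply, in one layer, independent single-bit stochastic gates that set $y_i=1$ with probability $q_i$. Choose
\[
q_i=p_i\Big/\Bigl(1-\sum_{k<i}p_k\Bigr),
\]
and set $q_i=0$ whenever the denominator vanishes. Then the event that the first $1$ sits at position $i$ has probability $q_i\prod_{k<i}(1-q_k)=p_i$, by the usual telescoping product. The all-zero outcome has probability $p_m$.

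\textbf{Step 2: keep only the first $1$.} We want $y_j\mapsto y_j\wedge\neg s_j(y)$, where $s_j(y)=\bigvee_{i<j}y_i$ is the prefix OR. This is the $s$ of Lemma~\ref{prefix-values-linear-gates}. Apply that lemma to write $s(y)$ into a fresh register $z$, in depth $5$. In one layer, apply a controlled erasure to each pair $(z_j,y_j)$, using one clean ancilla per $j$; each controlled erasure has depth $3$. This sets $y_j=0$ whenever $z_j=1$.

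The prefix register must then be cleaned, and this is the delicate point. The map is no longer reversible, so we cannot simply reapply the inverse of Lemma~\ref{prefix-values-linear-gates}. Two options remain. The first is to erase $z$ directly with single-bit erasure gates in one layer; this is allowed in the probabilistic model because erasure is an elementary stochastic gate. The second is to recompute $s$ from the new $y$ and XOR it out. After Step 2 at most one $y_i$ equals $1$, so the new prefix ORs coincide with the old ones. Either option works. I would take the erasure layer, since it is cheaper and clearly correct.

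\textbf{Step 3: build the indicator and decode.} After Step 2, $y$ is an indicator vector over the first $m-1$ outcomes. Append one more bit $y_m=\neg\bigvee_{i<m}y_i$, computed with a $\revor$ gate and a NOT. Computing the OR of all $N$ bits needs only the single $\revor$, which takes three layers by Example~\ref{example-revor}. The string $(y_1,\ldots,y_m)$ is now exactly the indicator vector of the sampled outcome. Apply the inverse of the Lemma~\ref{permutation to indicator in constant depth} circuit for the ordering $x^1,\ldots,x^m$, placing its indicator register on $y$. Its input is a genuine indicator vector and all its other bits are zero, so by that lemma it outputs $x^i$ in the output register and clears everything else, in depth $10$.

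\textbf{Accounting.} The depth is $1+5+3+1+3+1+10$, which after merging layers stays within $29$. The dominant costs are Lemma~\ref{prefix-values-linear-gates}, with $O(m^2)=O(4^n)$ ancilla bits and $O(m)$ gates, and Lemma~\ref{permutation to indicator in constant depth}, with $O(nm)$ gates and $O(nm)$ bits. This gives $O(n2^n)$ gates and $O(4^n)$ bits.

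I expect the main obstacle to be showing that every ancilla is returned to zero on every branch while the distribution is preserved. In particular, one must check that the decoding circuit receives a clean input, with $y$ holding a valid indicator and $z$ and the controlled-erasure ancillas all zero, so that Lemma~\ref{permutation to indicator in constant depth} applies branch by branch; linearity then gives the result.
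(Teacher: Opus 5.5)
Your proposal is correct and follows the paper's construction step for step: independent sampling, prefix ORs via Lemma~\ref{prefix-values-linear-gates}, controlled erasure, clearing the prefixes, a $\revor$ for the all-zero outcome, and inverse indicator decoding. The differences are minor. You handle vanishing denominators by setting $q_i=0$ instead of XOR-relabelling to force $p_0>0$, and you clear the prefix register with one layer of erasure gates instead of applying $C^{-1}$; both are legitimate in the probabilistic model and give depth $24$.

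Your remark that $C^{-1}$ cannot be reapplied is mistaken, though. The paper does exactly that, and it is valid for the reason you give in your second option: $C(y^j,0^N,0^T)=(y^j,s(y^j),0^T)$, and erasing the bits after the first $1$ leaves the prefix values unchanged. This is also the version that carries over to the quantum preparation in Theorem~\ref{Any state in constant depth}, where erasure is unavailable.
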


\begin{proof}
Let $N=2^n-1$ and write the target probability vector as
\begin{equation*}
    \ket{p}=\sum_{j=0}^N p_j\ket{\mathrm{bin}(j)},
\end{equation*}
where $\mathrm{bin}(j)$ has $n$ bits, $p_j\geq0$ and
$\sum_{j=0}^N p_j=1$.

We may assume $p_0>0$. To obtain this condition, choose a bitstring
of positive probability and relabel every bitstring by its XOR with
that string. After preparing the relabelled distribution, we undo
the relabelling in one layer of NOT gates.

Set
\begin{equation*}
    q_i=\frac{p_i}{p_0+p_i+\cdots+p_N}
    \qquad(1\leq i\leq N).
\end{equation*}
The condition $p_0>0$ gives $0\leq q_i<1$. Since
\begin{equation*}
    1-q_i
    =\frac{p_0+p_{i+1}+\cdots+p_N}
           {p_0+p_i+\cdots+p_N},
\end{equation*}
successive factors cancel to give
\begin{equation*}
    p_j=q_j\prod_{i<j}(1-q_i)
    \quad(1\leq j\leq N),
    \qquad
    p_0=\prod_{i=1}^N(1-q_i).
\end{equation*}
An empty sum is zero and an empty product is one.
These identities give the probability that the first 1 occurs at
position $j$ and the probability that all bits are zero.

\textbf{Step 1.}
We prepare $N$ independent bits, with bit $i$ equal to 1 with
probability $q_i$. This uses $N$ single-bit stochastic gates in one
layer. The first 1 is in position $j$ with probability $p_j$, and all
bits are zero with probability $p_0$.

\textbf{Step 2.}
We introduce a second register of $N$ zero bits and apply the circuit
$C$ from Lemma~\ref{prefix-values-linear-gates}.
For each input string $x$, it writes $s(x)$ into the second register
and returns its $T=N(N-1)/2$ working bits to zero.
This takes at most five layers and $6(N-1)$ gates.

\textbf{Step 3.}
For each $i$ we set bit $i$ of the first register to zero whenever
bit $i$ of the second register is 1. We implement this operation
using one additional zero bit. A Toffoli first writes the AND of
the control and target into that bit. A CNOT from the additional
bit to the target then sets the target to zero when the control
is 1. Finally, a single-bit erasure gate returns the additional
bit to zero.

The control, target and ancillary bits are separate for each
position. All $N$ operations run in three layers using $3N$
elementary gates.

Write
\begin{equation*}
    y^0=0^N,\qquad
    y^j=0^{j-1}10^{N-j}
    \quad(1\leq j\leq N).
\end{equation*}
Every string whose first 1 is at position $j$ becomes $y^j$.
Its prefix values remain zero through position $j$ and one
afterwards. The joint probability vector is now
\begin{equation*}
    \sum_{j=0}^N p_j\ket{y^j}\ket{s(y^j)}.
\end{equation*}
All ancillary bits used for erasure have returned to zero.

\textbf{Step 4.}
We apply $C^{-1}$ to clear the second register. The identity
\begin{equation*}
    C(y^j,0^N,0^T)=(y^j,s(y^j),0^T)
\end{equation*}
shows that the inverse returns the second register and all working
bits to zero for every $j$. This takes at most five layers and
$6(N-1)$ gates.

We next prepend a zero bit to distinguish the all-zero outcome.
We apply $\revor_{N+1}$ with the new bit as its target and the
$N$ bits of the first register as its controls, then apply NOT
to the target. The new bit is 1 precisely when the first register
is all zero.

A reversible OR can be implemented in three layers by negating
its controls, applying a generalised Toffoli, and then negating
the controls and target. Including the final NOT gives at most
four layers and $O(N)$ gates. The resulting probability vector is
\begin{equation*}
    \sum_{j=0}^N p_j\ket{0^j10^{N-j}}.
\end{equation*}

We now apply the inverse indicator circuit from
Lemma~\ref{permutation to indicator in constant depth} in
lexicographical order. We place the $N+1$ indicator bits at its
indicator output locations and initialise all other bits to zero.
The inverse maps the indicator in position $j+1$ to
$\mathrm{bin}(j)$ and clears every other bit. This takes at most
ten layers and $O(n2^n)$ gates. We finish by undoing any initial
XOR relabelling.

The total depth is at most
\begin{equation*}
    1+5+3+5+4+10+1=29.
\end{equation*}
Computing and clearing the prefixes uses at most $12(N-1)$ gates.
The initial preparation, erasures, additional indicator and final
NOT gates use $O(N+n)$ gates. Decoding uses $O(n2^n)$ gates, giving
$O(n2^n)$ in total.

The prefix circuit uses $O(N^2)$ bits and all remaining stages
together require $O(N+n2^n)$ bits. Even with separate working
registers for these stages, the width is $O(4^n)$.
Every bit outside the decoded output returns to zero.
\end{proof}

\section{Quantum circuits}
\label{quantum circuits}

We begin by explaining how to use the reversible constructions of Section~2 in quantum circuits. Rosenthal \cite{rosenthal2020boundsqac0complexityapproximating} proved that $\fanout$ can be approximated by exponential size $\qaczero$ circuits. The result of Grier, Morris and Wu \cite{grier2026mathsfqac0containsmathsftc0with}, together with the parity construction and its conversion to fanout in \cite{rosenthal2020boundsqac0complexityapproximating}, gives an exact implementation with clean ancillary qubits. Its depth is bounded by a constant, while its size and number of ancillary qubits are exponential in the number of qubits on which the fanout acts. We may thus include $\fanout$ without changing what $\qaczero$ can implement when size is unrestricted. In particular, by applying Proposition \ref{Any permutation in constant depth}, we obtain an exact $\qaczero$ implementation of any permutation of computational basis states.

We give the resource bounds below for $\mathsf{QAC^0_f}$, where fanout is included in the gate set. If a fanout acts on exponentially many qubits, replacing it by the $\qaczero$ implementation above may increase our size bound to a doubly exponential function of the original input length. This is an upper bound obtained by replacing fanout in our constructions.

We say that a circuit $C$ prepares an $n$-qubit pure state $\ket{\psi}$ if, for some number $c$ of ancillary qubits,
\begin{equation}
    C\ket{0}^{\otimes(n+c)}=\ket{\psi}\ket{0}^{\otimes c}.
\end{equation}
As in Section~\ref{circuit-conventions}, we require every ancillary qubit to return to zero. We can then run the preparation circuit backwards to map $\ket{\psi}\ket{0}^{\otimes c}$ to the all-zero state.

We first implement an arbitrary diagonal unitary by computing an indicator for each computational basis string. On a basis input, exactly one indicator is 1. We apply the corresponding phase gate to every indicator in parallel, so only the phase assigned to the input appears. We then reverse the indicator calculation to clear all additional qubits. By linearity, this gives the required action on every superposition.

\begin{prop}
\label{diagonal phase unitaries in constant depth}
    Any diagonal phase gate on $n$ qubits can be implemented by a $\mathsf{QAC^0_f}$ circuit of depth seven, using $O(n2^n)$ qubits and gates.
\end{prop}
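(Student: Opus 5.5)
Let $D=\sum_{x\in\zo^n}e^{i\theta_x}\ket{x}\bra{x}$ be the target diagonal unitary and set $m=2^n$. The plan is to use only the first half of the circuit in Lemma~\ref{permutation to indicator in constant depth}: the part that computes the indicators but does not clear the input. That part has five layers. One layer of fanouts copies $x$ into all $m$ rows of the array $A$. One layer of NOT gates matches each row against its string $x^i$. One layer of $\toff_{n+1}$ gates writes $\mathbb{1}_{x^i}(x)$ into $B_i$. Two further layers undo the NOTs and the fanouts. This stage uses $(n+1)m$ qubits and $O(nm)$ gates. All of these gates permute computational basis states without phases. So on a basis input $\ket{x}\ket{0}$ this stage produces $\ket{x}\ket{0\cdots0}\ket{e_x}$, where $e_x$ is the indicator vector with a single 1 in the position labelled by $x$.

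Next, apply in one layer the single-qubit phase gate $\mathrm{diag}(1,e^{i\theta_{x^i}})$ to each indicator qubit $B_i$. These gates act on disjoint qubits, so they form one layer. On the basis input above, exactly one indicator qubit is $\ket{1}$, so the total phase acquired is $e^{i\theta_x}$. The state does not change otherwise.

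Then run the indicator-computing stage backwards. It is self-inverse layer by layer (Toffoli, NOT and fanout are involutions), so this takes another five layers. It returns $B$ and the rest of $A$ to zero and leaves $x$ in the first row. The net action on basis states is $\ket{x}\ket{0}\mapsto e^{i\theta_x}\ket{x}\ket{0}$. Because the phase is a scalar on each branch, linearity gives $D\ket{\psi}\ket{0}$ for every input $\ket{\psi}$, with all ancillas clean.

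Counting layers naively gives $5+1+5=11$, but the claimed depth is seven. I would compress by noting that the undo-NOT and undo-fanout layers that follow the Toffolis are exactly cancelled by the redo-fanout and redo-NOT layers at the start of the reversal. The middle therefore becomes Toffoli, phase, Toffoli, giving fanout, NOT, Toffoli, phase, Toffoli, NOT, fanout, which is seven layers. An alternative, keeping $A$ intact, is to replace the Toffoli–phase–Toffoli block by a multi-controlled phase on each row. That is not in the gate set, so I keep the three-layer version.

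Each of the seven layers uses $O(nm)$ gates and the width is $(n+1)2^n$, giving $O(n2^n)$ qubits and gates. The main point needing care is this cancellation argument, together with the check that the phase layer commutes past nothing it shouldn't: it acts only on $B$, and is sandwiched between the two identical Toffoli layers.
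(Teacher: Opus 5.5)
Your proposal is correct and follows the paper's approach. After your cancellation step (valid because the phase layer acts only on $B$ while the cancelled NOT and fanout layers act only on $A$), you obtain exactly the paper's seven-layer circuit: column fanouts, matching NOTs, row Toffolis into $B$, the phase gates $\mathrm{diag}(1,e^{i\theta_{x^i}})$ on each $B_i$, and then the Toffoli, NOT and fanout layers in reverse. The paper simply writes this circuit down directly rather than deriving it from the eleven-layer version, with the same width $(n+1)2^n$ and $O(n2^n)$ gates.
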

\begin{proof}
Let $N=2^n$ and let $x^1,\ldots,x^N$ list the elements of $\zo^n$ in lexicographical order. We write
\begin{equation}
    U=\sum_{i=1}^N\alpha_i\ket{x^i}\bra{x^i},
    \qquad |\alpha_i|=1.
\end{equation}
We arrange the qubits in an $N\times n$ array $A$ and an $N$-qubit column $B$. The first row of $A$ contains the input and all other qubits start in zero. The total number of qubits is $(n+1)N$. We describe the circuit on a computational basis input $\ket{x}$ with $x=x^s$. Figure~\ref{diagonal construction stages} illustrates the construction for $n=3$ and $x=101$.

In the first layer we apply a $\fanout_N$ gate to each column of $A$, using its first qubit as the control. These gates act on disjoint columns and place $x$ in every row. In the second layer we apply NOT to each entry $A_{ij}$ for which $(x^i)_j=0$. Row $i$ then consists entirely of ones exactly when $x=x^i$.

In the third layer we apply one $\toff_{n+1}$ gate to each row of $A$, with that row as its controls and $B_i$ as its target. These gates act on disjoint qubits. They prepare the indicator register
\begin{equation}
    \ket{\mathbb{1}_{x^1}(x)}\cdots\ket{\mathbb{1}_{x^N}(x)}.
\end{equation}
Its only nonzero bit is $B_s$.

In the fourth layer we apply
\begin{equation}
    Z(\alpha_i)=
    \begin{pmatrix}
        1&0\\
        0&\alpha_i
    \end{pmatrix}
\end{equation}
to each $B_i$. The resulting phase is
\begin{equation}
    \prod_{i=1}^N\alpha_i^{\mathbb{1}_{x^i}(x)}=\alpha_s.
\end{equation}

We then reverse the Toffoli, NOT and fanout layers. These three layers restore every ancillary qubit to zero and give
\begin{equation}
    \ket{x^s}\ket{0}^{\otimes((n+1)N-n)}
    \longmapsto
    \alpha_s\ket{x^s}\ket{0}^{\otimes((n+1)N-n)}.
\end{equation}
Linearity gives the required action of $U$ on an arbitrary input. The circuit has seven layers, $(n+1)N$ qubits and $O(nN)$ gates.
\end{proof}

\begin{figure}[!ht]
    \centering
\begingroup
\definecolor{diagblue}{RGB}{25,78,110}
\begin{tikzpicture}[
    font=\normalfont\footnotesize,
    every node/.style={align=center},
    stagearrow/.style={->,>=latex,line width=.75pt},
    operation/.style={draw=diagblue,line width=.55pt},
    gridline/.style={draw=black!60,line width=.35pt}
]
\newcommand{\diagonalpanel}[7]{%
    \begin{scope}[shift={#1}]
        \node[font=\normalfont\small\bfseries] at (2.05,.70) {#2};
        \node at (.40,.19) {$x^i$};
        \node at (1.67,.19) {$A$};
        \node at (3.09,.19) {$B$};
        \ifnum#5>0
            \pgfmathsetmacro{\rowtop}{-(#5-1)*.30}
            \fill[diagblue!12] (.95,\rowtop) rectangle (2.39,\rowtop-.30);
            \fill[diagblue!12] (2.85,\rowtop) rectangle (3.33,\rowtop-.30);
        \fi
        \foreach \j in {0,1,2,3}{
            \draw[gridline] (.95+\j*.48,0) -- (.95+\j*.48,-2.40);
        }
        \foreach \r in {0,...,8}{
            \draw[gridline] (.95,-\r*.30) -- (2.39,-\r*.30);
            \draw[gridline] (2.85,-\r*.30) -- (3.33,-\r*.30);
        }
        \draw[gridline] (2.85,0) -- (2.85,-2.40);
        \draw[gridline] (3.33,0) -- (3.33,-2.40);
        \foreach \candidate [count=\r from 1] in {000,001,010,011,100,101,110,111}{
            \node[font=\normalfont\footnotesize] at (.40,-\r*.30+.15) {$\candidate$};
        }
        \foreach \value [count=\k from 0] in {#3}{
            \pgfmathtruncatemacro{\r}{floor(\k/3)}
            \pgfmathtruncatemacro{\c}{mod(\k,3)}
            \node at (1.19+\c*.48,-\r*.30-.15) {$\value$};
        }
        \foreach \value [count=\r from 1] in {#4}{
            \node at (3.09,-\r*.30+.15) {$\value$};
        }
        \node[anchor=north,text width=4.6cm,font=\normalfont\fontsize{8}{10}\selectfont] at (2.05,-2.60) {\hyphenpenalty=10000 #6};
        #7
    \end{scope}
}
\diagonalpanel{(0,0)}{(a) Initial registers}
    {1,0,1,0,0,0,0,0,0,0,0,0,0,0,0,0,0,0,0,0,0,0,0,0}
    {0,0,0,0,0,0,0,0}{0}
    {$x=101=x^6$ is in the first row.\\All other qubits are zero.}{}
\diagonalpanel{(5.05,0)}{(b) Fanout down columns}
    {1,0,1,1,0,1,1,0,1,1,0,1,1,0,1,1,0,1,1,0,1,1,0,1}
    {0,0,0,0,0,0,0,0}{0}
    {Copy the top qubit down each column.\\One layer.}
    {\foreach \c in {0,1,2}{
        \draw[operation,->,>=latex] (1.36+\c*.48,-.33) -- (1.36+\c*.48,-2.30);
    }}
\diagonalpanel{(10.10,0)}{(c) Apply NOT gates}
    {0,1,0,0,1,1,0,0,0,0,0,1,1,1,0,1,1,1,1,0,0,1,0,1}
    {0,0,0,0,0,0,0,0}{6}
    {Flip $A_{ij}$ when $(x^i)_j=0$.\\Only row $6$ is all ones. One layer.}{}
\diagonalpanel{(10.10,-5.05)}{(d) Write the indicators}
    {0,1,0,0,1,1,0,0,0,0,0,1,1,1,0,1,1,1,1,0,0,1,0,1}
    {0,0,0,0,0,1,0,0}{6}
    {Row $i$ controls a Toffoli gate\\with target $B_i$. One layer.}
    {\foreach \r in {1,...,8}{
        \draw[operation,->,>=latex] (2.42,-\r*.30+.15) -- (2.80,-\r*.30+.15);
    }}
\diagonalpanel{(5.05,-5.05)}{(e) Apply phases}
    {0,1,0,0,1,1,0,0,0,0,0,1,1,1,0,1,1,1,1,0,0,1,0,1}
    {0,0,0,0,0,1,0,0}{6}
    {Apply $Z(\alpha_i)$ to each $B_i$.\\The state gains phase $\alpha_6$. One layer.}
    {\node[text=diagblue] at (3.84,-1.65) {$\alpha_6$};
     \draw[operation] (3.36,-1.65) -- (3.53,-1.65);}
\diagonalpanel{(0,-5.05)}{(f) Restore the registers}
    {1,0,1,0,0,0,0,0,0,0,0,0,0,0,0,0,0,0,0,0,0,0,0,0}
    {0,0,0,0,0,0,0,0}{0}
    {Undo Toffoli, NOT and fanout.\\Three layers leave $\alpha_6\ket{101}\ket{0}^{\otimes29}$.}{}
\draw[stagearrow] (4.37,-1.20) -- (4.83,-1.20);
\draw[stagearrow] (9.42,-1.20) -- (9.88,-1.20);
\draw[stagearrow] (12.15,-3.50) -- (12.15,-3.96);
\draw[stagearrow] (9.88,-6.25) -- (9.42,-6.25);
\draw[stagearrow] (4.83,-6.25) -- (4.37,-6.25);
\end{tikzpicture}
\endgroup
    \caption{The diagonal construction for $n=3$ and $x=101=x^6$. Each cell gives the value of one qubit. The strings $x^i$ label the rows and are not additional registers. Follow the panels clockwise from (a) to (f). Column fanout copies the input, and NOT gates make row $6$ the unique row consisting entirely of ones. The row Toffoli gates compute the indicator for this row, and the phase gates apply $\alpha_6$. Panels (b)--(e) each add one circuit layer. Panel (f) reverses the first three layers, giving seven layers in total. The grid describes the registers, rather than their physical arrangement. The action on arbitrary inputs follows by linearity.}
    \label{diagonal construction stages}
\end{figure}

We next adapt the classical probability construction from Section~3 to prepare arbitrary pure states. Exact preparation in $\mathsf{QAC^0_f}$ with clean ancillary qubits is already known \cite{rosenthal2026querydepth}, using $2^n$ qubits up to a factor polynomial in $n$. Gretta, Gupta and Joshi~\cite[Theorem~1.1]{gretta2026logarithmic} give a constant-depth construction using only single-qubit and generalised Toffoli gates. It uses $2^{O(n)}$ qubits and gates and returns all ancillary qubits to zero. Our construction uses $O(4^n)$ qubits and gates with fanout included in the gate set. We give it to explain the connection with classical probability preparation and the role of reversibility in the quantum construction.

\begin{thm}
\label{Any state in constant depth}
Let $\ket{\psi}$ be any pure state on $n\geq1$ qubits.
Then a $\mathsf{QAC^0_f}$ circuit of depth at most $37$ prepares
$\ket{\psi}$ using $O(n2^n)$ elementary gates and $O(4^n)$ qubits.
Every ancillary qubit returns to $\ket{0}$.
\end{thm}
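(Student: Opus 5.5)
The plan is to make the proof of Theorem~\ref{Probability distribution vector in constant depth} coherent, with the target probabilities replaced by squared moduli of amplitudes and the phases restored at the end by Proposition~\ref{diagonal phase unitaries in constant depth}. Let $N=2^n-1$ and write $\ket{\psi}=\sum_{j=0}^N a_j\ket{\mathrm{bin}(j)}$ with $p_j=|a_j|^2$. As in the classical proof, we first XOR-relabel so that $p_0>0$, and undo this at the end with one layer of $X$ gates. We then define $q_i=p_i/(p_0+p_i+\cdots+p_N)\in[0,1)$. The same telescoping identities give $p_j=q_j\prod_{i<j}(1-q_i)$ and $p_0=\prod_i(1-q_i)$.

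In Step~1 we apply $R_y$ rotations in one layer, preparing $\bigotimes_{i=1}^N(\sqrt{1-q_i}\ket0+\sqrt{q_i}\ket1)$. In Step~2 we apply the circuit $C$ of Lemma~\ref{prefix-values-linear-gates}, which works on quantum states and writes the prefix register $s(x)$. Step~3 is the key change. The classical controlled erasure is not unitary, so we instead apply, to each qubit $i$ of the first register, a rotation controlled on prefix bit $s_i(x)$: namely $R_y$ inverse to the Step~1 rotation, taking $\sqrt{1-q_i}\ket0+\sqrt{q_i}\ket1$ back to $\ket0$.

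The point is this. Condition on a first $1$ at position $j$, or on no $1$ at all. Then the prefix register equals $s(y^j)$, which is $0$ up to position $j$ and $1$ afterwards. Moreover, the bits after position $j$ are still in their untouched product state, because the Step~1 state factorises as (prefix part)$\otimes$(suffix part), and $s$ depends only on the prefix. To make this precise, we write the Step~1 state as
\begin{equation*}
\sum_{j=0}^N \sqrt{p_j}\,\ket{y^j_{\le j}}\otimes\bigotimes_{i>j}\bigl(\sqrt{1-q_i}\ket0+\sqrt{q_i}\ket1\bigr),
\end{equation*}
where $y^j_{\le j}=0^{j-1}1$. Here the $j$-th summand collects all strings whose first $1$ is at $j$, and for $j=0$ the summand is the all-zero string. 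On the $j$-th branch the prefix register is the fixed string $s(y^j)$, so the controlled inverse rotations act as the exact inverse on every qubit $i>j$ and trivially elsewhere. This yields $\sum_j\sqrt{p_j}\ket{y^j}\ket{s(y^j)}$.

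A controlled single-qubit rotation is not an elementary gate. We decompose it as $A\,X\,B\,X\,C'$ with $ABC'=I$ and CNOTs ($\toff_2$) controlled on $s_i$. For real $R_y$ two CNOTs and two half-angle rotations suffice. All $N$ of these act on disjoint pairs, so this step costs a constant number of layers, about five.

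We then apply $C^{-1}$ to clear the prefix register, exactly as in Step~4 classically. Next we add the flag qubit via $\revor_{N+1}$ and a NOT, giving $\sum_j\sqrt{p_j}\ket{0^j10^{N-j}}$. The inverse indicator circuit of Lemma~\ref{permutation to indicator in constant depth} then maps this to $\sum_j\sqrt{p_j}\ket{\mathrm{bin}(j)}$ and clears all other qubits, since every gate in it is a basis permutation. Finally, we apply the diagonal unitary of Proposition~\ref{diagonal phase unitaries in constant depth} with phases $a_j/|a_j|$ (choosing $1$ when $a_j=0$), taking care to assign these phases to the relabelled strings, and then undo the XOR.

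The depth count is $1+5+(\text{controlled rotations},\ \le5)+5+4+10+7+1$, which I expect to fit within $37$. The gate count is dominated by $O(n2^n)$ from the decoder and the diagonal gate. The width is dominated by the $O(N^2)=O(4^n)$ working qubits of Lemma~\ref{prefix-values-linear-gates}.

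The main obstacle is Step~3: justifying that the controlled inverse rotations disentangle the suffix qubits coherently, rather than by erasure, and that no residual entanglement with the prefix register remains. This relies on the product decomposition above and on the prefix register being a deterministic function of the branch index $j$. That is what allows $C^{-1}$ to clear it cleanly afterwards.
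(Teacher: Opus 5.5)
Your proposal is correct and follows the paper's proof essentially step for step: $R_y$ rotations realising the first-$1$ construction coherently, prefix computation via Lemma~\ref{prefix-values-linear-gates}, inverse rotations controlled by the prefix bits, uncomputation of the prefixes, the all-zero indicator via $\revor_{N+1}$, inverse indicator decoding, and finally the phases and the undoing of the XOR relabelling. The one slip is in the depth tally: the two-half-angle-rotation, two-CNOT decomposition you mention gives four layers for Step~3, as in the paper, and you need exactly that, since five layers would make the total $38$ rather than $37$.
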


\begin{proof}
We first consider
\begin{equation*}
    \ket{\psi}
    =\sum_{j=0}^N\sqrt{p_j}\ket{\mathrm{bin}(j)},
    \qquad N=2^n-1,
\end{equation*}
with $p_0>0$, $p_j\geq0$ and $\sum_{j=0}^N p_j=1$.
A general target reduces to this case by taking the absolute values
of its coefficients and relabelling the basis strings by a fixed
XOR so that the coefficient at $0^n$ is nonzero. After preparation
we undo the relabelling and restore the phases.

Define
\begin{equation*}
    q_i=\frac{p_i}{p_0+p_i+\cdots+p_N}
    \qquad(1\leq i\leq N).
\end{equation*}
The probability identities in the classical proof determine the
amplitudes in the construction below.

\textbf{Step 1.}
We apply the rotations
\begin{equation*}
    R_i=
    \begin{pmatrix}
        \sqrt{1-q_i}&-\sqrt{q_i}\\
        \sqrt{q_i}&\sqrt{1-q_i}
    \end{pmatrix}
\end{equation*}
to $N$ zero qubits in parallel. Grouping the resulting basis strings
by the position of their first 1 gives
\begin{equation}
    \bigotimes_{i=1}^N R_i\ket{0}
    =\sqrt{p_0}\ket{0^N}
+\sum_{j=1}^N\sqrt{p_j}\ket{0^{j-1}1}
       \bigotimes_{i=j+1}^N R_i\ket{0}.
\end{equation}
The coefficient of the term whose first 1 is at position $j$ is
\begin{equation*}
    \sqrt{q_j}\prod_{i<j}\sqrt{1-q_i}=\sqrt{p_j}.
\end{equation*}
The remaining qubits in that term form a normalised product state.
We omit tensor products over no qubits.

\textbf{Step 2.}
We compute the prefix values into a second register initially in
$\ket{0^N}$ using Lemma~\ref{prefix-values-linear-gates}.
This takes at most five layers and $6(N-1)$ gates and returns
the $T=N(N-1)/2$ working qubits to zero.

If the first 1 is at position $j$, every basis string in the
remaining tensor product has the same prefix values
$0^j1^{N-j}$. The joint state is
\begin{equation}
    \sqrt{p_0}\ket{0^N}\ket{0^N}\\
    +\sum_{j=1}^N\sqrt{p_j}
       \left(
           \ket{0^{j-1}1}
\bigotimes_{i=j+1}^N R_i\ket{0}
       \right)
       \ket{0^j1^{N-j}}.
\end{equation}
Within each term the prefix values are independent of the later
bits. Computing them thus preserves the product state of the
later qubits.

\textbf{Step 3.}
For every $i$ we apply $R_i^\dagger$ to qubit $i$ of the first
register when qubit $i$ of the second register is 1.
In the term whose first 1 is at position $j$, these operations
return precisely the qubits with $i>j$ to zero. They preserve
the earlier zeros, the first 1 and the coefficient $\sqrt{p_j}$.
The all-zero term is also unchanged.

Writing
\begin{equation*}
    y^0=0^N,\qquad
    y^j=0^{j-1}10^{N-j}
    \quad(1\leq j\leq N),
\end{equation*}
we obtain
\begin{equation*}
    \sum_{j=0}^N\sqrt{p_j}\ket{y^j}\ket{s(y^j)}.
\end{equation*}

We implement each conditional inverse rotation using two
single-qubit rotations and two CNOT gates. If $R_i$ is the real
rotation by angle $\theta_i$, we apply, in order, the rotation
by $-\theta_i/2$, a CNOT, the rotation by $\theta_i/2$ and a
second CNOT. Both CNOT gates use the corresponding prefix qubit
as their control and qubit $i$ of the first register as their
target. For control 0 the rotations cancel. For control 1,
conjugation by $X$ reverses the rotation angle, so the combined
action is $R_i^\dagger$.

The control and target pairs are disjoint for different $i$.
This step takes four layers and $4N$ elementary gates.

\textbf{Step 4.}
We apply $C^{-1}$ to the two registers and the zero working qubits.
On each term it acts as
\begin{equation*}
    \ket{y^j}\ket{s(y^j)}\ket{0}^{\otimes T}
    \longmapsto
    \ket{y^j}\ket{0^N}\ket{0}^{\otimes T}.
\end{equation*}
Returning the qubits after the first 1 to zero does not change
the prefix values. The inverse thus clears the second register
while preserving every coefficient in the superposition.
This takes at most five layers and $6(N-1)$ gates.

\textbf{Step 5.}
We prepend a zero qubit and compute the indicator for the all-zero
term. We apply $\revor_{N+1}$ with the new qubit as its target and
the $N$ qubits of the first register as its controls, followed by
NOT on the target. This takes at most four layers and $O(N)$ gates
and gives
\begin{equation*}
    \sum_{j=0}^N\sqrt{p_j}\ket{0^j10^{N-j}}.
\end{equation*}

\textbf{Step 6.}
We apply the inverse indicator circuit from
Lemma~\ref{permutation to indicator in constant depth} in
lexicographical order. We place the indicators at the corresponding
output locations of that circuit and initialise its other registers
to zero. In at most ten layers and $O(n2^n)$ gates, the inverse
produces
\begin{equation*}
    \sum_{j=0}^N\sqrt{p_j}\ket{\mathrm{bin}(j)}
\end{equation*}
and returns every other qubit to zero.

For a general target we undo the XOR relabelling in one layer
and restore the phases using
Proposition~\ref{diagonal phase unitaries in constant depth}.
The diagonal circuit acts on the $n$ decoded output qubits.
It uses seven layers and $O(n2^n)$ gates and additional qubits,
all of which return to zero. Where a target coefficient is zero,
we may choose its diagonal phase arbitrarily.

Including the final relabelling and phase restoration, the depth
is at most
\begin{equation*}
    1+5+4+5+4+10+1+7=37.
\end{equation*}
Computing and clearing the prefixes uses at most $12(N-1)$ gates.
The rotations and additional indicator use $O(N)$ gates.
Decoding and phase restoration each use $O(n2^n)$ gates, giving
$O(n2^n)$ in total.

The prefix circuit requires $O(N^2)$ qubits and all other stages
together use $O(N+n2^n)$. The total width is $O(4^n)$.
Every ancillary qubit returns to zero exactly.
\end{proof}

\section{Unitaries}

We have shown that unbounded size $\qaczero$ circuits can compute arbitrary Boolean functions reversibly, implement arbitrary permutations of computational basis states, and prepare arbitrary pure quantum states. We now ask whether they can implement every unitary on an arbitrary input. State preparation alone does not answer this question, since its required action is specified only on the zero input. We give several equivalent formulations of the unitary implementation problem, although we do not resolve it.

Throughout this section, we write $N=2^n$ and $\{\ket{\psi_1},\ldots,\ket{\psi_N}\}$ is an orthonormal basis of quantum states on $n$ qubits. 

\subsection{Cloning an orthonormal basis}

We first reduce the implementation of a unitary to copying the vectors of a basis in which it is diagonal. For the computational basis, we can make the required copies by applying one fanout gate to each of the $n$ input qubits. For a general orthonormal basis, the copying map still extends to a unitary, but we do not know whether it has a circuit of constant depth. The reduction assumes such an implementation, with a depth bound independent of the basis and of $n$.

\begin{prop}
\label{copying basis vectors}
Let $U$ be a unitary with decomposition
\[
U=\sum_{i=1}^{N}\alpha_i\ket{\psi_i}\bra{\psi_i},
\]
where $\{\ket{\psi_1},\ldots,\ket{\psi_N}\}$ is an orthonormal basis. Implementing $U$ by unbounded size $\qaczero$ circuits reduces to implementing, in the same circuit model, a unitary $V$ satisfying
\[
V\ket{\psi_i}\ket{0}^{\otimes K}
 =\ket{\psi_i}^{\otimes N}\ket{\phi_i}
 \qquad\text{for every }i\in[N].
\]
The normalised states $\ket{\phi_i}$ may depend on $i$.
\end{prop}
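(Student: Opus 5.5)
The plan is to sandwich a constant-depth phase operation between $V$ and $V^{-1}$. The phase operation acts on the $N$ copies and multiplies $\ket{\psi_i}^{\otimes N}$ by exactly $\alpha_i$. For each $k\in[N]$, define the single-eigenvector phase unitary
\[
U_k=I+(\alpha_k-1)\ket{\psi_k}\bra{\psi_k}
\]
on $n$ qubits. These commute, and $U=\prod_k U_k$. The key observation is that $U_k\ket{\psi_i}=\alpha_k^{\delta_{ik}}\ket{\psi_i}$. So if $U_k$ is applied to the $k$th copy, all $N$ of them in parallel, then
\[
\Bigl(\bigotimes_{k=1}^N U_k\Bigr)\ket{\psi_i}^{\otimes N}=\prod_k\alpha_k^{\delta_{ik}}\ket{\psi_i}^{\otimes N}=\alpha_i\ket{\psi_i}^{\otimes N}.
\]
The full circuit is therefore: apply $V$; apply $\bigotimes_k U_k$ to the $N$ copy registers, leaving the register holding $\ket{\phi_i}$ untouched; then apply $V^{-1}$. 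On $\ket{\psi_i}\ket{0}^{\otimes K}$ this produces $V^{-1}\bigl(\alpha_i\ket{\psi_i}^{\otimes N}\ket{\phi_i}\bigr)=\alpha_i\ket{\psi_i}\ket{0}^{\otimes K}$. Linearity over the basis $\{\ket{\psi_i}\}$ then gives $U\ket{\psi}\ket{0}^{\otimes K}$ for every input. Since $V^{-1}$ is obtained by reversing $V$'s layers and inverting each gate, which stays within the gate set, the depth at most doubles plus that of the middle step.

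It remains to implement each $U_k$ in constant depth with clean ancillas, uniformly in $k$ and $n$. By Theorem~\ref{Any state in constant depth} there is a circuit $W_k$ with $W_k\ket{0}^{\otimes(n+c)}=\ket{\psi_k}\ket{0}^{\otimes c}$. By Proposition~\ref{diagonal phase unitaries in constant depth} there is a constant-depth circuit $D$ for the diagonal unitary $I+(\alpha_k-1)\ket{0^{n+c}}\bra{0^{n+c}}$. The composition $W_kDW_k^{\dagger}$ equals
\[
I+(\alpha_k-1)\ket{\psi_k0^c}\bra{\psi_k0^c}.
\]
On inputs whose ancillas are zero, this acts as $U_k\otimes\ket{0^c}\bra{0^c}$ and returns the ancillas to zero. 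The $N$ circuits act on disjoint registers, each with its own ancillas, so they form a constant number of layers. Fanout gates in these circuits are replaced by the exact constant-depth $\qaczero$ implementation described at the start of Section~\ref{quantum circuits}, which keeps the depth constant in the unbounded-size model.

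The step I expect to need the most care is checking that the ancillary registers stay consistent throughout. The middle operation must leave $\ket{\phi_i}$ and all its own ancillas exactly unchanged, so that $V^{-1}$ undoes $V$ term by term in the superposition. I would verify this on each basis term $\ket{\psi_i}\ket{0}^{\otimes K}$ and then invoke linearity. The depth bound is then a sum of twice the depth of $V$ and the constants from Theorem~\ref{Any state in constant depth} (twice) and Proposition~\ref{diagonal phase unitaries in constant depth}, all independent of $U$ and $n$.
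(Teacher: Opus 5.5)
Your proposal is correct and follows the paper's proof. You apply $V$, act on copy $j$ with the preparation circuit conjugating a phase on the all-zero state, in parallel over $j$, so that only copy $i$ acquires $\alpha_i$, and then undo $V$. The only difference is cosmetic: the paper implements the all-zero phase $D_j$ directly in depth five using negations, one Toffoli and one ancilla, whereas you invoke the general diagonal construction, which is also valid in the unbounded-size model.
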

\begin{proof}
We apply $V$ and then act on block $j$ by a unitary that gives $\ket{\psi_j}$ the phase $\alpha_j$ and leaves the other basis vectors unchanged. These operations use separate blocks and can be performed in parallel. On input $\ket{\psi_i}$, only block $i$ acquires a phase. We then apply $V^\dagger$ to restore all additional qubits to zero, leaving the phase $\alpha_i$ on the input.

To implement the operation on block $j$, we use the constant depth preparation circuit $U_j$ supplied by Proposition \ref{Any state in constant depth}, so that
\begin{equation}
U_j\ket{0}^{\otimes(n+c)}
 =\ket{\psi_j}\ket{0}^{\otimes c}.
\end{equation}
We choose the same $c$ for every $j$ by adding unused zero qubits where necessary, and define
\begin{equation}
D_j=\alpha_j\ket{0}^{\otimes(n+c)}\bra{0}^{\otimes(n+c)}
 +\sum_{x\in\zo^{n+c}\setminus\{0^{n+c}\}}\ket{x}\bra{x}.
\end{equation}
The operator $D_j$ applies a phase only to the all-zero state. We implement it in depth five using one additional qubit initially in zero. To do this, we negate the $n+c$ inputs, compute their conjunction into the additional qubit with a Toffoli gate, apply $\operatorname{diag}(1,\alpha_j)$ to that qubit, and reverse the Toffoli and NOT gates. This returns the additional qubit to zero.

We include all working qubits of the circuit implementing $V$ in $K$, appending zero factors to $\ket{\phi_i}$ if necessary, and set $M=K+N(c+1)$. The arrays below show the $N$ basis registers as rows. Each row also has $c$ preparation qubits and one temporary phase qubit, all initially zero. The preparation qubits appear when they are used. The phase qubits are omitted, since they return to zero after each $D_j$. We retain $\ket{\phi_i}$ unchanged throughout these row operations so that we can subsequently reverse $V$.

Initially, the first basis register contains $\ket{\psi_i}$ and the others contain zeros:
\begin{equation}
N\text{ blocks}\left\{
\overbrace{\begin{array}{|ccc|}
---&\ket{\psi_i}&---\\
\ket{0}&\ldots&\ket{0}\\
\ket{0}&\ldots&\ket{0}\\
\vdots&\ddots&\vdots\\
\ket{0}&\ldots&\ket{0}
\end{array}}^{n\text{ qubits in each block}}
\right.
\end{equation}
We apply $V$ to the first $n+K$ qubits, obtaining the complete state
\[
\ket{\psi_i}^{\otimes N}\ket{\phi_i}\ket{0}^{\otimes N(c+1)}.
\]
The basis and preparation registers are
\begin{equation}
\underbrace{\begin{array}{|ccc|}
---&\ket{\psi_i}&---\\
---&\ket{\psi_i}&---\\
---&\ket{\psi_i}&---\\
&\vdots&\\
---&\ket{\psi_i}&---
\end{array}}_{n\text{ qubits per row}}
\qquad
\underbrace{\begin{array}{|ccc|}
\ket{0}&\ldots&\ket{0}\\
\ket{0}&\ldots&\ket{0}\\
\ket{0}&\ldots&\ket{0}\\
\vdots&\ddots&\vdots\\
\ket{0}&\ldots&\ket{0}
\end{array}}_{c\text{ qubits per row}}.
\end{equation}
If $c=0$, the second array is absent.

We next apply $U_j^\dagger$ to row $j$ and its preparation qubits, for all $j$ in parallel. The identity
\begin{equation}
\bra{0}^{\otimes(n+c)}
U_j^\dagger\bigl(\ket{\psi_i}\ket{0}^{\otimes c}\bigr)
 =\braket{\psi_j|\psi_i}=\delta_{ij},
\end{equation}
shows that row $i$ becomes the all-zero state, while the state in every other row is orthogonal to the all-zero state:
\begin{equation}
\begin{array}{|ccc|}
---&U_1^\dagger\bigl(\ket{\psi_i}\ket{0}^{\otimes c}\bigr)&---\\
---&U_2^\dagger\bigl(\ket{\psi_i}\ket{0}^{\otimes c}\bigr)&---\\
&\vdots&\\
---&\ket{0}^{\otimes(n+c)}&---\\
&\vdots&\\
---&U_N^\dagger\bigl(\ket{\psi_i}\ket{0}^{\otimes c}\bigr)&---
\end{array}.
\end{equation}
We can now apply all the $D_j$ in parallel, giving
\begin{equation}
\alpha_i\,
\begin{array}{|ccc|}
---&U_1^\dagger\bigl(\ket{\psi_i}\ket{0}^{\otimes c}\bigr)&---\\
---&U_2^\dagger\bigl(\ket{\psi_i}\ket{0}^{\otimes c}\bigr)&---\\
&\vdots&\\
---&\ket{0}^{\otimes(n+c)}&---\\
&\vdots&\\
---&U_N^\dagger\bigl(\ket{\psi_i}\ket{0}^{\otimes c}\bigr)&---
\end{array}.
\end{equation}
Only row $i$ acquires a phase, since $D_j$ acts as the identity on the orthogonal complement of the all-zero state. The other rows and the retained state $\ket{\phi_i}$ are unchanged.

We apply all $U_j$ in parallel to restore the copies and their preparation qubits, and then apply $V^\dagger$. The basis registers return to
\begin{equation}
\alpha_i\,
\begin{array}{|ccc|}
---&\ket{\psi_i}&---\\
\ket{0}&\ldots&\ket{0}\\
\ket{0}&\ldots&\ket{0}\\
\vdots&\ddots&\vdots\\
\ket{0}&\ldots&\ket{0}
\end{array},
\end{equation}
and every additional qubit is zero. Thus the complete action is
\[
\ket{\psi_i}\ket{0}^{\otimes M}
 \longmapsto
\alpha_i\ket{\psi_i}\ket{0}^{\otimes M}.
\]
By linearity, this is the required action of $U$ on an arbitrary input. The circuits $U_j$ have a common constant depth bound, and their parallel execution does not increase it. Together with the assumed bound for $V$, this gives a constant depth bound for the complete circuit.
\end{proof}

Note that we do not need the rows to contain identical copies. The proof only uses the fact that, on input $\ket{\psi_i}$, row $i$ contains $\ket{\psi_i}$ and every other row $j$ is orthogonal to $\ket{\psi_j}$. 

\subsection{Permuting an orthonormal basis}

We next show that this property can also be obtained by arranging one copy of each basis vector in a suitable order. This means that the ability to map basis states to such permutations implies the ability to realise an arbitrary unitary.

\begin{prop}
\label{permuted basis vectors}
Let $n\geq2$. For each $i\in[N]$, let $\sigma_i$ be a permutation of $[N]$ whose only fixed point is $i$, and let
\[
\ket{\eta_i}
 =\ket{\psi_{\sigma_i(1)}}\ket{\psi_{\sigma_i(2)}}\cdots
  \ket{\psi_{\sigma_i(N)}}.
\]
Suppose that unbounded size $\qaczero$ circuits can implement a unitary $V$ satisfying
\[
V\ket{\psi_i}\ket{0}^{\otimes K}
 =\ket{\eta_i}\ket{\phi_i}
 \qquad\text{for every }i\in[N],
\]
where the states $\ket{\phi_i}$ are normalised. They can then implement every unitary diagonal in the basis $\{\ket{\psi_1},\ldots,\ket{\psi_N}\}$.
\end{prop}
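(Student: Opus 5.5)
The plan is to reuse the proof of Proposition~\ref{copying basis vectors} almost verbatim, relying on the remark after it. That proof never needs identical copies. It only needs that, on input $\ket{\psi_i}$, the row labelled $i$ holds $\ket{\psi_i}$ and every other row $j$ holds a state orthogonal to $\ket{\psi_j}$. So I will show that $\ket{\eta_i}$ supplies exactly this structure, position by position.

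Since $\sigma_i$ fixes $i$, position $i$ of $\ket{\eta_i}$ holds $\ket{\psi_{\sigma_i(i)}}=\ket{\psi_i}$. For $j\neq i$, position $j$ holds $\ket{\psi_{\sigma_i(j)}}$ with $\sigma_i(j)\neq j$, because $i$ is the only fixed point. Hence $\braket{\psi_j|\psi_{\sigma_i(j)}}=0$ by orthonormality. The hypothesis $n\geq2$ serves only to guarantee that such fixed-point-unique permutations exist: $N-1\geq 3>1$ points must be deranged, which is impossible for $N-1=1$.

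The circuit will then run as follows. Write the target as $U=\sum_i\alpha_i\ket{\psi_i}\bra{\psi_i}$.
\begin{enumerate}
\item Apply $V$, keeping $\ket{\phi_i}$ untouched.
\item In parallel for each position $j$, apply $U_j^\dagger$, the inverse of the constant-depth preparation circuit of Theorem~\ref{Any state in constant depth} for $\ket{\psi_j}$, padded to a common number $c$ of fresh zero qubits per row.
\item Apply the phase gates $D_j$, which put $\alpha_j$ on the all-zero state of row $j$ in depth five.
\item Undo the $U_j$ and then apply $V^\dagger$.
\end{enumerate}

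On input $\ket{\psi_i}$, row $i$ maps to $\ket{0}^{\otimes(n+c)}$ because $\bra{0}U_j^\dagger(\ket{\psi_k}\ket{0}^{\otimes c})=\braket{\psi_j|\psi_k}$. For $j\neq i$, row $j$ maps into the orthogonal complement of the all-zero state, on which $D_j$ acts trivially. The rows are in a product state, so the combined effect of the $D_j$ is the single scalar $\alpha_i$. Reversing the remaining steps returns $\alpha_i\ket{\psi_i}\ket{0}^{\otimes M}$. Linearity then gives $U$ on all inputs, and the depth is constant because the $U_j$ share a common bound and $V$ is assumed constant depth.

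The only point requiring care is that all these maps act coherently on superpositions of different $i$. Each step is a fixed unitary, and the argument computes its action on each basis input $\ket{\psi_i}\ket{0}$, so linearity handles superpositions automatically. I therefore expect no real obstacle beyond checking the fixed-point orthogonality, which is the key observation.
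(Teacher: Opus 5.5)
Your proposal is correct and follows the paper's proof essentially step for step. Like the paper, you apply $V$, conjugate the phase gates $D_j$ by the row-wise inverse preparations $U_j^\dagger$ from Proposition~\ref{copying basis vectors} (using that $i$ is the unique fixed point of $\sigma_i$, so only row $i$ is sent to the all-zero state), and then undo everything with the $U_j$ and $V^\dagger$, with the working qubits of $V$ absorbed into $K$.
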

\begin{proof}
We obtain such a permutation by fixing $i$ and cycling the other $N-1$ indices. The assumption $n\geq2$ ensures that there are at least two other indices. When $N=2$, no permutation has exactly one fixed point, but the corresponding case $n=1$ needs no reduction because every single-qubit unitary is an elementary gate.

We write the target as $U=\sum_i\alpha_i\ket{\psi_i}\bra{\psi_i}$ and apply $V$. Retaining $\ket{\phi_i}$ unchanged, we act on the basis registers, which now contain
\begin{equation}
\begin{array}{|ccc|}
---&\ket{\psi_{\sigma_i(1)}}&---\\
---&\ket{\psi_{\sigma_i(2)}}&---\\
&\vdots&\\
---&\ket{\psi_i}&---\\
&\vdots&\\
---&\ket{\psi_{\sigma_i(N)}}&---
\end{array}.
\end{equation}
We use the preparation circuits $U_j$ and phase operators $D_j$ from Proposition \ref{copying basis vectors}, with $c$ preparation qubits and one phase qubit for each row. In row $j$, we have
\[
\bra{0}^{\otimes(n+c)}
 U_j^\dagger\bigl(\ket{\psi_{\sigma_i(j)}}\ket{0}^{\otimes c}\bigr)
 =\braket{\psi_j|\psi_{\sigma_i(j)}}.
\]
The inner product equals one when $j=i$ and zero otherwise, since $i$ is the only fixed point of $\sigma_i$. We apply the operations $U_jD_jU_j^\dagger$ in parallel. They multiply the entire state by $\alpha_i$ and restore their additional qubits to zero. Applying $V^\dagger$ then gives
\[
\alpha_i\ket{\psi_i}\ket{0}^{\otimes M},
\qquad M=K+N(c+1).
\]
As in the preceding proof, we include all working qubits of the implementation of $V$ in $K$. Applying $V^\dagger$ also reverses the preparation of the retained state $\ket{\phi_i}$. The required action follows by linearity, and the depth bound is the same as in Proposition \ref{copying basis vectors}.
\end{proof}

\subsection{Decoding basis labels}
We can produce the arrangements in Proposition~\ref{permuted basis vectors} if we can obtain a computational basis label identifying the input basis vector. We prepare all basis vectors in separate registers and use the label to select their order, retaining the label and any remaining qubits so that the operation can be reversed.

\begin{prop}
\label{basis labels}
Let $n\geq2$, and let $\sigma_i$ and $\ket{\eta_i}$ be as in Proposition \ref{permuted basis vectors}. Implementing
\[
\ket{\psi_i}\ket{0}^{\otimes M}
 \longmapsto \ket{\eta_i}\ket{\phi_i}
 \qquad\text{for every }i\in[N]
\]
in unbounded size $\qaczero$ reduces to implementing a unitary $V$ satisfying
\[
V\ket{\psi_i}\ket{0}^{\otimes K}
 =\ket{i}\ket{junk_i}.
\]
Here $\ket{i}$ denotes the $i$th computational basis vector on $n$ qubits, in the chosen ordering, and the states $\ket{junk_i}$ are normalised.
\end{prop}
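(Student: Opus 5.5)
We first apply $V$ to obtain $\ket{i}\ket{junk_i}$, which exposes the label $i$ in a computational basis register. In a separate block of $N$ registers, each with $n$ data qubits and $c$ preparation qubits initially zero, we apply all preparation circuits $U_1,\ldots,U_N$ from Theorem~\ref{Any state in constant depth} in parallel. This produces the fixed state $\ket{\psi_1}\ket{\psi_2}\cdots\ket{\psi_N}$, with the preparation qubits returned to zero. This stage does not depend on the input, and its depth is bounded by the common constant depth of the $U_j$.

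The central step is a controlled permutation of these $N$ registers, conditioned on the label. Consider the unitary $W$ that acts on the label register and the $N$ data registers by
\[
W\ket{k}\ket{a_1}\cdots\ket{a_N}
 =\ket{k}\ket{a_{\sigma_k(1)}}\cdots\ket{a_{\sigma_k(N)}}
\]
for computational basis strings $a_1,\ldots,a_N\in\zo^n$ and each label $k$. This is a permutation of computational basis states on $n+nN$ qubits. By Proposition~\ref{Any permutation in constant depth}, transferred to quantum circuits as described at the start of Section~\ref{quantum circuits}, it has an exact constant-depth implementation with fanout, and hence in unbounded size $\qaczero$. By linearity, $W$ maps
\[
\ket{i}\ket{\psi_1}\cdots\ket{\psi_N}
 \longmapsto
\ket{i}\ket{\psi_{\sigma_i(1)}}\cdots\ket{\psi_{\sigma_i(N)}}
=\ket{i}\ket{\eta_i}.
\]
The relabelling follows from the definition of $W$, since register $j$ receives the contents of register $\sigma_i(j)$. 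We then declare the permuted block to be the output register. The label register, $\ket{junk_i}$ and the zeroed preparation qubits together form $\ket{\phi_i}$. Because every stage is unitary and the input and output locations may differ under the conventions of Section~\ref{circuit-conventions}, this realises the required map on every basis vector $\ket{\psi_i}$, with $M$ counting all initially zero qubits.

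Two points need checking. First, the depth bound for $W$ must not depend on $n$. Proposition~\ref{Any permutation in constant depth} gives depth at most $20$ for any permutation of any number of bits, so this holds, although the width and size become enormous, roughly $2^{n+nN}$; this is acceptable because size is unrestricted. Second, the conversion of fanout to $\qaczero$ remains constant depth, as stated at the start of Section~\ref{quantum circuits}. A cheaper alternative would be to fan out the label and use controlled swaps, but quoting the general permutation result avoids verifying that construction.

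I expect the only potential obstacle to be bookkeeping: the map to be implemented is specified only on inputs of the form $\ket{\psi_i}\ket{0}^{\otimes M}$, and $\ket{\phi_i}$ absorbs the label and the junk state, so no uncomputation is required.
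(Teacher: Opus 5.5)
Your proposal is correct and follows the paper's proof essentially step for step. You prepare $\ket{\psi_1},\ldots,\ket{\psi_N}$ in parallel, apply $V$ to expose the label, implement the label-controlled register permutation as a single permutation of computational basis states via Proposition~\ref{Any permutation in constant depth}, absorb the label and $\ket{junk_i}$ into $\ket{\phi_i}$, and note that the resulting very large size is permitted, matching the paper's closing remark about the absence of a size restriction.
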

\begin{proof}
We prepare $\ket{\psi_1},\ldots,\ket{\psi_N}$ in separate registers in parallel, using Proposition \ref{Any state in constant depth}. Their preparation qubits return to zero. We then apply $V$ to the unknown input and its $K$ additional qubits, retaining both $\ket{i}$ and $\ket{junk_i}$.

For each label $i$, we permute the prepared registers into the order specified by $\sigma_i$. The operation that retains the label and applies the corresponding register permutation is itself a permutation of computational basis states, so Proposition \ref{Any permutation in constant depth} gives its constant depth implementation. The output registers then contain $\ket{\eta_i}$, and the retained label, $\ket{junk_i}$ and unused zero qubits form $\ket{\phi_i}$. This gives the required map, with the label retained as part of the output.

This argument uses the absence of a size restriction: applying the permutation construction to exponentially many qubits need not give a circuit of exponential size in the original $n$.
\end{proof}

\begin{rmk}
We can also use a basis label to implement any unitary diagonal in that basis directly, by applying $V$, the desired diagonal phase on the label register, and then $V^\dagger$. The state $\ket{junk_i}$ remains unchanged between $V$ and $V^\dagger$.
\end{rmk}

\begin{cor}
\label{labels imply unitaries}
Let $U$ be a unitary with decomposition
\[
U=\sum_{i=1}^{N}\alpha_i\ket{\psi_i}\bra{\psi_i}
\]
for an orthonormal basis $\{\ket{\psi_1},\ldots,\ket{\psi_N}\}$. Implementing $U$ by unbounded size $\qaczero$ circuits reduces to implementing a unitary $V$ such that
\[
V\ket{\psi_i}\ket{0}^{\otimes K}
 =\ket{i}\ket{junk_i}
 \qquad\text{for every }i\in[N].
\]
\end{cor}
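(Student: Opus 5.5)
The corollary follows by chaining the reductions already established. There are two routes, and I would present the direct one first, then note the chain through Propositions~\ref{basis labels}, \ref{permuted basis vectors} and~\ref{copying basis vectors} as an alternative.

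\textbf{Direct route.} This follows the remark after Proposition~\ref{basis labels}. On input $\ket{\psi_i}\ket{0}^{\otimes K}$, I apply $V$ to obtain $\ket{i}\ket{junk_i}$. I then apply to the $n$-qubit label register the diagonal unitary $\sum_{i}\alpha_i\ket{i}\bra{i}$, written in the chosen ordering of computational basis labels. Proposition~\ref{diagonal phase unitaries in constant depth} implements this in constant depth in $\mathsf{QAC^0_f}$, with all of its ancillary qubits returned to zero. Replacing each fanout by the exact constant-depth $\qaczero$ implementation from Section~\ref{quantum circuits} keeps the depth constant and the ancillas clean; the size may grow, but size is unrestricted here. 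The state becomes $\alpha_i\ket{i}\ket{junk_i}$, because the diagonal acts only on the label register and leaves $\ket{junk_i}$ untouched. Applying $V^\dagger$ then gives $\alpha_i\ket{\psi_i}\ket{0}^{\otimes K}$. Since $\{\ket{\psi_i}\}$ is a basis, linearity yields $U\ket{\psi}\ket{0}^{\otimes K}$ for every input $\ket{\psi}$. The total depth is twice the depth bound for $V$ plus a constant, so it is uniformly bounded whenever $V$ is.

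\textbf{Points to check.} The first is that $V^\dagger$ lies in the same model. Each gate in our set has an inverse in the set: generalised Toffoli is self-inverse, and single-qubit unitaries are closed under taking the adjoint. Reversing the layers of the circuit for $V$ therefore gives $V^\dagger$ with the same depth. The second is the case $n=1$, where $U$ is itself a single elementary gate, although the direct route does not even need this case separately.

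\textbf{Alternative chain.} For $n\geq2$, Proposition~\ref{basis labels} turns $V$ into the permuted-list map. Proposition~\ref{permuted basis vectors} then yields every unitary diagonal in the basis $\{\ket{\psi_i}\}$, and $U$ is such a unitary.

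\textbf{Main obstacle.} Nothing here is hard. The only subtle point is ensuring that the phase operation leaves $\ket{junk_i}$ unchanged, so that $V^\dagger$ exactly undoes $V$ without interference across different $i$. This holds because the phase operation acts only on the label register.
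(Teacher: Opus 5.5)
Your direct route is the paper's proof: it applies the remark after Proposition~\ref{basis labels}, using $V$, then the diagonal $\sum_i\alpha_i\ket{i}\bra{i}$ from Proposition~\ref{diagonal phase unitaries in constant depth} on the label register, then $V^\dagger$, and concludes by linearity. Your extra checks (closure of the model under inverses, replacing fanout) and the alternative chain through Propositions~\ref{basis labels} and~\ref{permuted basis vectors} are correct, but the corollary does not need them.
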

\begin{proof}
We apply the preceding remark with the diagonal unitary
$\sum_i\alpha_i\ket{i}\bra{i}$, which is available by Proposition \ref{diagonal phase unitaries in constant depth}. The resulting map is
$\ket{\psi_i}\ket{0}^{\otimes K}\mapsto
\alpha_i\ket{\psi_i}\ket{0}^{\otimes K}$, so linearity gives $U$.
\end{proof}

\subsection{Unitaries with unit row and column sums}

We next use the available diagonal gates to obtain another restriction of the family of unitaries that we need to implement. The Idel--Wolf theorem lets us multiply an arbitrary unitary on the left and right by diagonal unitaries so that every row and column sums to 1. Since we can implement the diagonal factors in constant depth, implementing unitaries with these row and column sums is sufficient.

\begin{prop}
\label{row and column sums}
If unbounded size $\qaczero$ circuits can implement every unitary whose rows and columns all sum to $1$, then they can implement an arbitrary unitary.
\end{prop}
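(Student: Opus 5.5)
The plan is to invoke the Idel--Wolf normal form \cite{idel2015sinkhorn}: for every $N\times N$ unitary $U$ there exist diagonal unitaries $D_1,D_2$ such that $W:=D_1UD_2$ has every row and column summing to $1$. Equivalently, $W$ has the all-ones vector as a common eigenvector with eigenvalue $1$ for both $W$ and $W^{\dagger}$. I will take this theorem as given. Its proof is a topological (symplectic) argument, and we only need existence, since the circuit model of Section~\ref{circuit-conventions} allows the circuit and its parameters to depend on the target without any efficiency requirement.

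Rearranging gives $U=D_1^{\dagger}WD_2^{\dagger}$. First, $D_2^{\dagger}$ and $D_1^{\dagger}$ are diagonal unitaries on $n$ qubits. By Proposition~\ref{diagonal phase unitaries in constant depth} combined with the exact constant-depth replacement of fanout discussed at the start of Section~\ref{quantum circuits}, each has an unbounded size $\qaczero$ implementation of constant depth. These implementations return all ancillary qubits to $\ket{0}$. Second, by hypothesis $W$ has an unbounded size $\qaczero$ implementation. We fix a depth bound that is independent of $W$ and of $n$; this is the intended reading of the hypothesis, consistent with the convention that a single depth bound covers every target.

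The circuit is then: apply the circuit for $D_2^{\dagger}$, then the circuit for $W$, then the circuit for $D_1^{\dagger}$, each on the same $n$ data qubits and each with its own fresh block of clean ancillas. Since every piece satisfies $C(\ket{\psi}\ket{0}^{\otimes c})=A\ket{\psi}\ket{0}^{\otimes c}$ for its target $A$, composition gives $U\ket{\psi}\ket{0}^{\otimes c}$ for every input $\ket{\psi}$. The total depth is the sum of three constants.

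The main obstacle is checking that the Idel--Wolf statement applies in exactly this form. It should give diagonal unitaries on both sides, not merely some matrix, and it should cover every unitary, including the degenerate cases. I would cite their theorem directly for this. Beyond that, the only care needed is the uniformity of the depth bound, which the composition preserves.
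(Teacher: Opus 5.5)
Your proposal is correct and matches the paper's proof. Both use the Idel--Wolf theorem to write $U=L^\dagger A R^\dagger$, where $A$ has unit row and column sums and $L,R$ are diagonal unitaries. Both then compose the assumed circuit for $A$ with the constant-depth diagonal circuits of Proposition~\ref{diagonal phase unitaries in constant depth}. You also state that fanout must be replaced by its exact constant-depth implementation and that the depth bound for $A$ must be uniform, two details the paper leaves implicit.
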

\begin{proof}
Given a unitary $U$, we use the Idel--Wolf theorem \cite[Theorem 2]{idel2015sinkhorn} to choose diagonal unitaries $L,R$ such that $A=LUR$ satisfies
\[
\sum_j A_{ij}=1=\sum_i A_{ij}.
\]
We then implement $U=L^\dagger A R^\dagger$ by composing the assumed implementation of $A$ with the diagonal circuits from Proposition \ref{diagonal phase unitaries in constant depth}.
\end{proof}

\subsection{An alternative gateset for unbounded size \texorpdfstring{$\qaczero$}{QAC0}}

The operations we have already constructed give another description of the circuit model. We allow arbitrary diagonal unitaries and permutations of computational basis states as individual layers, together with layers of Hadamard gates. We show that this gives exactly the same families of implementable unitaries as unbounded size $\qaczero$.

\begin{dfn}
\label{HPP0 circuits}
An $\mathsf{HPP^0}$ circuit has a constant number of layers, each of which is a Hadamard gate on every qubit, an arbitrary diagonal unitary, or an arbitrary permutation of computational basis states. Circuit families have a common constant bound on the number of layers, with unrestricted width.
\end{dfn}

\begin{dfn}
\label{clean quantum implementation}
Let $U$ be a unitary on $m$ qubits and let $C$ be a circuit on $n\geq m$ qubits. We say that $C$ realises $U$ at distinct locations $(b_1,\ldots,b_m)\in[n]^m$ if
\[
C\ket{x}=(U,f)\ket{x}
\]
for every $x\in\zo^n$ that is zero outside these locations. Here $f(i)=b_i$, and $(U,f)$ acts as $U$ on the ordered qubits $b_1,\ldots,b_m$ and as the identity on all other qubits. By linearity, the equality holds for every state of the selected qubits when the other qubits are initially zero. Those other qubits also end in zero, as required by the implementation convention of Section~\ref{circuit-conventions}.
\end{dfn}

\begin{prop}
\label{HPP equivalence}
The same families of unitaries can be implemented by unbounded size $\qaczero$ circuits and $\mathsf{HPP^0}$ circuits.
\end{prop}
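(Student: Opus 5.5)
We prove the two inclusions separately. In each direction we replace every layer of one model by a constant-depth circuit of the other model that realises that layer in the sense of Definition~\ref{clean quantum implementation}. Any extra qubits introduced for one layer start in zero and are returned to zero, so composing the replacements layer by layer preserves the implementation convention. The resulting depth is bounded by the number of layers times the maximum replacement depth.

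\emph{From $\mathsf{HPP^0}$ to unbounded size $\qaczero$.}
\begin{itemize}
\item A layer of Hadamard gates on every qubit is already a layer of single-qubit gates.
\item An arbitrary diagonal unitary on all $w$ qubits of the circuit is realised in depth seven by Proposition~\ref{diagonal phase unitaries in constant depth}, using clean ancillary qubits.
\item An arbitrary permutation of computational basis states is realised in depth at most twenty by Proposition~\ref{Any permutation in constant depth}. Its Toffoli and fanout gates act on basis states as the classical gates do and introduce no phases, so by linearity the same circuit works on superpositions.
\end{itemize}
Both propositions use fanout. We remove it with the exact constant-depth fanout implementation from single-qubit and generalised Toffoli gates discussed at the start of Section~\ref{quantum circuits}. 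This may enlarge the size enormously, but the size is unrestricted, and the depth stays bounded by a constant independent of $w$. That independence is exactly what the definition of $\mathsf{HPP^0}$ families requires.

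\emph{From unbounded size $\qaczero$ to $\mathsf{HPP^0}$.} Consider a layer of disjoint gates. A layer of generalised Toffoli gates is a single permutation of computational basis states, and so is one $\mathsf{HPP^0}$ layer. A layer of single-qubit gates needs more care. We use the Euler decomposition of each gate $u$ as $u=e^{i\gamma}\,R_z(\alpha)\,H\,R_z(\beta)\,H\,R_z(\delta)$, which holds because $HR_z(\beta)H=R_x(\beta)$ and $Z$–$X$–$Z$ Euler angles cover $SU(2)$. Every factor other than $H$ is diagonal, and the tensor product of diagonal factors across all qubits is one diagonal layer. The obstacle is that an $\mathsf{HPP^0}$ Hadamard layer acts on \emph{every} qubit, including qubits that carry no gate in this layer and all ancillary qubits. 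This is harmless, because the decomposition applies $H$ exactly twice to every qubit. A qubit that should be untouched receives $H\,I\,H=I$, with trivial diagonal factors. Each single-qubit layer therefore becomes at most five $\mathsf{HPP^0}$ layers, with global phases absorbed into any diagonal layer. The depth grows by a constant factor and no ancillary qubits are added.

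Combining the two directions, a family implementable in either model with a common depth bound is implementable in the other with a (different) common constant depth bound. The clean-ancilla convention is preserved throughout, so the implementable families coincide.
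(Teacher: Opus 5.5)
Your proposal is correct and follows the paper's proof essentially step for step. It uses the same Euler decomposition $P_1HP_2HP_3$ with trivial diagonal factors on idle qubits, giving at most six $\mathsf{HPP^0}$ layers per original layer, and the same appeal to Propositions~\ref{diagonal phase unitaries in constant depth} and~\ref{Any permutation in constant depth} with the exact fanout replacement for the converse. One point the paper states explicitly and you leave implicit: in the $\mathsf{HPP^0}$-to-$\qaczero$ direction, the Hadamard layers are applied only to the original $w$ registers. They are not applied to the clean ancillas introduced for the diagonal and permutation layers.
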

\begin{proof}
We first convert a $\qaczero$ circuit of depth $d$ into an $\mathsf{HPP^0}$ circuit. Each layer contains single-qubit gates and Toffoli gates acting on disjoint sets of qubits, so we can separate it into one layer of each type.

By Euler decomposition, we can write every single-qubit unitary as $P_1HP_2HP_3$, where $P_1,P_2,P_3$ are diagonal single-qubit unitaries that include any overall phase. This expresses a layer of single-qubit gates as a product of three diagonal layers and two Hadamard layers. On an unused qubit, we set $P_1=P_2=P_3=I$, so the two Hadamards cancel. Both Hadamard layers can then act on every qubit, as required by Definition \ref{HPP0 circuits}.

A tensor product of Toffoli gates and identities is a permutation of computational basis states. Each original layer requires at most six $\mathsf{HPP^0}$ layers, giving total depth at most $6d$.

Conversely, we convert an $\mathsf{HPP^0}$ circuit of constant depth $c$ into a $\qaczero$ circuit by replacing each layer. Hadamard layers are elementary in $\qaczero$. Proposition \ref{diagonal phase unitaries in constant depth} gives a $\qaczero$ implementation of a diagonal layer with a constant depth bound $d_1$. Proposition \ref{Any permutation in constant depth}, together with the fanout implementation discussed in Section \ref{quantum circuits}, gives a constant bound $d_2$ for a permutation layer. These implementations return their additional qubits to zero, so we can compose them without changing the action of later layers. We apply the Hadamard layers only to the original registers. The resulting circuit has depth at most $c\max(1,d_1,d_2)$ and the required action on those registers.
\end{proof}

\subsection{Circuits with adaptive intermediate measurements}

We have so far used only unitary circuits. For the remaining two propositions, we also allow computational basis measurements between gate layers and let later layers depend on the measurement outcomes. Depth counts both gate layers and measurement rounds, but excludes the classical computation used to choose subsequent layers. We require the overall quantum output to equal $U$ applied to the input for every possible sequence of intermediate measurement outcomes, including when the input is entangled with another system. The measurement records and measured qubits may be discarded. This differs from the usual unitary implementation convention, in which all additional qubits return to zero.

We first apply this measurement model to the Clifford hierarchy. Its first level consists of Pauli operators. For $l\geq2$, a unitary belongs to level $l$ if conjugating any Pauli operator by it gives a unitary in level $l-1$. Using gate teleportation \cite{gottesman1999teleportation}, we reduce the required correction by one level at each stage and obtain a depth bound proportional to $l$.

\begin{prop}
\label{Clifford hierarchy implementation}
Fix a positive integer $l$. Every $n$-qubit unitary in level $l$ of the Clifford hierarchy has an adaptive implementation of depth $O(l)$ using arbitrary single-qubit gates, $\toff$, $\fanout$ and computational basis measurements. A sufficient bound on both the number of additional qubits and the number of gates is $2^{O(nl)}$, which is exponential in $n$ for fixed $l$.
\end{prop}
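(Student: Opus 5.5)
We argue by induction on $l$, using gate teleportation as the inductive step. Level $1$ consists of Pauli operators, which are tensor products of single-qubit gates and are applied in one layer, up to a global phase that we absorb into one factor. For $l\geq2$, let $U$ act on $n$ qubits. We first prepare the resource state
\[
    (I\otimes U)\ket{\Phi}^{\otimes n},
\]
where $\ket{\Phi}$ is a two-qubit Bell pair. This is a fixed $2n$-qubit pure state, so Theorem~\ref{Any state in constant depth} prepares it in depth at most $37$, with $O(n4^n)$ gates and $O(16^n)$ qubits, and returns its ancillary qubits to zero. The depth of this preparation does not depend on $U$ or $l$. We can prepare the resources for every level in parallel at the start, so this contributes a single constant to the depth.

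Next we perform Bell measurements between the $n$ input qubits and the first halves of the resource pairs. Each measurement is one layer of CNOTs, one layer of Hadamards and one measurement round. The output register then holds
\[
    U P_{a,b}\ket{\psi}=(UP_{a,b}U^\dagger)\,U\ket{\psi},
\]
where $P_{a,b}$ is the Pauli operator determined by the outcomes. The identity holds on any input entangled with another system, since the standard teleportation identity is linear in the input. By the definition of the hierarchy, the correction $V_1=P_{a,b}^\dagger$ conjugated by $U$, that is $UP_{a,b}^\dagger U^\dagger$, lies in level $l-1$, and inverting $UP_{a,b}U^\dagger$ gives exactly this operator. We must apply this correction adaptively. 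It is not known in advance, so we teleport through the resource for $V_1$ chosen classically after the measurement. This produces a correction in level $l-2$, and we repeat until the correction is a Pauli operator.

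The main obstacle is that the resource for the next round depends on the outcome, but its preparation must not cost depth inside each round. One option is to prepare in advance resource states $(I\otimes V)\ket{\Phi}^{\otimes n}$ for every possible correction $V$ at every level. The number of distinct corrections is finite: at stage $k$ they are indexed by the $4^{nk}$ outcome histories. We would therefore prepare $\sum_{k\le l}4^{nk}=2^{O(nl)}$ resources, each with $2^{O(n)}$ gates and qubits, all in parallel in constant depth. Each round then consists of adaptively choosing which prepared register to route into a Bell measurement, which is a classical choice of where to place the next layer of gates, followed by a constant number of layers. This gives depth $37+O(l)=O(l)$, with gate and qubit counts $2^{O(n)}\cdot2^{O(nl)}=2^{O(nl)}$.

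Alternatively, we can prepare each resource only when it is needed, which costs constant depth per round. This still gives depth $O(l)$, but without parallel pre-preparation it gives a bound of $37l$ rather than $37+O(l)$, which remains $O(l)$. We therefore expect the remaining steps to be bookkeeping. Unused resources and measured qubits may be discarded under the adaptive convention. We must also check that global phases in the Pauli corrections are harmless, which follows because they are fixed by the known outcomes and can be absorbed into a single-qubit gate in the final layer.
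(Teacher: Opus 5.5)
Your proposal is correct and follows essentially the same route as the paper. You teleport through the Choi state of $U$ prepared by Theorem~\ref{Any state in constant depth}, use a constant-depth Bell measurement, and apply the correction $UP^\dagger U^\dagger$, which drops one level of the hierarchy at each round. You obtain the $2^{O(nl)}$ bound by allocating a resource for each outcome history, or alternatively preparing each on demand; the paper gives both options. The difference is only presentational: you unroll the induction and pre-prepare all resources in parallel to get depth $37+O(l)$, whereas the paper inducts with a constant $c\geq\max(1,c_1+c_2)$ to get depth at most $cl$.
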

\begin{proof}
We use induction on $l$. At level one, a Pauli operator is a tensor product of single-qubit gates up to an overall phase. We include that phase in one of the gates and implement the operator in one layer.

For the induction step, we prepare the Choi state
\begin{equation}
(I\otimes U)\frac{1}{\sqrt{2^n}}
 \sum_{x\in\zo^n}\ket{x}\ket{x}.
\end{equation}
This is a known pure state on $2n$ qubits. We can prepare it using Proposition \ref{Any state in constant depth} in a constant depth $c_1$ with $2^{O(n)}$ qubits and gates, without assuming an implementation of $U$ on an arbitrary input.

We perform Bell measurements between the input and the first half of this state, using CNOT gates, Hadamard gates and computational basis measurements in parallel. These measurements take a constant depth $c_2$. For each outcome, the unnormalised state of the remaining register is
\[
2^{-n}UP\ket{\psi},
\]
up to an overall phase, for a Pauli operator $P$ determined by the outcome. Each outcome has probability $4^{-n}$. The correction
\[
UP^\dagger U^\dagger
\]
lies in level $l-1$ and turns this state into $U\ket{\psi}$ after normalisation. By the induction hypothesis, we can implement it in depth at most $c(l-1)$, where we choose $c\geq\max(1,c_1+c_2)$. The total depth is at most $c_1+c_2+c(l-1)\leq cl$. The calculation remains valid when the input is entangled with a reference system on which the circuit acts as the identity. Thus the corrected output has the required action on entangled inputs as well.

Each state preparation uses $2^{O(n)}$ qubits and gates, and each measurement round has $4^n$ possible outcomes. We use at most $l-1$ rounds. Even if we allocate separate resources for every possible sequence of outcomes, the total is at most $2^{O(nl)}$ qubits and gates. Alternatively, we can prepare the state required at the next stage after the preceding outcomes are known. Both counts are for circuits with fanout included in the gate set.
\end{proof}

\begin{cor}
    Every generalised semi-Clifford gate can be implemented by unbounded size $\qaczero$ with intermediate measurements.
\end{cor}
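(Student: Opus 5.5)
The plan is to combine the structural characterisation of generalised semi-Clifford gates with the Clifford hierarchy result of Proposition~\ref{Clifford hierarchy implementation}. Recall that a generalised semi-Clifford gate is one of the form $U=C_1DC_2$, where $C_1,C_2$ are Clifford unitaries and $D$ is a unitary that is diagonal up to a permutation of computational basis states. Equivalently, $D=\Pi\Delta$ with $\Pi$ a permutation of computational basis states and $\Delta$ diagonal. This is the definition I would adopt, following the Clifford-hierarchy literature; for a fixed level of the hierarchy, every such gate in that level has this form. I would, however, not restrict to a hierarchy level at all. The decomposition already makes every factor implementable, and this shows that the corollary holds even without measurements for the middle factor.

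The key steps, in order, are as follows.

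\begin{enumerate}
\item Implement $\Delta$ by Proposition~\ref{diagonal phase unitaries in constant depth} and $\Pi$ by Proposition~\ref{Any permutation in constant depth}. These give $\mathsf{QAC^0_f}$ circuits of constant depth that return all ancillary qubits to zero. Replace each fanout by the exact constant-depth $\qaczero$ implementation discussed at the start of Section~\ref{quantum circuits}.

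\item Implement the Clifford factors $C_1$ and $C_2$. Each Clifford lies in level two of the hierarchy, so Proposition~\ref{Clifford hierarchy implementation} with $l=2$ gives an adaptive implementation of depth $O(1)$. Alternatively, and more simply, a Clifford unitary equals, up to global phase, a product of a constant number of layers. These are Hadamard layers on chosen qubits, diagonal layers (phases built from $S$ and $CZ$) and permutation layers (CNOT networks, which are linear reversible maps). This is the standard bounded-layer normal form, e.g.\ H--CX--CZ--P--H--\ldots{} decompositions. By Proposition~\ref{HPP equivalence}, or directly via the diagonal and permutation propositions, each such layer has a constant-depth exact $\qaczero$ implementation. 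Hadamards on a subset of qubits are single-qubit gates, so they cause no difficulty.

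\item Compose the three implementations. All ancillae return to zero in the unitary parts, and in the adaptive route the teleportation output is exact for every outcome, including on entangled inputs. The composition therefore has constant depth, with the absorbed global phase put into one single-qubit gate.
\end{enumerate}

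The main obstacle is pinning down the definition and the decomposition. I need to justify that a generalised semi-Clifford gate admits the form $C_1\Pi\Delta C_2$ exactly, with $\Pi,\Delta$ arbitrary and not required to be of bounded level. The definition as ``sends the Pauli subgroup generated by one maximal abelian set of $Z$-type operators into the Pauli group after Clifford conjugation'' needs to be unpacked. I would first cite the known theorem (Beigi--Shor, Zeng--Chen--Chuang) that such gates are exactly the Clifford conjugates of diagonal-times-permutation gates. I would then check that nothing beyond that structure is used. If only the hierarchy-restricted version is intended, the adaptive route of Proposition~\ref{Clifford hierarchy implementation} gives the result directly at depth $O(l)$.
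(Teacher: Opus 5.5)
Your main line is the paper's proof. The paper likewise writes a generalised semi-Clifford gate as a product of Clifford, permutation and diagonal factors, and cites Propositions~\ref{Any permutation in constant depth}, \ref{diagonal phase unitaries in constant depth} and~\ref{Clifford hierarchy implementation}, the last at $l=2$ for the Clifford factors.

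You depart from the paper in your alternative treatment of the Clifford factors, and that alternative is correct and proves more. A constant-stage canonical form writes every Clifford, up to a global phase, as a bounded number of layers. One example is the Aaronson--Gottesman form H--C--P--C--P--C--H--P--C--P--C, possibly followed by a layer of Pauli gates. The layers are single-qubit gates, diagonal unitaries, and linear reversible maps, which are permutations of computational basis states. Each layer therefore has an exact constant-depth implementation by the same two propositions, exactly as in Proposition~\ref{HPP equivalence}. Every generalised semi-Clifford gate thus has an exact \emph{unitary} constant-depth implementation in unbounded size $\qaczero$ that returns all ancillary qubits to zero, so the intermediate measurements in the statement are unnecessary.

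The paper's route avoids citing an external normal form and reuses the teleportation mechanism that also handles higher levels of the hierarchy. That same mechanism is what forces the adaptive measurement model into the corollary.

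One small correction to how you unpack the definition. The condition you paraphrase, a maximal abelian Pauli subgroup mapped into the Pauli group after Clifford conjugation, is the semi-Clifford condition; it only produces Clifford--diagonal--Clifford factorisations. The generalised notion asks instead that the linear span of such a subgroup be preserved. Unitaries whose conjugation preserves the diagonal algebra are exactly the monomial ones, and this is where the factor $\Pi\Delta$ comes from. Your argument covers both cases, so this does not affect correctness.
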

\begin{proof}
    Since generalised semi-Cliffords are products of Clifford gates, permutation gates, and diagonal gates, this follows from Proposition \ref{Any permutation in constant depth}, Proposition \ref{diagonal phase unitaries in constant depth}, and Proposition \ref{Clifford hierarchy implementation}.
\end{proof}

\begin{prop}
\label{traceless involutions}
Let $S$ be the set of traceless unitary involutions. If every element of $S$ has an implementation by unbounded size $\qaczero$ circuits with adaptive measurements and a common constant depth bound, then every unitary has such an implementation.
\end{prop}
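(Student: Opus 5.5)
The plan is to use gate teleportation, as in the proof of Proposition~\ref{Clifford hierarchy implementation}, but to stop after a single round. Let $U$ be an arbitrary $n$-qubit unitary.

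First, we prepare the Choi state $(I\otimes U)2^{-n/2}\sum_x\ket{x}\ket{x}$ using Proposition~\ref{Any state in constant depth}. This is a fixed known state, so it needs constant depth and no implementation of $U$. Next, we perform Bell measurements between the input and the first half of the Choi state, in constant depth as before. For the outcome labelled by a Pauli operator $P$, the remaining register holds $UP\ket{\psi}$ up to a global phase, including when the input is entangled with a reference system. The required correction is therefore
\[
W_P=UP^\dagger U^\dagger .
\]

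The key step is to check that each $W_P$ with $P\neq I$ lies in $S$, up to a global phase we may choose freely. Write $P=\omega Q$, where $Q$ is a Hermitian tensor product of Pauli matrices and $\omega$ is a phase. Then $W_P=\bar\omega\,UQU^\dagger$. The operator $UQU^\dagger$ is unitary and satisfies
\[
(UQU^\dagger)^2=UQ^2U^\dagger=I .
\]
Its trace is $\operatorname{tr}Q=0$, since $Q$ is a nonidentity Pauli product. So $UQU^\dagger\in S$, and the scalar $\bar\omega$ is an irrelevant global phase that can also be absorbed into a single-qubit gate. For the outcome $P=I$ no correction is needed.

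After the measurement, the classical controller chooses the assumed implementation of $UQU^\dagger$, whose depth is bounded by the common constant for $S$. Because that bound is uniform over all of $S$, the total depth is a constant independent of $U$ and $n$. All ancillary requirements follow the adaptive-measurement convention, so measured qubits and records may be discarded.

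The main obstacle is bookkeeping rather than depth. One must track the phase conventions so that the correction really is exactly an involution, not merely an involution up to phase. One must also confirm that the assumed implementations of elements of $S$ are valid on entangled inputs, as the measurement model requires. Both are handled by the convention that global phases are physically irrelevant and by linearity. Finally, if the implementations of $S$ return ancillas to zero, the composition is straightforward.
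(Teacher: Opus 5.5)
Your proposal is correct and is essentially the paper's proof. It uses one round of gate teleportation through the Choi state of $U$ and represents each Pauli outcome by a Hermitian element of $\{I,X,Y,Z\}^{\otimes n}$, with its phase absorbed into the branch, so that for $Q\neq I$ the correction $UQU^\dagger$ is a traceless involution and the uniform depth bound applies. One detail is missing: Theorem~\ref{Any state in constant depth} is a $\mathsf{QAC^0_f}$ construction, so to stay within unbounded size $\qaczero$ you must replace its fanout gates by the exact constant-depth circuits of Lemma~\ref{pbt-fanout-depth}, as the paper does.
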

\begin{proof}
Let $U$ be an arbitrary unitary. We prepare its Choi state in unbounded size $\qaczero$ using Proposition \ref{Any state in constant depth}, replacing fanout gates by their exact implementations if necessary. We then perform gate teleportation as in the preceding proof.

We choose the Pauli representative for each outcome in $\{I,X,Y,Z\}^{\otimes n}$, absorbing its overall phase into that measurement branch. Then $P^\dagger=P$ and $P^2=I$. If $P=I$, we need no correction. Otherwise $\operatorname{tr}P=0$, so the correction $UPU^\dagger$ is a traceless unitary involution. By hypothesis, we can implement this correction with a common constant depth bound. Selecting the correction according to the measurement outcome gives the required adaptive circuit.
\end{proof}

We can also give a direct unitary reduction using one additional clean qubit. For any unitary $U$, the block matrix
\[
\begin{pmatrix}
0&U^\dagger\\
U&0
\end{pmatrix}
\]
is Hermitian, has trace zero and squares to the identity. Moreover,
\[
(X\otimes I)
\begin{pmatrix}
0&U^\dagger\\
U&0
\end{pmatrix}
\bigl(\ket{0}\ket{\psi}\bigr)
 =\ket{0}\,U\ket{\psi}.
\]
Given a unitary implementation of this involution, we follow it with $X$ on the extra qubit. The resulting circuit implements $U$ and returns that qubit to zero. Any additional qubits used to implement the involution also return to zero under Definition \ref{clean quantum implementation}. The reduction uses one extra qubit beyond the working qubits of the involution circuit.

\paragraph{Concluding remarks.}
We have proved that each of these tasks is sufficient for arbitrary unitary implementation. Each is also necessary if every unitary has an implementation with a common constant depth bound. Every displayed map sends an orthonormal set of inputs to an orthonormal set of outputs, so it extends to a unitary on all the registers. For copying, the first copy already makes the output states orthogonal. For the permuted lists, inputs $i$ and $k\ne i$ give orthogonal states in row $i$, since $\sigma_i(i)=i$ and $\sigma_k(i)\ne i$. For basis labels, the labels themselves are orthogonal. Thus each task is a special case of unitary implementation, and all the reductions use unitary circuits.

\subsection{Preparing pure states}

We prepare quantum states using the same probabilities and prefix
values. Single-qubit rotations replace the stochastic gates.
After computing the prefixes, we apply inverse rotations to the
qubits following the first 1. These operations return the later
qubits to zero while preserving the amplitudes associated with
each possible position of the first 1.

Both depth bounds are independent of the input length and the
target distribution or state. The elementary gate counts are
$O(n2^n)$, while the total number of gate inputs is $O(4^n)$.

For the quantum construction we may replace fanout by exact
circuits using only single-qubit and generalised Toffoli gates.
This preserves constant depth. However, a fanout in the prefix
circuit can act on $2^n-1$ qubits, so this replacement can give
a doubly exponential size bound. The $O(n2^n)$ gate bound applies
when fanout is included in the gate set.

\section{Towards constant-depth unitary implementation via port-based teleportation}
\label{sec-pbt-fixed}

Our state preparation results allow us to prepare the Choi state
of any unitary in constant depth. The remaining question is how
to use such a state to apply the unitary to an unknown input.
In the gate teleportation construction of
Proposition~\ref{Clifford hierarchy implementation}, a measurement
result produces $UP\ket{\psi}$, where $P$ is a Pauli operator
determined by that result. Obtaining $U\ket{\psi}$ then requires
the correction $UP^\dagger U^\dagger$. For a general $U$, our
constructions do not provide a constant-depth implementation
of this correction, so we turn to another method, which requires no corrections depending on $U$.

Port-based teleportation (PBT) removes the requirement to correct the teleported state at the expense of transmitting quantum information imperfectly~\cite{pbt-ishizaka2008}. In this protocol,  Alice performs a joint measurement on the state to be teleported and her halves of maximally entangled pairs. The measurement result identifies one of Bob's registers, called a port, which he selects as the output. Replacing the entangled pairs by copies of the Choi
state of $U$ gives an approximate implementation of $U$ using
the same measurement on Alice's registers. 

The resource state (typically a collection of maximally entangled pairs) is a known pure state and can be prepared exactly
in constant depth. We place the selected port in a fixed output register by swapping it with that register while retaining the port label. This operation is a permutation of computational basis states, so our permutation construction implements it in constant depth. We thus only need to determine the depth required to implement the initial measurement. We will need to realize unitarily
and retain all the qubit registers so that the circuit and its inverse remain available for amplitude amplification.

We construct a protocol whose depth is $O(\sqrt d)$ for an input
of dimension $d$, independently of the number of ports. For every
fixed $d$ we can make its entanglement fidelity arbitrarily close
to one by increasing the number of ports without increasing the
depth bound. This establishes the result for each fixed input
dimension and we believe that this dependence on $d$ cannot be further improved with existing techniques. We leave open the question of optimality of this the dependence on $d$. Throughout this section we allow for an unbounded number of ancillae.

\subsection{Teleportation circuit and fidelity bound}

Let $C$ be the input state of dimension $d$. Alice and Bob share
$M$ maximally entangled pairs
\begin{equation}
    \ket{\Omega}
    =\bigotimes_{i=1}^M\ket{\Phi_d}_{A_iB_i},
    \qquad
    \ket{\Phi_d}=\frac1{\sqrt d}\sum_{a=0}^{d-1}\ket{a,a}.
    \label{pbt-resource}
\end{equation}
Alice holds $C,A_1,\ldots,A_M$ and Bob holds $B_1,\ldots,B_M$.
Here $d$ refers to the input register $C$ whose state we
wish to teleport. Each register $A_i$ and $B_i$ also has dimension
$d$. For an $n$-qubit input, $d=2^n$. If $d$ is not a power of two,
we represent each register by $\lceil\log_2 d\rceil$ qubits and use
only the computational basis states $\ket{0},\ldots,\ket{d-1}$.
All the identities below apply to the subspaces spanned by these states. Our circuit is defined on the full qubit registers, but its action outside these subspaces does not affect the teleportation protocol.

Alice holds the input register $C$ and the registers
$A_1,\ldots,A_M$. Bob holds the output ports $B_1,\ldots,B_M$.
Each pair $A_iB_i$ is initially in the maximally entangled state
$\ket{\Phi_d}$. We implement Alice's teleportation measurement by a unitary circuit acting on $C,A_1,\ldots,A_M$ and ancillary qubits.
The circuit records a port label. Together with the other
registers on Alice's side, this label specifies the output port.
For results that do not identify a port, we select port~1
so that the protocol always produces an output.
A permutation of computational basis states places the selected
port in a fixed output register. Tracing out all other registers
then defines the teleportation channel.

We identify the selected port with a fixed output register $B$ and
denote the resulting channel by $\mathcal E$. We assess its fidelity
by applying it to one half of a maximally entangled state while
leaving the reference $R$ unchanged. The entanglement fidelity is
\begin{equation}
    F_{\mathrm e}(\mathcal E)
    =\bra{\Phi_d}_{RB}
       (\operatorname{id}_R\otimes\mathcal E)
       (\ket{\Phi_d}\!\bra{\Phi_d}_{RC})
      \ket{\Phi_d}_{RB}.
    \label{pbt-fidelity-definition}
\end{equation}
We use the gates and depth convention of
Section~\ref{circuit-conventions}, with disjoint supports in every layer.
All additional qubits start in zero. The qubits discarded when defining
$\mathcal E$ need not return to zero. Our circuit uses no intermediate measurements. We prove the following result:

\begin{thm}
\label{pbt-main}
For every input dimension $d\geq2$ and $M\geq d^2-1$ ports,
there is a unitary circuit of depth $O(\sqrt d)$ that implements
deterministic PBT using only single-qubit
and generalised Toffoli gates. The PBT uses maximally entangled states~\eqref{pbt-resource}, and has entanglement fidelity
\begin{equation}
    F_{\mathrm e}(\mathcal E)
    \geq\left(1-\frac{d^2-1}{2M}\right)^2.
    \label{pbt-main-fidelity}
\end{equation}
The depth bound is independent of $M$ and includes resource
preparation and output selection. No intermediate measurements
are required.
\end{thm}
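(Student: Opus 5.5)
The plan is to realise Alice's port-based measurement coherently as a block-encoding over the port label, and then boost it with amplitude amplification. Each amplification round will have constant depth, and the number of rounds will depend only on $d$. The fidelity bound in \eqref{pbt-main-fidelity} has the form of a squared amplitude, so I will aim to show that the final state overlaps the ideal teleported state with amplitude at least $1-(d^2-1)/(2M)$.

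\textbf{Step 1 (resources and constant-depth building blocks).} First prepare the resource state \eqref{pbt-resource}. It is a tensor product of $M$ copies of $\ket{\Phi_d}$, each prepared in parallel by Theorem~\ref{Any state in constant depth}. Next prepare a port register in the uniform superposition $M^{-1/2}\sum_i\ket{i}$, again by Theorem~\ref{Any state in constant depth}. The basic controlled operation is: conditioned on the label $i$, swap $A_i$ into a fixed register $A$. This map keeps the label and permutes computational basis states, so Proposition~\ref{Any permutation in constant depth} implements it in constant depth, independently of $M$. Fanout is replaced by its exact constant-depth $\qaczero$ implementation from Section~\ref{quantum circuits}. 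With $A_i$ moved to $A$, apply the inverse Bell-basis change on $CA$ and a flag Toffoli testing whether $CA$ is in $\ket{\Phi_d}$, then undo the basis change and the swap. The composite $W$ has constant depth. It block-encodes $\Pi=\frac1M\sum_i\Phi_{CA_i}$, correlated with the label $i$, inside the flag-$1$ subspace.

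\textbf{Step 2 (success amplitude and amplification).} Compute the flag-$1$ amplitude of $W$ applied to the joint state of reference, input and resource. The target is to show it equals, up to the fidelity loss, roughly $1/\sqrt d$. The intended mechanism is that the $1/d^2$ overlap per port is enhanced coherently by summing over ports in the relevant Schur--Weyl irreducible components. Then apply fixed-point or oblivious amplitude amplification with $O(\sqrt d)$ reflections. Each reflection uses $W$, $W^\dagger$ and reflections about zero registers, which are single generalised Toffolis conjugated by NOTs. The total depth is therefore $O(\sqrt d)$, and the constant does not depend on $M$. All of these operations are unitary, which is why every register is retained, as stated before the theorem. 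Finally, swap the port named by the label into the fixed output register $B$ using Proposition~\ref{Any permutation in constant depth}. If the flag has failed, select port~$1$, which is also a basis permutation.

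\textbf{Step 3 (fidelity).} After amplification, the post-selected action is the square-root-type PBT measurement restricted to the amplified subspace. Bound its entanglement fidelity using the standard representation-theoretic analysis of the operator $\sum_i\Phi_{CA_i}$ on $(\mathbb C^d)^{\otimes(M+1)}$, following Beigi--König-style estimates. The quantity to bound is the overlap amplitude $\frac{1}{d}\sum_i\operatorname{tr}\bigl(\sqrt{\rho_i}\,\cdots\bigr)$, and the target lower bound on it is $1-(d^2-1)/(2M)$; squaring gives \eqref{pbt-main-fidelity}. The hypothesis $M\geq d^2-1$ keeps the amplitude bound nonnegative.

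\textbf{Expected main obstacle.} The hardest step is Step~2: establishing that the coherent block-encoding has success amplitude of order $1/\sqrt d$ uniformly in $M$, and that amplification to success produces the square-root-measurement post-measurement state rather than a distorted one. I would first diagonalise $\sum_i\Phi_{CA_i}$ via the Young-diagram decomposition, where its eigenvalues are known. Then I would check that its normalised block-encoding has a single dominant singular value on the support of the input. In that case ordinary amplitude amplification applies exactly, possibly after a final partial rotation. If the singular values instead spread, I would replace ordinary amplification by a low-degree quantum singular value transformation polynomial of degree $O(\sqrt d)$ approximating the square root. I would then absorb its error into the $(d^2-1)/(2M)$ term.
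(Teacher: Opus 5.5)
Your Step~2 cannot succeed with the block-encoding built in Step~1, and this is where the $O(\sqrt d)$ depth has to come from. Because the port label is prepared uniformly and retained, the flag-$1$ part of $W$ is $M^{-1/2}\sum_i\ket{i}T_i=T/\sqrt M$, with success operator $S/M$ where $S=\sum_iP_i$. On Alice's reduced state $\rho_C\otimes I/d^M$ each $P_i$ has expectation $d^{-2}$. Branches with distinct labels are orthogonal, so the success probability is exactly $\tau(S)/M=d^{-2}$, with $\tau$ the normalised trace. Schur--Weyl analysis only describes the spectrum of $S$, whose bulk sits at $M/d^2$, i.e.\ at amplitude $1/d$. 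It cannot produce the coherent enhancement to $1/\sqrt d$ you hope for.

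Amplitude amplification (fixed-point or oblivious) and any QSVT built from $D$ uses of $W,W^\dagger$ realise an odd polynomial $p$ with $|p|\le1$ on $[-1,1]$. Bernstein's inequality gives $|p(\sigma)|\le D\arcsin\sigma\le\frac\pi2D\sigma$. After amplification the success probability is therefore at most $\pi^2D^2/(4d^2)$. Successful results contribute at most their probability to $F_{\mathrm e}$, and failures sent to port~$1$ contribute at most $\tau(P_1)=d^{-2}$. So with $D=O(\sqrt d)$ the fidelity is $O(1/d)$, and both your main route and your QSVT fallback fail.

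The missing idea is a block-encoding of $T$ with normalisation $\sqrt L$, where $L=(M+d-1)/d\approx M/d$, instead of $\sqrt M$. The paper obtains it from two label-retaining conditional state preparations, the second applied in reverse. $V_{\mathrm{in}}$ spreads each computational-basis input $x=(c;a_1,\ldots,a_M)$ uniformly over the $r(x)$ ports with $a_i=c$. $V_{\mathrm{out}}$ weights the $d$ inputs compatible with an output label $y$ by $\sqrt{r(x_b(y))/(M+d-1)}$. Every compatible overlap is then $1/\sqrt{M+d-1}$, so $V_{\mathrm{out}}^\dagger V_{\mathrm{in}}=T/\sqrt L$. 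The initial success probability is $\mu=M/(d(M+d-1))\ge1/(d+1)$, and $k=O(\sqrt d)$ rounds suffice.

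Step~3 also rests on a false premise. Amplification does not produce the square-root measurement, so Beigi--K\"onig estimates for the pretty good measurement do not apply. The eigenvalues of $Y=d^2S/M$ have variance $(d^2-1)/M$, so no fixed number of rounds sends every singular value to one. After $q$ rounds the successful operators are $K_i=T_iG(S)$ with $G(s)=\sin(k\arcsin(a\sqrt{d^2s/M}))/\sqrt s$, not $T_iS^{-1/2}$. A polynomial whose degree depends only on $d$ can mimic $s^{-1/2}$ only near the bulk of the spectrum, so you need a fidelity bound that tolerates this distortion on average.

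The paper does this in four steps:
\begin{itemize}
\item It attenuates by $c=a/\sqrt\mu$ so that the amplitude at the mean eigenvalue $M/d^2$ is exactly $a=\sin(\pi/(2k))$.
\item It writes $F_{\mathrm s}=\sum_i\tau((P_iG(S)P_i)^2)$ and applies Cauchy--Schwarz to get $F_{\mathrm s}\ge[\tau(h_k(Y))]^2$, where $h_k(y)=\sqrt y\sin(k\arcsin(a\sqrt y))$.
\item It proves $h_k(y)\ge(3y-y^2)/2$ on $[0,a^{-2}]$.
\item It concludes with $\tau(Y)=1$ and $\tau((Y-I)^2)=(d^2-1)/M$.
\end{itemize}

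Your remaining ingredients match the paper's depth accounting: parallel preparation of the pairs, port selection as a basis permutation, reflections as conjugated Toffolis, and exact replacement of fanout.
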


\begin{cor}
\label{pbt-accuracy}
For every fixed $d\geq2$ and $0<\varepsilon<1$, we obtain entanglement
fidelity at least $1-\varepsilon$ with
\begin{equation}
    M=\left\lceil\frac{d^2-1}{\varepsilon}\right\rceil
\end{equation}
and a depth bound independent of $\varepsilon$. For fidelity at least
$1/4$, it suffices to take $M=d^2-1$.
\end{cor}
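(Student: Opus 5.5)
The corollary follows directly from Theorem~\ref{pbt-main}. The plan is to check that the stated choice of $M$ satisfies the theorem's hypothesis $M\geq d^2-1$, and then to bound the right-hand side of \eqref{pbt-main-fidelity} below by $1-\varepsilon$. The depth claim needs nothing extra: Theorem~\ref{pbt-main} gives a depth bound $O(\sqrt d)$ that does not depend on $M$, so once $d$ is fixed it does not depend on $\varepsilon$ either.

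\textbf{Hypothesis.} Put $M=\lceil (d^2-1)/\varepsilon\rceil$. Since $0<\varepsilon<1$,
\[
M\geq \frac{d^2-1}{\varepsilon}\geq d^2-1,
\]
so Theorem~\ref{pbt-main} applies.

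\textbf{Fidelity bound.} From $M\geq (d^2-1)/\varepsilon$ we get
\[
0\leq\frac{d^2-1}{2M}\leq\frac{\varepsilon}{2}<\frac12 .
\]
Hence $1-\frac{d^2-1}{2M}\geq 1-\frac{\varepsilon}{2}>0$. Both sides of this inequality are positive, so squaring preserves it. Combining with \eqref{pbt-main-fidelity},
\[
F_{\mathrm e}(\mathcal E)\geq\Bigl(1-\frac{\varepsilon}{2}\Bigr)^2=1-\varepsilon+\frac{\varepsilon^2}{4}\geq 1-\varepsilon .
\]

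\textbf{The case $M=d^2-1$.} This value is admissible in Theorem~\ref{pbt-main}, and then $(d^2-1)/(2M)=1/2$. The bound \eqref{pbt-main-fidelity} therefore gives $F_{\mathrm e}\geq (1/2)^2=1/4$.

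No step here is a real obstacle. The only care needed is to use the positivity of $1-(d^2-1)/(2M)$ before squaring, which is guaranteed because $M\geq d^2-1$.
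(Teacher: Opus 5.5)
Your proposal is correct and follows the paper's own argument: it bounds the right-hand side of \eqref{pbt-main-fidelity} below by $(1-\varepsilon/2)^2\geq 1-\varepsilon$, evaluates it as $1/4$ at $M=d^2-1$, and uses the $M$-independent depth bound of Theorem~\ref{pbt-main}. Beyond what the paper writes, you also make explicit the check that $M\geq d^2-1$ and the positivity of $1-(d^2-1)/(2M)$ needed before squaring; the paper leaves both implicit.
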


\begin{proof}
In the first case, the right-hand side of
\eqref{pbt-main-fidelity} is at least
$(1-\varepsilon/2)^2\geq1-\varepsilon$. In the second case it equals
$1/4$.
\end{proof}

We prove the theorem in three steps.
\begin{enumerate}
    \item We construct Alice's measurement using two state preparation
circuits, with the second applied in reverse. Each preparation
depends on a computational basis label that is retained without
being measured. We choose the prepared states so that their
overlaps give the required measurement amplitudes. The total probability of success will approach $1/d$ as $M$ grows.
\item We apply amplitude amplification to Alice's registers,
using a number of steps that depends only on $d$. This increases
the probability of success but can change the conditional output
state, so we must bound the teleportation fidelity.
    \item We bound its entanglement fidelity using the first two
    moments of a positive operator and a scalar inequality.
    The bound approaches one as $M$ grows.
\end{enumerate}
Figure~\ref{figure-pbt-process} illustrates the construction.
Both state preparation circuits have depth bounded independently
of $d$ and $M$ when circuit size and the number of ancillary qubits
are unrestricted. The number of amplification steps depends only
on $d$, giving a total depth bound independent of $M$.
After proving the fidelity bound, we detail the supporting circuit
constructions and the associated depth calculation in
Section~\ref{pbt-depth-section}.

\begin{figure}[tbp]
    \centering
    \resizebox{\linewidth}{!}{%
\begin{tikzpicture}[
    x=1cm,y=1cm,
    line width=0.65pt,
    every node/.style={font=\small},
    wire/.style={-{Stealth[length=2mm]},draw=black!80},
    resource/.style={draw=black!55,fill=black!3,rounded corners=2pt},
    alice/.style={draw=blue!55!black,fill=blue!4,rounded corners=2pt},
    select/.style={draw=teal!65!black,fill=teal!5,rounded corners=2pt},
    stage/.style={draw=blue!55!black,fill=blue!4,rounded corners=2pt,
                  minimum height=1.05cm,align=center,inner sep=5pt}
]
    \node[anchor=west,font=\small\bfseries] at (0,5.65)
        {(a) Teleportation circuit};

    \draw[resource] (0, -0.55) rectangle (2.45,2.55);
    \node at (1.225,2.05) {$A_1,\ldots,A_M$};
    \node[align=center] at (1.225,1.1)
        {$M$ shared pairs\\[3pt]$\ket{\Omega}$};
    \node at (1.225,0) {$B_1,\ldots,B_M$};

    \node[align=center] (input) at (1.225,3.55)
        {Unknown input\\[2pt]$C$};
    \node[align=center] (work) at (5.8,5.0)
        {Ancillary qubits\\[2pt]$\ket{0\cdots0}$};

    \draw[alice] (4.2,0.8) rectangle (7.4,4.0);
    \node[font=\small\bfseries] at (5.8,3.55) {Alice};
    \node[font=\large] at (5.8,2.92) {$U_q$};
    \node[align=center,text width=2.7cm] at (5.8,1.9)
        {Apply $V$, then\\[3pt]$q$ amplification\\steps};

    \draw[wire] (input.east) -- (4.2,3.55);
    \draw[wire] (work.south) -- (5.8,4.0);
    \draw[wire] (2.45,2.05) -- (4.2,2.05);

    \draw[select] (10.1,-0.55) rectangle (12.7,3.35);
    \node[align=center,font=\small\bfseries] at (11.4,2.63)
        {Select output\\port};
    \node[align=center,text width=2.2cm] at (11.4,1.2)
        {Port $i$\\on success\\[6pt]Port $1$\\otherwise};
    \draw[wire] (7.4,2.05) -- (10.1,2.05)
        node[midway,above=5pt,align=center] {Alice registers};
    \draw[wire] (2.45,0) -- (10.1,0)
        node[midway,below=5pt] {Bob's ports};

    \draw[wire] (12.7,2.05) -- (13.35,2.05);
    \node[anchor=west,align=left] at (13.4,2.05)
        {Discard\\remaining\\registers};
    \draw[wire] (12.7,0) -- (13.35,0);
    \node[anchor=west,align=left] at (13.4,0) {Output $B$};
    \node[anchor=west,font=\footnotesize] at (13.4,-0.48)
        {No correction};

    \node[anchor=west,font=\small\bfseries] at (0,-1.65)
        {(b) The initial circuit $V$};
    \node[stage,minimum width=3.8cm] (first) at (2.4,-2.75)
        {First preparation};
    \node[stage,minimum width=4.4cm] (second) at (7.6,-2.75)
        {Reverse second\\preparation};
    \node[stage,minimum width=3.4cm] (flag) at (12.7,-2.75)
        {Single-qubit\\rotation};
    \draw[wire] (first.east) -- (second.west);
    \draw[wire] (second.east) -- (flag.west);
    \node[align=center,font=\footnotesize] at (7.6,-3.65)
        {$V$ and $V^\dagger$ are used throughout the amplification};
\end{tikzpicture}
}
    \caption{Unitary circuit for port-based teleportation.
(a) Alice and Bob share $M$ maximally entangled pairs.
Alice applies $U_q$ to the input register $C$, her registers
$A_1,\ldots,A_M$ and ancillary qubits. A permutation of
computational basis states places the selected port in a fixed
output register. The circuit selects port $i$ when the success
condition holds and port~1 otherwise. Tracing out all other
registers gives the channel $\mathcal E$. The arrows represent
quantum registers. The circuit uses no intermediate measurements
and applies no correction to the selected port.
(b) We construct $V$ by applying the first conditional state
preparation, the inverse of the second, and a single-qubit
rotation that adjusts the success amplitude. Amplitude
amplification uses $V$, $V^\dagger$ and the two reflections
defined in Section~\ref{pbt-amplification-section}.
Before amplification, every successful result gives the exact
input state at the selected port. After amplification, the
output can differ from the input, with entanglement fidelity
bounded as in Theorem~\ref{pbt-main}.}
    \label{figure-pbt-process}
\end{figure}
\clearpage

\subsection{The initial measurement}
\label{pbt-initial-section}

We first construct a measurement for which certain results
teleport the input exactly to an identified port. We call these
results successful. This measurement is an intermediate step
in the construction. The final deterministic protocol assigns
an output port to every result, including those outside the
successful set.

For a given port $i$, projecting $C,A_i$ onto
$\ket{\Phi_d}$ transfers the input state to $B_i$ without a
correction. We describe this operation by
\begin{equation}
    T_i=\bra{\Phi_d}_{CA_i}\otimes I_{A_{\ne i}},
    \label{pbt-measurement-operator}
\end{equation}
where $A_{\ne i}$ denotes all Alice registers $A_j$ with
$j\ne i$. After the projection, the registers $C,A_i$ are in
the fixed state $\ket{\Phi_d}$. The operator $T_i$ describes
the unnormalised state of the remaining registers, with this
fixed factor omitted.

The teleportation identity is
\begin{equation}
    (\bra{\Phi_d}_{CA_i}\otimes I_{B_i})
       (\ket{\psi}_C\ket{\Phi_d}_{A_iB_i})
    =\frac1d\ket{\psi}_{B_i}.
    \label{pbt-teleportation-identity}
\end{equation}
After normalisation, the state at $B_i$ is exactly the input
state. This identity also holds when the input is entangled
with a reference, with the identity acting on that reference.

We introduce a register for the port label and combine these
operators into $T=\sum_{i=1}^M\ket{i}T_i$.

The output contains the label $i$ and the remaining $M-1$
Alice registers in their original order. Let  $P_i=\ket{\Phi_d}\!\bra{\Phi_d}_{CA_i}\otimes I_{A_{\ne i}},
     S=\sum_{i=1}^M P_i,
     L=\frac{M+d-1}{d}$.
This makes $T_i^\dagger T_i=P_i$ and $T^\dagger T=S$.

We will construct a measurement whose successful operators
are $T_i/\sqrt L$. Equivalently, the part of the unitary circuit
associated with successful results implements $T/\sqrt L$.
The construction will also establish $S/L\leq I$, which is necessary
for these operators to form a measurement.

To construct the circuit, we first express $T$ in the
computational basis. For an input label
$x=(c;a_1,\ldots,a_M)$,
\begin{equation}
    T\ket{c;a_1,\ldots,a_M}
    =\frac1{\sqrt d}\sum_{i:a_i=c}\ket{i;a_{\ne i}}.
    \label{pbt-basis-action}
\end{equation}
Each term corresponds to a position where $a_i=c$. The
coefficient $1/\sqrt d$ is the amplitude of $\ket{c,c}$ in
$\ket{\Phi_d}$.

We obtain the required amplitudes from two state preparation
circuits, with the second applied in reverse. Each preparation
depends on a computational basis label that is retained
without being measured. The following lemma gives the
required depth bound.

\begin{lem}
\label{pbt-conditional-preparation}
For every finite family of normalised states
$\{\ket{\psi_x}:x\in\zo^r\}$ on $m\geq1$ qubits with nonnegative
computational basis coefficients, the map
\begin{equation}
    \ket{x}\ket{0^m}\longmapsto\ket{x}\ket{\psi_x}
    \label{pbt-conditional-map}
\end{equation}
has an exact implementation of depth at most $42$ with fanout
included in the gate set. Every additional work qubit is returned to $|0\rangle$.
\end{lem}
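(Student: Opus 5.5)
The plan is to rerun the construction of Theorem~\ref{Any state in constant depth} with every $x$-dependent parameter supplied coherently by the label register, which is kept throughout. Write $N=2^m-1$ and $\ket{\psi_x}=\sum_j\sqrt{p_{x,j}}\ket{\mathrm{bin}(j)}$ with $p_{x,j}\geq0$. Nonnegative coefficients mean the final diagonal phase-restoration stage is not needed. I also intend to drop the XOR relabelling, because that relabelling depends on $x$ and would cost a whole permutation circuit. To do so I define $q_{x,i}=p_{x,i}/(p_{x,0}+p_{x,i}+\cdots+p_{x,N})$ whenever the denominator is positive, and set $q_{x,i}=0$ otherwise. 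The telescoping identities $p_{x,j}=q_{x,j}\prod_{i<j}(1-q_{x,i})$ and $p_{x,0}=\prod_i(1-q_{x,i})$ still hold, and $q_{x,j}=1$ is now allowed. With $p_{x,0}=0$ the all-zero term simply has zero amplitude. Thus Steps 1--6 of that proof go through for every $x$ without the assumption $p_0>0$.

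The only parts of that proof that depend on the target are the rotations $R_i$ of Step 1 and the inverse rotations $R_i^\dagger$ of Step 3. The prefix circuit of Lemma~\ref{prefix-values-linear-gates}, the $\revor$ stage and the inverse indicator circuit of Lemma~\ref{permutation to indicator in constant depth} do not depend on $x$ and can be reused unchanged. I therefore need the uniformly controlled rotation $\sum_x\ket{x}\bra{x}\otimes R_{y}(\theta_{x,i})$ on qubit $i$, for all $i$ in parallel.

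To build it, I first fan out the label into $N$ copies in one layer, one copy per target qubit $i$. Next I conjugate qubit $i$ by a fixed single-qubit Clifford that maps $Y$ to $Z$. The controlled rotation then becomes the diagonal unitary $\ket{x}\ket{t}\mapsto e^{-i(-1)^t\theta_{x,i}/2}\ket{x}\ket{t}$ on the $r+1$ qubits formed by copy $i$ of $x$ and qubit $i$. Proposition~\ref{diagonal phase unitaries in constant depth} implements each of these diagonal unitaries in depth seven. The copies have disjoint supports, so all $N$ run in parallel.

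The controlled inverse rotations of Step 3 are controlled by both $x$ and prefix qubit $i$. I handle them the same way, with one diagonal unitary on copy $i$ of $x$, prefix qubit $i$ and target qubit $i$. This replaces the two-CNOT trick of Step 3 and also takes about nine layers. The label copies are cleared only at the very end, by reapplying the fanout. On basis inputs $\ket{x}$ the whole circuit is exactly the earlier construction with parameters $q_{x,i}$, so linearity gives the map~\eqref{pbt-conditional-map}. All work qubits return to $\ket0$ because every stage of the earlier proof and of Proposition~\ref{diagonal phase unitaries in constant depth} does so on each basis input.

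The main obstacle is the depth accounting, which must come out at $42$ or less. The naive sum is $1$ (fanout of $x$), $9$ (Step 1), $5$ (Step 2), $9$ (Step 3), $5$ (Step 4), $4$ (Step 5), $10$ (Step 6) and $1$ (unfanout), for a total of $44$. To save layers I would merge the Clifford conjugations into neighbouring single-qubit layers, and overlap the initial label fanout with the first layer of the diagonal circuits. I would also place the final unfanout in parallel with the last layers of the decoding stage, since these act on disjoint qubits. If this bookkeeping fails to reach $42$, the fallback is to compute the indicator of $x$ once and share it between the two rotation stages. A secondary point to check is that the Euler decomposition used here is exact, so that each controlled rotation is realised exactly.
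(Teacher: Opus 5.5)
Your proposal is correct and follows essentially the paper's route: retain $x$, drop the relabelling and phase restoration, and implement each of the two $x$-dependent rotation stages as a basis change, a diagonal unitary from Proposition~\ref{diagonal phase unitaries in constant depth} and the reverse basis change, for a total of $9+5+9+5+4+10=42$. The only difference is that the paper treats each stage as one diagonal unitary on the label together with all target (and prefix) qubits, so it needs no separate copies of $x$; your $N$ label copies also add no depth, provided the copying fanout runs in the same layer as the basis changes on the target qubits and the clearing fanout runs in parallel with Steps 4--6, since these act on disjoint qubits.
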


\begin{proof}
Write
\begin{equation*}
    \ket{\psi_x}
    =\sum_{j=0}^{2^m-1}\sqrt{p_j(x)}\ket{j}.
\end{equation*}
We retain $x$ and apply the preparation construction of
Theorem~\ref{Any state in constant depth} with these
probabilities. Its rotation parameters are defined even
when a denominator is zero, so no relabelling is needed
when $p_0(x)=0$.

Only the two rotation stages depend on $x$. In the first
stage we apply a product of real single-qubit rotations
whose angles depend on $x$. These rotations share a fixed
eigenbasis. We change each target qubit to this basis,
apply a diagonal unitary on the label and target registers,
and reverse the basis changes.
Proposition~\ref{diagonal phase unitaries in constant depth}
implements the diagonal unitary in seven layers, giving
nine layers for this stage. Treating the whole stage as
one diagonal unitary allows all rotation angles to depend
on the same label while respecting the requirement that
gates in a layer have disjoint supports.

The second rotation stage applies inverse rotations
conditioned on the prefix bits and $x$. The same basis
changes make this operation diagonal on the label, prefix
and target registers. This stage also takes nine layers.
The phases of both diagonal unitaries are determined by
the known family of states. Their calculation is not part
of the quantum circuit.

The prefix computations, additional indicator and decoding
do not depend on $x$. For each computational basis label
the preparation proof gives the required state and returns
all work qubits to zero. Linearity gives the required
action on superpositions of labels. The count in
Section~\ref{pbt-preparation-depths} gives total depth
at most $42$.
\end{proof}

Retaining the label makes the outputs for distinct labels
orthogonal even when the prepared states $\ket{\psi_x}$
are not orthogonal.

We now apply the lemma twice. The first preparation retains
the input label $x$ and prepares a superposition of the
possible output labels $y=(i;a_{\ne i})$ in
\eqref{pbt-basis-action}. The second retains $y$ and prepares
a superposition of compatible input labels. Applying the
inverse of the second preparation gives amplitudes equal
to the overlaps between the two prepared states. We choose
their coefficients so that every compatible pair of labels
has the same overlap.

\begin{lem}
\label{pbt-factorisation}
Two conditional state preparation circuits, with the second
applied in reverse, give the measurement operator $T/\sqrt L$
for successful results. Each preparation has depth at most
$42$ with fanout included in the gate set.
\end{lem}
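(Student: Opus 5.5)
We realise $T/\sqrt L$ as an overlap between two conditionally prepared states. The first preparation, $V_1$, acts on an input label $x=(c;a_1,\ldots,a_M)$. The second, $V_2$, acts on an output label $y=(i;a'_{\ne i})$, where $a'_{\ne i}$ is a list of $M-1$ values in $[d]$. Both are built from Lemma~\ref{pbt-conditional-preparation}, so each has depth at most $42$ with fanout.

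For the first preparation, let $k(x)=|\{i:a_i=c\}|$. We take
\[
V_1\ket{x}\ket{0}=\ket{x}\ket{\alpha_x},\qquad
\ket{\alpha_x}=\frac{1}{\sqrt{k(x)+d-1}}\Bigl(\sum_{i:a_i=c}\ket{i;a_{\ne i}}\ket{0}+\sum_{b\ne 0}\ket{0;\ldots}\ket{b}\Bigr).
\]
The second sum runs over $d-1$ dummy components, given by a flag register orthogonal to every real output label. These dummies pad the normalisation so that it is independent of $x$ in the required sense. The coefficients are nonnegative, as Lemma~\ref{pbt-conditional-preparation} requires. Similarly, for the second preparation, given $y=(i;a'_{\ne i})$, the compatible inputs are $x=(c;a'_1,\ldots,c,\ldots)$ with $c$ inserted at position $i$. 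There are exactly $d$ of them. We take
\[
V_2\ket{y}\ket{0}=\ket{y}\ket{\beta_y},\qquad
\ket{\beta_y}=\frac1{\sqrt d}\sum_c\ket{x(c,y)}.
\]
The composite circuit applies $V_1$, then a fixed relabelling permutation from Proposition~\ref{Any permutation in constant depth}, then $V_2^\dagger$. The permutation swaps the roles of the retained label and the prepared register. This puts the pair $(x,y)$ into the ordering expected by $V_2^\dagger$, with $y$ as the retained label. The permutation is an involution on basis states, since it only reorders registers.

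We then project onto the component in which the ancilla of $V_2^\dagger$ returns to $\ket0$. The amplitude from $\ket x$ to $\ket y$ is $\langle x,y|$-coefficient of $\ket{\alpha_x}$ times that of $\ket{\beta_y}$. This equals $\bigl(\sqrt{k(x)+d-1}\,\sqrt d\bigr)^{-1}$ for compatible pairs and zero otherwise. The \emph{main obstacle} is that this overlap depends on $k(x)$, whereas \eqref{pbt-basis-action} needs the uniform value $1/\sqrt{dL}=1/\sqrt{M+d-1}$.

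To fix this, I would instead normalise $\ket{\alpha_x}$ with the $x$-independent constant $\sqrt{(M+d-1)/d}\cdot\sqrt{d}/\sqrt{d}$. I would put the excess weight $\sqrt{1-k(x)/(M+d-1)}$ onto a single extra flag state $\ket{\bot}$ that is orthogonal to all real labels. That gives
\[
\ket{\alpha_x}=\frac{1}{\sqrt{M+d-1}}\sum_{i:a_i=c}\sqrt{d}\,\frac{1}{\sqrt d}\ket{i;a_{\ne i}}+\gamma_x\ket{\bot}.
\]
The coefficients are rescaled so that, after combining with the $1/\sqrt d$ of $\ket{\beta_y}$, every compatible pair has overlap exactly $1/\sqrt{dL}$. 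Normalisability requires $k(x)\cdot d/(M+d-1)\le1$ after the rescaling. If this fails for large $k$, we absorb the deficit into the second preparation: $\ket{\beta_y}$ carries weight $\sqrt{d}$ spread over its $d$ compatible inputs, with its own flag holding the remainder. The coefficients are then chosen so that the product of the two per-pair amplitudes is constant. Such a choice exists because every row sum of squares (a fixed $x$) is at most $M/L\le d$, which is the same check as $S/L\le I$. Checking this inequality, equivalently that $\|S\|\le L$ via the symmetry of $S$ under permutations of ports and $U\otimes\bar U$, is where I expect the real work to lie.

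Finally, linearity gives the action on superpositions. Retained labels make the images of distinct $x$ orthogonal before the projection. The flag and ancilla-nonzero components are exactly the unsuccessful results. So the successful block of the unitary equals $T/\sqrt L$, and the depth is $42+O(1)+42$.
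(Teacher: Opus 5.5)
Your framework matches the paper's. You use two label-retaining preparations with the second inverted, define success by all flags and work qubits returning to zero, and read off the successful matrix element $\langle y|\cdots|x\rangle=\beta_y(x)\,\alpha_x(y)$. You also correctly identify the obstacle. But you never resolve it, and resolving it is the whole content of the lemma.

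Your second attempt is inconsistent. With coefficient $1/\sqrt{M+d-1}$ in $\ket{\alpha_x}$ and $1/\sqrt d$ in $\ket{\beta_y}$, the product is $1/\sqrt{d(M+d-1)}$, not $1/\sqrt{M+d-1}$. If you raise the first coefficient to $\sqrt{d/(M+d-1)}$, the non-flag part of $\ket{\alpha_x}$ has squared norm $k(x)/L$. This exceeds one whenever $k(x)>L$, for instance at $x=(c;c,\ldots,c)$ as soon as $M\geq2$. The proposed repair, in which $\ket{\beta_y}$ ``carries weight $\sqrt d$'', does not describe a normalised state. The existence claim is also unjustified: a bound $M/L\le d$ on row sums of squares does not give row and column sums at most one, and it is not equivalent to $S/L\le I$. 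Finally, you defer the ``real work'' to a symmetry proof of $\|S\|\le L$ that you do not give. Even granting that bound, you would still need a Schur-test (Perron--Frobenius) argument to turn it into normalised product-form amplitudes.

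The missing idea is an explicit choice of weights under which the dependence on $k(x)$ cancels. Keep the first preparation uniform, with amplitude $1/\sqrt{k(x)}$ on each of the $k(x)$ matching positions. Weight the second preparation by the match count of each compatible input, giving $\ket{x(b,y)}$ amplitude $\sqrt{k(x(b,y))/(M+d-1)}$. Every compatible pair then has overlap $1/\sqrt{k(x)}\cdot\sqrt{k(x)/(M+d-1)}=1/\sqrt{M+d-1}$. Normalisation of the second state is the counting identity $\sum_{b=0}^{d-1}k(x(b,y))=M+d-1$. The inserted position matches for all $d$ values of $b$, and each of the other $M-1$ positions matches for exactly one value.

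This choice needs no flag padding for large $k(x)$, only for $k(x)=0$ and unused labels. It needs no register-swapping permutation if the second preparation writes $x$ and $y$ into the same registers as the first. It also yields $S/L\le I$ as a corollary, since $T/\sqrt L=V_2^\dagger V_1$ with $V_1,V_2$ isometries, rather than as a separate input you must prove first.
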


\begin{proof}
We use registers for the input label
$x=(c;a_1,\ldots,a_M)$ and output label $y=(i;a_{\ne i})$,
together with an ancillary qubit $f$. Let $r(x)$ be the
number of positions satisfying $a_i=c$.

For $r(x)>0$, the first preparation is
\begin{equation}
    V_{\mathrm{in}}\ket{x}
    =\ket{x}\frac1{\sqrt{r(x)}}
       \sum_{i:a_i=c}\ket{i;a_{\ne i}}\ket{0}_f.
    \label{pbt-Vin}
\end{equation}
It retains $x$ and assigns equal amplitude to each matching
position. When $r(x)=0$, we prepare any fixed output label
and set $f=1$. We use the same choice for unused input
strings in the qubit encoding. Every prepared state is
normalised, and retaining $x$ makes $V_{\mathrm{in}}$
an isometry.

For the second preparation, fix an output label
$y=(i;a_{\ne i})$. A compatible input must have $c=a_i=b$
for some $b\in\{0,\ldots,d-1\}$. Let $x_b(y)$ denote the
input label obtained by inserting $b$ at position $i$
and setting $c=b$. We define
\begin{equation}
    V_{\mathrm{out}}\ket{y}
    =\sum_{b=0}^{d-1}
       \sqrt{\frac{r(x_b(y))}{M+d-1}}\,
       \ket{x_b(y)}\ket{y}\ket{0}_f.
    \label{pbt-Vout}
\end{equation}
To check normalisation, sum $r(x_b(y))$ over $b$.
The inserted position contributes one match for each
of the $d$ values of $b$. Each of the remaining $M-1$
positions contributes once. Hence
\begin{equation*}
    \sum_{b=0}^{d-1}r(x_b(y))=M+d-1.
\end{equation*}
For unused output labels, we prepare any fixed input
basis state and keep $f=0$. Retaining $y$ makes
$V_{\mathrm{out}}$ an isometry. Both maps prepare known
states with nonnegative coefficients while retaining a
basis label, so Lemma~\ref{pbt-conditional-preparation}
applies to each.

Suppose $y$ is obtained from $x$ by choosing a position
where $a_i=c$ and omitting $a_i$. Then $x_c(y)=x$, and
the overlap between the two prepared states is
\begin{equation}
    \frac1{\sqrt{r(x)}}
       \sqrt{\frac{r(x)}{M+d-1}}
    =\frac1{\sqrt{M+d-1}}.
    \label{pbt-cancellation}
\end{equation}
The dependence on the number of matching positions cancels.
All other overlaps vanish. Since
$dL=M+d-1$, comparison with~\eqref{pbt-basis-action} gives
\begin{equation}
    V_{\mathrm{out}}^\dagger V_{\mathrm{in}}
       =\frac{T}{\sqrt L}.
    \label{pbt-factorisation-equation}
\end{equation}
Both maps are isometries, so their product has operator
norm at most one. It follows that $S/L\leq I$.

We implement these maps by unitary preparation circuits.
Initially the output register, $f$ and all work qubits
are zero. We apply the circuit for $V_{\mathrm{in}}$
and then the inverse of the circuit for
$V_{\mathrm{out}}$.

A result is successful when the input register, $f$ and
all preparation work qubits are zero after these two
circuits. The output label register is unrestricted.
The operator associated with these zero values is
$V_{\mathrm{out}}^\dagger V_{\mathrm{in}}$, as required.
We must include all work qubits in this condition because
the inverse preparation can receive a state outside the
range of $V_{\mathrm{out}}$.

The circuit uses no intermediate measurement. The zero
conditions specify the subspace about which we reflect
during amplitude amplification. We implement this
reflection directly, without recording success in another
ancillary qubit.

For an unused input string, the first preparation has
$f=1$, whereas every state prepared by
$V_{\mathrm{out}}$ has $f=0$. Their overlap is zero.
Thus the successful measurement operator vanishes outside
the logical input subspace, and its adjoint maps into
that subspace.
\end{proof}

For example, take $d=2$, $M=3$ and
$x=(0;0,0,0)$. All three positions match, so the first
preparation assigns amplitude $1/\sqrt3$ to each possible
output label. For $y=(1;0,0)$, the compatible inputs are
$x_0(y)=(0;0,0,0)$ and $x_1(y)=(1;1,0,0)$.
They have three matches and one match respectively.
The second preparation assigns them amplitudes
$\sqrt3/2$ and $1/2$. The overlap corresponding to the
original input is $\frac1{\sqrt3}\frac{\sqrt3}{2}
    =\frac12
    =\frac1{\sqrt{M+d-1}}$.
The same cancellation gives this coefficient for every
compatible input and output label.

We now calculate the probability of a successful result
when Alice and Bob share the resource~\eqref{pbt-resource}.

\begin{lem}
\label{pbt-initial-success}
The initial measurement has total success probability
\begin{equation}
    \mu=\frac{M}{d^2L}
       =\frac{M}{d(M+d-1)},
    \label{pbt-mu}
\end{equation}
independently of the input state. For every successful
result $i$, teleportation to $B_i$ is exact.
\end{lem}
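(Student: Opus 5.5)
The plan is to compute the success probability directly from the successful measurement operator $T/\sqrt L$ of Lemma~\ref{pbt-factorisation}, applied to the input together with the resource~\eqref{pbt-resource}. Success is a sum over the port label $i$ and the remaining Alice registers, so the probability equals $\frac1L\sum_{i=1}^M\|(T_i\otimes I_B)(\ket{\psi}_C\ket{\Omega})\|^2$. For inputs entangled with a reference $R$, the same expression holds with $I_R$ included.

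For each $i$, the operator $T_i$ acts only on $C,A_i$ and the identity on the other Alice registers. The teleportation identity~\eqref{pbt-teleportation-identity} therefore gives
\[
(T_i\otimes I_B)(\ket{\psi}_C\ket{\Omega})=\frac1d\ket{\psi}_{B_i}\bigotimes_{j\ne i}\ket{\Phi_d}_{A_jB_j},
\]
with the $j\neq i$ pairs untouched. The right-hand side has squared norm $1/d^2$. Summing over the $M$ ports and dividing by $L$ gives $\mu=M/(d^2L)=M/(d(M+d-1))$, using $dL=M+d-1$. This value does not depend on $\ket{\psi}$. For an entangled input $\ket{\psi}_{RC}$, the identity holds with $I_R$ acting, and the norm is again $1/d^2$, so the probability is independent of the input, including mixed or entangled ones.

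For exactness, note that the branch with label $i$ is, up to normalisation, the state displayed above. After tracing out the untouched pairs, register $B_i$ holds exactly $\ket{\psi}$, or the joint state with $R$ in the entangled case. The success condition in Lemma~\ref{pbt-factorisation} (input register, $f$ and work qubits zero) picks out exactly the component $V_{\mathrm{out}}^\dagger V_{\mathrm{in}}$, which equals $T/\sqrt L$, so no other operator contributes to these outcomes.

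The main obstacle is making sure that the successful component carries no cross terms between different labels and no contribution from unused strings in the qubit encoding. The label $i$ is stored in an orthogonal register, so the branches for different $i$ are orthogonal and their norms simply add. Unused strings contribute zero by the final paragraph of Lemma~\ref{pbt-factorisation}. The resource state lies in the logical subspace, so the computation above is complete.
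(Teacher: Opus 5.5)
Your proposal is correct and follows essentially the paper's argument: the success probability is the expectation of $S/L=T^\dagger T/L$, each port contributes $1/d^2$, and exactness comes from the teleportation identity~\eqref{pbt-teleportation-identity}. The only cosmetic difference is that the paper obtains $\langle P_i\rangle=1/d^2$ from Alice's reduced state $\rho_C\otimes I/d^M$, whereas you read it off as the squared norm of the teleported branch; your explicit remark that the label register makes the branches orthogonal is what the paper encodes in $T^\dagger T=S$.
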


\begin{proof}
The reduced state on Alice is
$\rho_C\otimes I_{A_1\cdots A_M}/d^M$.
The expectation of each $P_i$ in this state is $1/d^2$.
The probability of success is the expectation of $S/L$,
which gives $M/(d^2L)$.

For successful result $i$, the measurement operator is
$T_i/\sqrt L$. Equation~\eqref{pbt-teleportation-identity}
shows that the normalised state at $B_i$ equals the input
state. This also holds for inputs entangled with a
reference.
\end{proof}

For $d\geq2$, we have $\mu<1/d$, with $\mu$ tending to
$1/d$ as $M$ increases. Increasing the number of ports
alone does not make this probability approach one.
We next apply amplitude amplification using the circuit
and its inverse. This changes the successful measurement
operators, so we then bound the entanglement fidelity
of the resulting protocol.

\subsection{Amplitude amplification and output selection}
\label{pbt-amplification-section}

We apply amplitude amplification~\cite{pbt-brassard2002} to
the circuit constructed in Section~\ref{pbt-initial-section},
using operations on Alice's registers. Amplitude
amplification has also been used in other circuits for
port-based teleportation~\cite{pbt-wills2024}.

The effect of amplification depends on the eigenvalues of
$S$, which acts on $C,A_1,\ldots,A_M$. For a unit eigenvector
with eigenvalue $s$, the initial amplitude in the successful
subspace is $\sqrt{s/L}$. These amplitudes can differ even
though the total success probability $\mu$ on the shared
resource is independent of the state to be teleported.

If an initial success amplitude is $\sin\theta$, then $q$
amplification steps change it to $\sin((2q+1)\theta)$.
We choose the parameters so that this expression equals one
at $s=M/d^2$, the mean of the eigenvalues of $S$ on Alice's
logical input space. Indeed, each $P_i$ has rank $d^{M-1}$
on a space of dimension $d^{M+1}$, so the mean of the
eigenvalues of $S=\sum_iP_i$ is $M/d^2$.

We first reduce the initial amplitudes by a common factor.
This allows us to choose an integer number of amplification
steps that depends only on $d$.

Let $k$ be the smallest odd integer satisfying
\begin{equation}
    k\geq\frac\pi2\sqrt{d+1},
    \label{pbt-iterations}
\end{equation}
and set $q=(k-1)/2$. Define
\begin{equation}
    a=\sin\frac{\pi}{2k},
    \qquad c=\frac{a}{\sqrt\mu}.
    \label{pbt-attenuation}
\end{equation}
At $s=M/d^2$, the initial amplitude is $\sqrt\mu$.
Multiplication by $c$ reduces it to $a$, for which
\begin{equation*}
    \sin\bigl(k\arcsin(a)\bigr)=1.
\end{equation*}

We check that $c\leq1$, so this reduction can be implemented
by a single-qubit rotation. Since
$M\geq d^2-1\geq d(d-1)$, equation~\eqref{pbt-mu} gives
$\mu\geq1/(d+1)$. Our choice of $k$ then gives
\begin{equation*}
    a\leq\frac{\pi}{2k}
      \leq\frac1{\sqrt{d+1}}
      \leq\sqrt\mu.
\end{equation*}
Thus $0<c\leq1$.

We prepare an additional ancillary qubit in
$c\ket{1}+\sqrt{1-c^2}\ket{0}$ and include the value 1
of this qubit in the success condition. Let $V$ denote
the two preparation circuits from
Section~\ref{pbt-initial-section}, with the second applied
in reverse, followed by this rotation. Its operator for
successful results is $cT/\sqrt L$. The rotation can depend
on $M$, but the number $q$ of amplification steps depends
only on $d$.

We now define the two reflections used in amplification.
Let $\Pi_{\mathrm{in}}$ project onto zero in every additional
Alice register. It acts as the identity on
$C,A_1,\ldots,A_M$, whose states need not be known.
Let $\Pi_{\mathrm s}$ project onto the successful subspace:
the registers $C,A_1,\ldots,A_M$, the qubit $f$ and all
preparation work qubits must be zero, and the additional
qubit introduced above must be 1. The output label register
is unrestricted.

We apply $V$ followed by $q$ amplification steps:
\begin{equation}
    R_{\mathrm{in}}=I-2\Pi_{\mathrm{in}},\qquad
    R_{\mathrm s}=I-2\Pi_{\mathrm s},\qquad
    U_q=(-V R_{\mathrm{in}}V^\dagger R_{\mathrm s})^qV.
    \label{pbt-amplification}
\end{equation}
Each reflection applies phase $-1$ precisely when its
register conditions hold. Both act only on Alice's
registers and have depth at most three by Section~\ref{pbt-conditional-operations}.

The following lemma gives the measurement operators after
amplification.

\begin{lem}
\label{pbt-amplification-operators}
After applying $U_q$, the measurement operator for
successful result $i$ is
\begin{equation}
    K_i=T_iG(S),
    \qquad
    G(s)=
    \frac{\sin\!\left(k\arcsin\!\left(a\sqrt{d^2s/M}\right)\right)}
         {\sqrt s}.
    \label{pbt-K}
\end{equation}
The displayed expression applies to positive eigenvalues
$s$ of $S$, with the continuous value
$G(0)=ka d/\sqrt M$. The function $G$ extends to a real
polynomial in $s$.
\end{lem}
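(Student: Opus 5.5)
The plan is to use the standard two-reflection (Jordan lemma / singular value) analysis of amplitude amplification, applied to the operator $W=\Pi_{\mathrm s}V\Pi_{\mathrm{in}}$. By Lemma~\ref{pbt-factorisation} and the attenuation rotation, $W$ restricted to the logical input space is $cT/\sqrt L$. Its right singular structure comes from
\[
W^\dagger W=\frac{c^2}{L}T^\dagger T=\frac{c^2}{L}S.
\]
So on an eigenvector $\ket{v}$ of $S$ with eigenvalue $s$, the initial success amplitude is
\[
\sin\theta_s=c\sqrt{s/L}=a\sqrt{d^2s/M},
\]
using $c^2/L=a^2/(\mu L)=a^2d^2/M$ from \eqref{pbt-mu}. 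Since $S/L\le I$ and $c\le1$, this lies in $[0,1]$, so $\theta_s\in[0,\pi/2]$ is well defined.

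The first step is the two-dimensional invariant subspace argument. For each such $\ket{v}$, with $s>0$, write
\[
V\ket{v}\ket{0}=\sin\theta_s\ket{g_v}+\cos\theta_s\ket{b_v},
\]
where $\ket{g_v}=W\ket{v}/\|W\ket{v}\|$ lies in the range of $\Pi_{\mathrm s}$ and $\ket{b_v}$ is orthogonal to it. The usual computation shows that $-VR_{\mathrm{in}}V^\dagger R_{\mathrm s}$ acts on $\mathrm{span}\{\ket{g_v},\ket{b_v}\}$ as a rotation by $2\theta_s$. After $q$ steps, the $\Pi_{\mathrm s}$-component is therefore $\sin((2q+1)\theta_s)\ket{g_v}$.

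One point needs care: different eigenvectors must give orthogonal two-dimensional subspaces, so that the analysis applies blockwise. I would verify this from
\[
\langle g_v|g_{v'}\rangle\propto\langle v|W^\dagger W|v'\rangle=0,
\]
and from $\langle v,0|V^\dagger V|v',0\rangle=0$. Degenerate eigenspaces are handled by choosing an orthonormal eigenbasis.

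By linearity, the successful operator after $U_q$ is
\[
\Pi_{\mathrm s}U_q\Pi_{\mathrm{in}}=\sum_v\sin(k\theta_s)\frac{W\ket{v}\bra{v}}{\|W\ket{v}\|},
\]
with $k=2q+1$. Since $\|W\ket{v}\|=\sin\theta_s$, this equals
\[
W\,\frac{\sin(k\theta_S)}{\sin\theta_S}=T\,\frac{c}{\sqrt L}\cdot\frac{\sin(k\theta_S)}{c\sqrt{S/L}}=T\,\frac{\sin(k\theta_S)}{\sqrt S}.
\]
Restricting to label $i$ gives $K_i=T_iG(S)$, with $G$ as in \eqref{pbt-K}.

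For the kernel of $S$, we have $W\ket{v}=0$, so $V\ket{v}\ket{0}$ is orthogonal to the range of $\Pi_{\mathrm s}$. The iterate then keeps zero success amplitude, up to sign. This is consistent with $K_i=T_iG(0)$, because $T_i\ket{v}=0$ whenever $S\ket{v}=0$: indeed $\|T_iv\|^2=\langle v|P_i|v\rangle\le\langle v|S|v\rangle=0$. So any value of $G(0)$ is allowed, and we take the limit.

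The last step is the polynomial claim. Put $t=a\sqrt{d^2s/M}$. For odd $k$, $\sin(k\arcsin t)$ equals $\pm T_k(t)$, where $T_k$ is the Chebyshev polynomial (of the first kind). This is an odd polynomial in $t$, so dividing by $\sqrt s\propto t$ leaves an even polynomial in $t$, that is, a polynomial in $t^2\propto s$. Its constant term is $k\cdot ad/\sqrt M$, which gives $G(0)=kad/\sqrt M$.

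I expect the main obstacle to be bookkeeping rather than concept. One issue is that $\Pi_{\mathrm{in}}$ acts only on ancillas, while $V$ is defined on the full qubit encoding, including non-logical strings; the final remark of Lemma~\ref{pbt-factorisation} shows that $W$ vanishes there, so those strings contribute nothing. The other issue is tracking the sign conventions for $-VR_{\mathrm{in}}V^\dagger R_{\mathrm s}$, to make sure the rotation gives exactly $\sin((2q+1)\theta)$ rather than a sign-flipped variant.
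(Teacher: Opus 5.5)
Your proposal is correct and follows the paper's route. Like the paper, you decompose over eigenvectors of $S$, identify the initial success amplitude $c\sqrt{s/L}=a\sqrt{d^2s/M}$, rotate by $2\theta$ in the two-dimensional span to obtain $\sin(k\theta)\,T\ket{v}/\sqrt{s}$, treat $\ker S$ separately, and use that $\sin(k\arcsin t)$ is an odd polynomial for odd $k$. Your extra checks fill in steps the paper only asserts: eigenvectors of $S$ are right singular vectors of $\Pi_{\mathrm s}V\Pi_{\mathrm{in}}$, which is what keeps the span invariant although $\Pi_{\mathrm{in}}$ has large rank, and $T_i\ket{v}=0$ on $\ker S$.
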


\begin{proof}
Let $\ket{v}$ be a unit eigenvector of $S$ with eigenvalue
$s>0$. After applying $V$, its component in the successful
subspace is
\begin{equation*}
    \frac{c}{\sqrt L}T\ket{v}.
\end{equation*}
Since $T^\dagger T=S$, the normalised state of this component
is $T\ket{v}/\sqrt s$, and its amplitude is
\begin{equation*}
    c\sqrt{s/L}=a\sqrt{d^2s/M}.
\end{equation*}
This amplitude is at most one because $S/L\leq I$ and
$c\leq1$.

Write the amplitude as $\sin\theta$. The two reflections
preserve the span of the successful and unsuccessful
components of $V\ket{v}$. Each amplification step increases
the angle by $2\theta$. After $q$ steps, the successful
component is
\begin{equation*}
    \sin(k\theta)\frac{T\ket{v}}{\sqrt s}.
\end{equation*}
This is $TG(S)\ket{v}$. Selecting the port label $i$
gives $K_i=T_iG(S)$.

For $s=0$, we have $T\ket{v}=0$, and amplification leaves
the successful component zero. The limit of the displayed
expression for $G(s)$ as $s$ tends to zero is
$ka d/\sqrt M$.

Finally, $k$ is odd, so $\sin(k\arcsin t)$ is an odd
polynomial in $t$. Substituting
$t=a\sqrt{d^2s/M}$ and dividing by $\sqrt s$ gives a
polynomial in $s$.
\end{proof}

The polynomial $G$ describes the action of the circuit.
Its implementation requires only $V$, $V^\dagger$ and
the two reflections. We do not diagonalise $S$ or calculate
its eigenvalues within the circuit.

To complete the deterministic protocol, we assign an output
port to every measurement result. When the success
condition holds and the label is
$i\in\{1,\ldots,M\}$, we select port $i$. For every other
result, we select port~1. All results are included in
the teleportation channel, and no correction is applied
to the selected port.

In the unitary circuit, we perform this selection by a
permutation of computational basis states. It swaps the
selected port with a fixed output register while preserving
Alice's registers. The successful and remaining components
thus stay orthogonal. Tracing out all registers except
the selected output gives the deterministic teleportation
channel.

Before amplification, each successful measurement operator
was proportional to $T_i$ and gave exact teleportation.
After amplification, the factor $G(S)$ can change the
conditional output state. We next bound the entanglement
fidelity of the resulting protocol.

\subsection{Entanglement fidelity}
\label{pbt-fidelity-section}

We now bound the entanglement fidelity of the deterministic
protocol. Let $F_{\mathrm s}$ denote the contribution from
successful results, including their probabilities. The
remaining results, which select port~1, contribute
nonnegatively, so $F_{\mathrm e}(\mathcal E)\geq F_{\mathrm s}$.
We bound $F_{\mathrm s}$ using the first two moments of
the eigenvalues of $S$.

In the definition of entanglement fidelity, the input $C$
is maximally entangled with a reference $R$. Together with
the shared resource, this makes the reduced state on
$C,A_1,\ldots,A_M$ maximally mixed. All traces below are
taken on this logical input space, whose dimension is
$D=d^{M+1}$. We write $\tau(H)=\operatorname{tr}(H)/D$
and $Y=d^2S/M$. Thus $\tau(H)$ is the expectation of $H$
in the state $I/D$, and the rescaling from $S$ to $Y$
makes the mean of the eigenvalues equal to one.

\begin{lem}
\label{pbt-moments}
We have
\begin{equation}
    \tau(Y)=1,\qquad
    \tau(Y^2)=1+\frac{d^2-1}{M},\qquad
    \tau((Y-I)^2)=\frac{d^2-1}{M}.
    \label{pbt-moments-equation}
\end{equation}
\end{lem}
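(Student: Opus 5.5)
The plan is to compute the two traces $\tau(S)$ and $\tau(S^2)$ directly from the projectors $P_i=\ket{\Phi_d}\!\bra{\Phi_d}_{CA_i}\otimes I_{A_{\ne i}}$ on the logical space of $C,A_1,\ldots,A_M$, of dimension $D=d^{M+1}$. After that, the stated identities follow by rescaling with $Y=d^2S/M$. The third identity is immediate from the first two, since $\tau((Y-I)^2)=\tau(Y^2)-2\tau(Y)+1$.

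For the first moment, each $P_i$ is a rank-one projector on $CA_i$ tensored with the identity on the remaining $M-1$ registers. Hence $\operatorname{tr}P_i=d^{M-1}$ and $\tau(P_i)=1/d^2$. Summing over $i$ gives $\tau(S)=M/d^2$, and therefore $\tau(Y)=1$. This matches the mean-eigenvalue remark made before \eqref{pbt-iterations}.

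For the second moment, I will expand $S^2=\sum_iP_i+\sum_{i\ne j}P_iP_j$, using $P_i^2=P_i$. The diagonal part contributes $M/d^2$. For $i\ne j$, the product $P_iP_j$ involves the registers $C,A_i,A_j$ nontrivially and is the identity elsewhere. So I need the three-register trace
\[
\operatorname{tr}\bigl[(\ket{\Phi_d}\!\bra{\Phi_d}_{CA_i}\otimes I_{A_j})(\ket{\Phi_d}\!\bra{\Phi_d}_{CA_j}\otimes I_{A_i})\bigr].
\]
By cyclicity this equals
\[
\bigl\|(\bra{\Phi_d}_{CA_i}\otimes I_{A_j})(\ket{\Phi_d}_{CA_j}\otimes I_{A_i})\bigr\|_2^2 .
\]
The teleportation identity \eqref{pbt-teleportation-identity} shows that this operator is $\frac1d$ times the identity map (a swap) from $A_i$ to $A_j$. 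Its squared Hilbert--Schmidt norm is therefore $d\cdot d^{-2}=1/d$. Alternatively, one can compute directly in the basis: the trace is $\sum_{a,b,c,e}\frac1{d^2}\delta$-terms, which equal $d/d^2$. Including the identity on the other $M-2$ registers, $\operatorname{tr}(P_iP_j)=d^{M-2}\cdot\frac1d=d^{M-3}$, so $\tau(P_iP_j)=1/d^4$. Consequently
\[
\tau(S^2)=\frac{M}{d^2}+\frac{M(M-1)}{d^4},
\]
and multiplying by $d^4/M^2$ gives
\[
\tau(Y^2)=\frac{d^2}{M}+1-\frac1M=1+\frac{d^2-1}{M}.
\]

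The only delicate step is the cross term $\operatorname{tr}(P_iP_j)$. It requires care with the tensor factors: the overlap is a partial contraction on the shared register $C$, not a scalar overlap. I will check it once in the computational basis to confirm the value $1/d$ on the three registers $C,A_i,A_j$. The remaining steps are bookkeeping.
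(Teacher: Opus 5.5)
Your proposal is correct and follows the paper's proof: compute $\tau(P_i)=d^{-2}$ from the rank, expand $S^2$ using $P_i^2=P_i$, use the teleportation identity to get $\tau(P_iP_j)=d^{-4}$ for $i\ne j$, and rescale. The only difference is cosmetic. You evaluate the cross term as a squared Hilbert--Schmidt norm, whereas the paper derives the operator identity $P_iP_jP_i=d^{-2}P_i$ and takes its trace.
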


\begin{proof}
Each $P_i$ is a projector of rank $D/d^2$, so
$\tau(P_i)=d^{-2}$.

For distinct ports $i,j$, applying the teleportation
identity twice gives $P_iP_jP_i=d^{-2}P_i$.
To see the factor explicitly, consider
$\ket{\Phi_d}_{CA_i}\ket{b}_{A_j}$. Applying $P_j$ gives
$d^{-1}\ket{b}_{A_i}\ket{\Phi_d}_{CA_j}$.
Applying $P_i$ then returns $d^{-2}$ times the original
state. These vectors span the relevant subspace, and
both projectors act as the identity on the other registers.

Taking the trace gives $\tau(P_iP_j)=d^{-4}$.
Since $P_i^2=P_i$, we obtain $\tau(S)=M/d^2$ and
$\tau(S^2)=M/d^2+M(M-1)/d^4$.
Rescaling by $d^2/M$ proves the first two identities.
The third follows from
$\tau((Y-I)^2)=\tau(Y^2)-1$.
\end{proof}

For fixed $d$, the variance of the eigenvalues of $Y$
decreases as $M$ grows. We now relate this variance to
the fidelity. For a measurement result with operator $K_i$, write
$E_i=K_i^\dagger K_i$. On Alice's state $\rho$, this
result occurs with probability $\operatorname{tr}(E_i\rho)$.
After tracing out her registers, its contribution
to fidelity depends only on $E_i$ and the selected port.

We express the bound using
$h_k(y)=\sqrt y\sin(k\arcsin(a\sqrt y))$
for $0\leq y\leq a^{-2}$, where $k$ and $a$ are given
by~\eqref{pbt-iterations} and~\eqref{pbt-attenuation}.
Our choice of amplification gives $h_k(1)=1$.
Since $a^2Y=c^2S/L\leq I$, this function is defined
on every eigenvalue of $Y$.

\begin{lem}
\label{pbt-fidelity-trace}
A measurement result with positive operator $E_i$ that
selects port $i$ contributes $\tau(P_iE_i)$ to entanglement
fidelity. For the successful results in~\eqref{pbt-K},
their total contribution satisfies
\begin{align}
    F_{\mathrm s}
      &=\sum_{i=1}^M
          \tau\bigl((P_iG(S)P_i)^2\bigr),
        \label{pbt-fidelity-exact}\\
    F_{\mathrm s}
      &\geq[\tau(h_k(Y))]^2.
        \label{pbt-fidelity-bound}
\end{align}
\end{lem}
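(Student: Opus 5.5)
We split the argument into three parts: the contribution formula for a single result, the exact expression for the successful results, and the lower bound.

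\textbf{Single result.} Let $\ket{\Phi_d}_{RC}$ be the input and let the resource be $\ket{\Omega}$. If the result has Kraus operator $K_i$ on Alice's registers and port $i$ is output, the output on $RB_i$ is the unnormalised state obtained by applying $K_i$ to $C,A_1,\ldots,A_M$ and tracing Alice's remaining registers and the other ports. Its overlap with $\ket{\Phi_d}_{RB_i}$ is therefore a squared norm summed over the traced registers. We use the transpose trick $\ket{\Phi_d}_{RC}$, $\ket{\Phi_d}_{A_jB_j}$ to move all operators onto Alice's side. Projecting $RB_i$ onto $\ket{\Phi_d}$, in combination with the pair $A_iB_i$, implements the projector $P_i$ (up to transposition) on $CA_i$. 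Since all of Alice's inputs are maximally mixed, the result should be $\frac{1}{D}\operatorname{tr}(P_i K_i^\dagger K_i P_i)$-type expressions. More precisely, we expect the contribution to be $\tau(P_iE_i)$. The dimension factors come out as $d^{-(M+1)}$ from normalisations, which matches $1/D$. We verify this on basis vectors via~\eqref{pbt-teleportation-identity}.

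\textbf{Exact formula.} For $K_i=T_iG(S)$ we have $E_i=G(S)P_iG(S)$, since $T_i^\dagger T_i=P_i$ and $G(S)$ is Hermitian because $G$ is real. Then
\[
\tau(P_iE_i)=\tau(P_iG(S)P_iG(S)P_i)=\tau\bigl((P_iG(S)P_i)^2\bigr),
\]
using $P_i^2=P_i$ and cyclicity. Summing over $i$ gives~\eqref{pbt-fidelity-exact}.

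\textbf{Lower bound.} We apply Cauchy--Schwarz for the normalised trace restricted to the range of $P_i$, which has weight $\tau(P_i)=d^{-2}$. This gives
\[
\tau\bigl((P_iGP_i)^2\bigr)\geq d^{2}\,[\tau(P_iG)]^2 .
\]
Summing over $i$ and applying Cauchy--Schwarz to the sum over $i$ then gives
\[
F_{\mathrm s}\geq \frac{d^2}{M}\Bigl[\sum_i\tau(P_iG(S))\Bigr]^2=\frac{d^2}{M}[\tau(SG(S))]^2 .
\]
Now $SG(S)=\sqrt S\,\sin(k\arcsin(a\sqrt{Y}))=\sqrt{M}/d\cdot h_k(Y)$, so the prefactors cancel exactly and yield $[\tau(h_k(Y))]^2$.

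The main obstacle is the first step: correctly tracking the transposes and normalisations so that the contribution is exactly $\tau(P_iE_i)$. This requires care because $E_i$ acts on $C$ and on all the $A_j$, while the output is taken from $B_i$ and the other $B_j$ are traced out. The two Cauchy--Schwarz steps and the identity for $SG(S)$ are routine.
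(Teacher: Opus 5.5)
Your proposal is correct and follows the paper's proof. The single-result step works as you sketch but is more direct than you fear: the paper contracts $\bra{\Phi_d}_{RB_i}$ first, which is allowed because it acts on registers disjoint from Alice's operation. This gives $d^{-1}\ket{\Phi_d}_{CA_i}\bigotimes_{j\ne i}\ket{\Phi_d}_{A_jB_j}$, whose reduced operator on Alice's registers is $P_i/D$, so no transposes are needed. Your two-stage Cauchy--Schwarz, within each range of $P_i$ and then over $i$, is the paper's single Cauchy--Schwarz over all $MD/d^2$ eigenvalues of the operators $P_iG(S)P_i$, and yields the same bound.
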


\begin{proof}
For a result that selects port $i$, the fidelity definition
tests whether $R,B_i$ are in the state $\ket{\Phi_d}$.
This test acts on different registers from Alice's
measurement, so we can evaluate it first. We use
\begin{equation*}
    \bra{\Phi_d}_{RB_i}
       \bigl(\ket{\Phi_d}_{RC}\ket{\Omega}\bigr)
    =\frac1d\ket{\Phi_d}_{CA_i}
       \bigotimes_{j\ne i}\ket{\Phi_d}_{A_jB_j}.
\end{equation*}
Tracing out each remaining Bob register gives $I/d$
on its partner at Alice. The resulting unnormalised
operator on her input registers is $P_i/D$.
The contribution of the measurement result is thus
$\operatorname{tr}(E_iP_i/D)=\tau(P_iE_i)$.

For a successful result, $K_i=T_iG(S)$ and
$T_i^\dagger T_i=P_i$, so $E_i=G(S)P_iG(S)$.
Cyclicity of the trace gives
$\tau(P_iE_i)=\tau((P_iG(S)P_i)^2)$.
Summing over $i$ proves~\eqref{pbt-fidelity-exact}.

Each $P_iG(S)P_i$ is Hermitian and acts on the range
of $P_i$, which has dimension $D/d^2$.
Apply Cauchy--Schwarz to the eigenvalues of all these
operators together. There are $MD/d^2$ eigenvalues,
their sum is $\operatorname{tr}(SG(S))$, and the sum
of their squares is $DF_{\mathrm s}$. Hence
\begin{equation*}
    F_{\mathrm s}
    \geq\frac{d^2}{MD^2}
       \left[\operatorname{tr}(SG(S))\right]^2
    =\frac{d^2}{M}[\tau(SG(S))]^2.
\end{equation*}
The identity $SG(S)=(\sqrt M/d)h_k(Y)$ now
gives~\eqref{pbt-fidelity-bound}.
\end{proof}

It remains to bound $\tau(h_k(Y))$. A quadratic lower
bound on $h_k$ will suffice, since its trace can be
evaluated using Lemma~\ref{pbt-moments}.
For $k=3$, the triple-angle identity gives
$h_3(y)=(3y-y^2)/2$ on $0\leq y\leq4$.
The same quadratic is a lower bound for every odd
$k\geq3$ on the required domain.

\begin{lem}
\label{pbt-scalar-lemma}
Let $k\geq3$ be odd and put $a=\sin(\pi/(2k))$. Then
\begin{equation}
    h_k(y)\geq\frac{3y-y^2}{2}
    \qquad\text{for }0\leq y\leq a^{-2}.
    \label{pbt-scalar-inequality}
\end{equation}
\end{lem}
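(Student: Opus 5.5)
We write $\phi=\pi/(2k)$, so $a=\sin\phi$, and substitute $u=\sqrt y$. The inequality then reads
\begin{equation*}
    u\sin\bigl(\psi_k(u)\bigr)\geq\frac{u^2(3-u^2)}{2},
    \qquad
    \psi_k(u)=k\arcsin(u\sin\phi),
    \qquad 0\leq u\leq 1/a .
\end{equation*}
For $k=3$ we have $a=1/2$, and the triple-angle identity gives $h_3(y)=(3y-y^2)/2$ exactly on $[0,4]$, as noted before the statement. The plan is to split the domain at $y=4$. On $[0,4]$ we compare $h_k$ with $h_3$. On $[4,a^{-2}]$ we use a crude bound.

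\textbf{Large $y$.} For $4\leq y\leq a^{-2}$ we only use $h_k(y)\geq-\sqrt y$. It then suffices to show $y(y-3)/2\geq\sqrt y$ for $y\geq4$. At $y=4$ both sides equal $2$. The left side has derivative $(2y-3)/2\geq5/2$, while the right side has derivative $1/(2\sqrt y)\leq1/4$, so the inequality persists. For $k=3$ this region reduces to the single point $y=4$, where equality holds.

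\textbf{Reduction on $[0,4]$.} Here $u\in[0,2]$, and it suffices to prove $\sin\psi_k(u)\geq\sin\psi_3(u)$. At $u=1$ every $\psi_k$ equals $\pi/2$, and the function $\psi_3(u)=3\arcsin(u/2)$ increases from $0$ to $\pi$ on $[0,2]$. Since $\sin$ is symmetric about $\pi/2$ and decreasing in the distance from $\pi/2$ on $[0,\pi]$, it is enough to establish the two orderings
\begin{equation*}
    \psi_3(u)\leq\psi_k(u)\leq\pi/2\quad(0\leq u\leq1),
    \qquad
    \pi/2\leq\psi_k(u)\leq\psi_3(u)\quad(1\leq u\leq2).
\end{equation*}
The bounds $\psi_k\leq\pi/2$ for $u\leq1$ and $\psi_k\geq\pi/2$ for $u\geq1$ follow from monotonicity of $\psi_k$ in $u$ together with $\psi_k(1)=\pi/2$. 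We also need $\psi_k(u)$ to be defined for $u\leq2$. This holds because $2\sin\phi\leq2\sin(\pi/6)=1$.

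\textbf{Main obstacle: monotonicity in $k$.} Write $\psi_k(u)=\tfrac\pi2F(\phi,u)$ with $F(x,u)=\arcsin(u\sin x)/x$, where $\phi\in(0,\pi/6]$ and $u\sin x\leq1$. Then $F(x,1)=1$ for all $x$. The required orderings say that for $u<1$ the function $F(\cdot,u)$ is nonincreasing in $x$, and for $u>1$ it is nondecreasing, on $(0,\pi/6]$. I would first try to prove this by showing that the sign of $\partial_xF$ equals the sign of
\begin{equation*}
    \frac{xu\cos x}{\sqrt{1-u^2\sin^2x}}-\arcsin(u\sin x).
\end{equation*}
As a function of $u$, this expression vanishes at $u=0$ and at $u=1$. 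Each term is odd in $u$, and the whole expression should be convex in $u$ on the relevant range, which would force it to be negative on $(0,1)$ and positive beyond $1$. If the convexity check becomes messy, I would instead fix $x$, set $w=u\sin x$, and compare the resulting expression with $\arcsin$ directly using $x\cot x\leq1$. Alternatively I would treat the derivative in $u$ at $u=1$ and verify that there is a single sign change.
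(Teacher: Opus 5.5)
Your decomposition is the same as the paper's. You use the bound $h_k(y)\geq-\sqrt y$ on $[4,a^{-2}]$; your derivative comparison there is equivalent to the paper's factorisation $-z-(3z^2-z^4)/2=z(z-2)(z+1)^2/2$. On $[0,4]$ you use the same two angle orderings. Both reduce to showing that $x\mapsto\arcsin(u\sin x)/x$ is nonincreasing for $u\leq1$ and nondecreasing for $u\geq1$, compared at $x=\pi/(2k)$ and $x=\pi/6$.

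There is one slip: $\psi_3(2)=3\arcsin(1)=3\pi/2$, not $\pi$. So for $\sqrt3<u\leq2$ the angles exceed $\pi$, which is consistent with $h_3<0$ for $y>3$. What you need there is that sine is decreasing on $[\pi/2,3\pi/2]$. Your orderings still give $\sin\psi_k\geq\sin\psi_3$, so nothing breaks.

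The real weakness is that the step you call the main obstacle is only asserted (``should be convex''). The claim is true and easy to verify. Write $s=\sin x$, $w=1-u^2s^2$ and $G(u)$ for your expression. Then $G'(u)=w^{-3/2}(x\cos x-sw)$ and $G''(u)=us^2w^{-5/2}(3x\cos x-sw)\geq us^2w^{-5/2}\cos x\,(3x-\tan x)>0$ for $0<x\leq\pi/6$. Convexity together with $G(0)=G(1)=0$ then gives the required signs. The formula for $G'$ also shows the single sign change of your fallback directly: $x\cos x-sw$ increases in $u$ and is negative at $u=0$.

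The paper avoids any auxiliary inequality such as $\tan x<3x$ by differentiating in the angle rather than in $u$. Set $f(t)=\arcsin(u\sin t)$. Your numerator is then $tf'(t)-f(t)$, which vanishes at $t=0$ and has derivative $tf''(t)$, where $f''(t)=u(u^2-1)\sin t/(1-u^2\sin^2t)^{3/2}$. Hence $f$ is concave for $u\leq1$ and convex for $u\geq1$, and $f(t)/t$ is monotone in the required direction. The sign change at $u=1$ is visible in a single factor.
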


\begin{proof}
Write $z=\sqrt y$. For $0\leq z\leq2$, we compare
$k\arcsin(az)$ with $3\arcsin(z/2)$.
Consider $f_z(t)=\arcsin(z\sin t)$ for
$0\leq t\leq\pi/6$. In the interior of this interval,
\begin{equation*}
    f_z''(t)
    =\frac{z(z^2-1)\sin t}
           {(1-z^2\sin^2t)^{3/2}}.
\end{equation*}
Thus $f_z$ is concave for $0\leq z\leq1$ and convex
for $1\leq z\leq2$. Since $f_z(0)=0$, the ratio
$f_z(t)/t$ decreases with $t$ in the first case and
increases in the second. Endpoint cases follow
by continuity.

We compare $t=\pi/(2k)$ with $t=\pi/6$.
For $0\leq z\leq1$, concavity gives
$0\leq3\arcsin(z/2)\leq k\arcsin(az)\leq\pi/2$.
The final inequality follows from
$\arcsin(az)\leq\arcsin(a)=\pi/(2k)$.
Sine is increasing on this interval, so
$\sin(k\arcsin(az))\geq\sin(3\arcsin(z/2))$.

For $1\leq z\leq2$, convexity gives
$\pi/2\leq k\arcsin(az)\leq3\arcsin(z/2)\leq3\pi/2$.
Sine is decreasing on this interval, giving the same
comparison. In both cases,
$h_k(z^2)\geq h_3(z^2)=(3z^2-z^4)/2$.

For $2\leq z\leq a^{-1}$, the bound $\sin\theta\geq-1$
gives $h_k(z^2)\geq-z$. Moreover,
$-z-(3z^2-z^4)/2=z(z-2)(z+1)^2/2\geq0$.
This proves~\eqref{pbt-scalar-inequality} over the
remaining part of the domain.
\end{proof}

\begin{proof}[Proof of Theorem~\ref{pbt-main}]
The eigenvalues of $Y$ lie in $[0,a^{-2}]$.
We apply Lemma~\ref{pbt-scalar-lemma} to each eigenvalue
and take the normalised trace. Since $\tau(Y)=1$,
Lemma~\ref{pbt-moments} gives
$\tau(h_k(Y))\geq1-\tau((Y-I)^2)/2
=1-(d^2-1)/(2M)$.
For $M\geq d^2-1$, this lower bound is nonnegative.
Squaring and using~\eqref{pbt-fidelity-bound} gives
\begin{equation*}
    F_{\mathrm e}(\mathcal E)
    \geq F_{\mathrm s}
    \geq\left(1-\frac{d^2-1}{2M}\right)^2.
\end{equation*}
This proves~\eqref{pbt-main-fidelity} for the deterministic
protocol. The remaining measurement results are included
in $\mathcal E$ and contribute nonnegatively.

The supporting circuits in
Section~\ref{pbt-supporting-circuits} and the complete
count in Section~\ref{pbt-depth-section} establish the
depth bound, including resource preparation and output
selection.
\end{proof}

\subsection{Circuit implementations and depth bounds}
\label{pbt-supporting-circuits}

We now give the circuit bounds needed to complete the proof
of Theorem~\ref{pbt-main}. We first include fanout in the gate
set and count the preparation, permutation and reflection
circuits. We then implement fanout exactly using only
single-qubit and generalised Toffoli gates. All depth counts
include the single-qubit layers.

\subsubsection{Preparation and permutation circuits}
\label{pbt-preparation-depths}

\begin{lem}
\label{pbt-basic-depth}
With fanout included in the gate set, every permutation of
computational basis states has an exact implementation of
depth at most $20$. Every known state with nonnegative
computational basis coefficients can be prepared exactly
in depth at most $29$. Both constructions return all
additional qubits to zero.
\end{lem}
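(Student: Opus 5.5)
The permutation bound follows directly from Proposition~\ref{Any permutation in constant depth}. A permutation of computational basis states on $n$ qubits is a permutation $\pi\in P_n$ of bitstrings. NOT, generalised Toffoli and fanout act on quantum states by permuting the computational basis without phases. The reversible circuit of that proposition therefore acts exactly as the required unitary on every basis input, with all additional qubits returned to zero. By linearity it does so on arbitrary superpositions. Its depth is at most $10+10=20$: Lemma~\ref{permutation to indicator in constant depth} gives depth $10$, and the reverse of a second such circuit adds another $10$, since Toffoli and fanout are self-inverse.

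For states with nonnegative coefficients, I would rerun the proof of Theorem~\ref{Any state in constant depth} and drop the final phase-restoration stage. Write the target as $\sum_j\sqrt{p_j}\ket{\mathrm{bin}(j)}$ with nonnegative amplitudes. If $p_0=0$, first relabel by a fixed XOR so that the coefficient of $0^n$ is positive; the undoing of this relabelling is one layer of NOT gates, and the initial relabelling costs nothing because it only changes the classical choice of the $q_i$. The construction then has the following stages:
\begin{itemize}
\item rotations $R_i$, one layer;
\item prefix computation by Lemma~\ref{prefix-values-linear-gates}, five layers;
\item conditional inverse rotations, four layers;
\item uncomputation of the prefixes by $C^{-1}$, five layers;
\item the all-zero indicator via $\revor_{N+1}$ and a NOT, four layers;
\item the inverse indicator circuit, ten layers;
\item the undo of the relabelling, one layer.
\end{itemize}
The sum is $1+5+4+5+4+10+1=30$.

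To reach $29$ I would merge two adjacent layers. The final NOT on the indicator target in Step~5 acts only on the prepended qubit. It can be absorbed into the last layer of the $\revor$ implementation, which already negates the target: negating twice is the identity, so that target negation is simply omitted. This makes Step~5 three layers. Alternatively, the final XOR-undo NOT layer can be absorbed into the last layer of the inverse indicator circuit. In that circuit the output qubits are last acted on by the layer that restores the first-row negations of Lemma~\ref{permutation to indicator in constant depth}, so the two NOT layers compose into one layer of single-qubit gates. Either merge gives depth at most $29$. All ancillary qubits return to zero exactly, as in the theorem. The bound does not depend on $p_0>0$ once the relabelling is fixed.

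The main obstacle is the bookkeeping that saves this one layer: I have to verify that the merged gates have disjoint supports from everything else in the same layer. For the Step~5 merge, the last $\revor$ layer consists of NOTs on the controls and the target, all on distinct qubits. Removing the target NOT together with the extra NOT leaves that layer valid, so this is the merge I would use.
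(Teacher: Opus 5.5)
Your proposal is correct. The permutation part is the same argument as the paper's. For states, you and the paper differ in how you handle $p_0=0$.

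\textbf{How the paper handles $p_0=0$.} The paper drops the relabelling altogether. It observes that the parameters $q_i$ remain usable when $p_0=0$: set $q_i$ arbitrarily wherever $p_0+p_i+\cdots+p_N=0$. At the last index $j^*$ with $p_{j^*}>0$ one gets $q_{j^*}=1$. This annihilates the all-zero term and every term whose first $1$ comes after $j^*$. The conditional $R_i^\dagger$ step never needed $q_i<1$. So the six stages of Theorem~\ref{Any state in constant depth} already sum to $1+5+4+5+4+10=29$.

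\textbf{How you handle it.} You keep the fixed XOR relabelling and pay for its extra layer by cancelling the two NOTs on the prepended indicator qubit. That leaves negate controls, Toffoli, negate controls. This is valid and gives $1+5+4+5+3+10+1=29$.

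\textbf{What each route buys.} The paper's observation is the one it needs later, in Lemma~\ref{pbt-conditional-preparation}. There the target state depends on a retained label $x$, and a relabelling string $w(x)$ could not be undone by a fixed NOT layer. Your merge is independent of that observation and combines with it. It gives $28$ here, and $41$ instead of $42$ for the conditional preparation.

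\textbf{Your alternative merge.} You do not rely on it, but it is misdescribed. The last layer of the inverse indicator circuit is the inverse of the column fanout, whose controls are exactly the output qubits. So the first-row negations are not the last gates acting on them. Placing $X^w$ on the first row only, in that NOT layer, would leave $w$ in the other rows after the fanout. It works if you negate the $w$ positions in every row. More simply, decode with the ordering $\mathrm{bin}(j)\oplus w$, which removes the relabelling layer at no cost.
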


\begin{proof}
Lemma~\ref{permutation to indicator in constant depth}
implements the indicator encoding in depth at most ten.
We follow it by the inverse encoding, with the indicator
positions arranged according to the desired permutation.
The two circuits use the same indicator register.
Their composition implements the permutation in depth
at most twenty and returns all additional qubits to zero.

For state preparation, we use the six steps in
Theorem~\ref{Any state in constant depth}.
Nonnegative coefficients require no final phase restoration.
The rotation parameters also allow zero coefficients,
so no relabelling is needed. The total depth is
$1+5+4+5+4+10=29$, and every qubit outside the output
register returns to zero.
\end{proof}

For the conditional preparation in
Lemma~\ref{pbt-conditional-preparation}, each rotation stage
is implemented by a diagonal unitary between two layers
of single-qubit basis changes. Each stage therefore has
depth $1+7+1=9$. The remaining stages have the same costs
as in the original preparation construction:
\begin{center}
\begin{tabular}{lr}
    Stage & Depth\\
    \hline
    Prepare the product of rotations & $9$\\
    Compute the prefix values & $5$\\
    Apply the conditional inverse rotations & $9$\\
    Reverse the prefix computation & $5$\\
    Add the indicator for the all-zero string & $4$\\
    Decode the indicators & $10$\\
    \hline
    Total & $42$
\end{tabular}
\end{center}

\subsubsection{Conditional rotations and reflections}
\label{pbt-conditional-operations}

We next give the circuits for the conditional rotations
used below and the reflections used in amplitude
amplification. These circuits use only single-qubit
and generalised Toffoli gates.

Write a real single-qubit rotation as
\begin{equation*}
    R(\theta)=
    \begin{pmatrix}
       \cos\theta&-\sin\theta\\
       \sin\theta&\cos\theta
    \end{pmatrix}.
\end{equation*}
To apply $R(\theta)$ when a control qubit is 1, apply
$R(\theta/2)$, CNOT, $R(-\theta/2)$ and CNOT to the
target in that order. For control 0 the rotations cancel.
For control 1 the operation is
$XR(-\theta/2)XR(\theta/2)=R(\theta)$, since
$XR(\alpha)X=R(-\alpha)$. This gives depth four.

To flip a separate target when a register is zero,
negate the tested qubits, apply a generalised Toffoli
and restore the tested qubits. This gives depth three.

A phase of $-1$ on a specified pattern of computational
basis values also has depth at most three. First negate
the qubits whose specified value is zero. Conjugating
a generalised Toffoli by Hadamards on its target then
applies phase $-1$ to the all-one state of the tested
qubits. Finally restore the negated qubits. The Hadamards
and negations combine into one single-qubit layer before
the Toffoli and one after it. For a single tested qubit,
the required phase is a single-qubit gate.

This implements the reflections
$R_{\mathrm{in}}$ and $R_{\mathrm s}$ in
\eqref{pbt-amplification}. Each reflection tests all the
qubits specified in its definition, including the
preparation work qubits. No additional qubit is needed
to store the result of the test.

\subsubsection{One exact amplification step}

The fanout construction below requires the exact
preparation of an auxiliary state. We obtain it using
the following case of amplitude
amplification~\cite{pbt-brassard2002}, in which the
initial success probability is exactly $1/4$ for every
allowed input.

\begin{lem}
\label{pbt-exact-amplification}
Let $V$ be unitary and let $\Pi_{\mathrm{in}}$ and
$\Pi_{\mathrm s}$ project onto the allowed input and
successful output subspaces. Write
$A=\Pi_{\mathrm s}V\Pi_{\mathrm{in}}$.
If $A^\dagger A=\Pi_{\mathrm{in}}/4$, then
\begin{equation}
    -V(I-2\Pi_{\mathrm{in}})V^\dagger
       (I-2\Pi_{\mathrm s})V\Pi_{\mathrm{in}}=2A.
    \label{pbt-exact-amplification-equation}
\end{equation}
\end{lem}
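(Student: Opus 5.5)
We expand the left-hand side of \eqref{pbt-exact-amplification-equation} directly. The plan is to split $V\Pi_{\mathrm{in}}$ into its successful and unsuccessful parts, and then use the hypothesis $A^\dagger A=\Pi_{\mathrm{in}}/4$ to collapse every term into a multiple of $A$.

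First we apply the reflection $R_{\mathrm s}=I-2\Pi_{\mathrm s}$:
\[
(I-2\Pi_{\mathrm s})V\Pi_{\mathrm{in}}=V\Pi_{\mathrm{in}}-2A .
\]
Next we apply $V^\dagger$, giving $\Pi_{\mathrm{in}}-2V^\dagger A$.

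Then we apply $I-2\Pi_{\mathrm{in}}$. On the first term this gives $\Pi_{\mathrm{in}}-2\Pi_{\mathrm{in}}=-\Pi_{\mathrm{in}}$. On the second term we split $V^\dagger A$ into its component inside the allowed input subspace and its component outside it. The key identity is
\[
\Pi_{\mathrm{in}}V^\dagger A=\Pi_{\mathrm{in}}V^\dagger\Pi_{\mathrm s}V\Pi_{\mathrm{in}}=A^\dagger A=\tfrac14\Pi_{\mathrm{in}},
\]
which uses $\Pi_{\mathrm s}^2=\Pi_{\mathrm s}$. Hence
\[
(I-2\Pi_{\mathrm{in}})V^\dagger A=V^\dagger A-\tfrac12\Pi_{\mathrm{in}},
\]
and the whole expression after this reflection becomes
\[
-\Pi_{\mathrm{in}}-2V^\dagger A+\Pi_{\mathrm{in}}=-2V^\dagger A .
\]

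Finally we multiply by $-V$. This yields $2VV^\dagger A=2A$, as claimed.

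The only step that needs care is recognising $\Pi_{\mathrm{in}}V^\dagger\Pi_{\mathrm s}V\Pi_{\mathrm{in}}$ as $A^\dagger A$. Once that identity is in place, the cancellation of the two $\Pi_{\mathrm{in}}$ terms is exactly where the hypothesis of success probability $1/4$ is used. No spectral argument is needed: the hypothesis $A^\dagger A=\Pi_{\mathrm{in}}/4$ makes the computation purely algebraic.
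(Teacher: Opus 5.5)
Your proof is correct and follows the same route as the paper: a direct expansion of the two reflections, using the identity $\Pi_{\mathrm{in}}V^\dagger\Pi_{\mathrm s}V\Pi_{\mathrm{in}}=A^\dagger A$. The only difference is when the hypothesis is used. You substitute $A^\dagger A=\Pi_{\mathrm{in}}/4$ before multiplying by $-V$. The paper first expands to $V\Pi_{\mathrm{in}}+2A-4VA^\dagger A$ and substitutes at the end.
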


\begin{proof}
Expanding the reflections gives
$V\Pi_{\mathrm{in}}+2A-4VA^\dagger A$.
Substituting $A^\dagger A=\Pi_{\mathrm{in}}/4$
reduces this expression to $2A$.
\end{proof}

Thus one amplification step produces the normalised
successful component with certainty. In the application
below, both projectors specify computational basis
values, so their reflections have the implementations
just described. The overall minus sign can be absorbed
into a single-qubit gate already present in the circuit.

\subsubsection{Exact implementation of fanout}
\label{pbt-fanout-circuit}

We first prepare an auxiliary state and use it to compute
parity. Conjugation by Hadamards then converts parity to
fanout. For the preparation, we use the approximate
circuit of Rosenthal~\cite[Theorem~1.1]
{rosenthal2020boundsqac0complexityapproximating}
and convert it into an exact circuit using
Lemma~\ref{pbt-exact-amplification}.

\begin{lem}
\label{pbt-fanout-depth}
Fanout with any finite number of targets can be
implemented exactly in depth at most $280$ using
single-qubit and generalised Toffoli gates.
Every ancillary qubit starts and ends in zero.
\end{lem}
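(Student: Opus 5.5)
Following the outline given just before the statement, we build fanout from parity, and parity from a single auxiliary resource state. Moore's reduction~\cite{moore1999fanout} does the last step: conjugating every qubit of a parity gate by Hadamards turns it into fanout with the roles of control and targets exchanged. This costs one Hadamard layer on each side and adds only two to the depth. It therefore suffices to implement the $(n+1)$-qubit parity gate exactly, with clean ancillas, in depth at most $278$ over single-qubit and generalised Toffoli gates.

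\textbf{Parity from a resource state.} For parity we will use Rosenthal's construction~\cite{rosenthal2020boundsqac0complexityapproximating}. Suppose an auxiliary state, for instance the cat-type state $(\ket{0^m}+\ket{1^m})/\sqrt2$ or the specific resource state of his parity construction, is available exactly. Then parity of $n$ bits can be computed in constant depth using generalised Toffolis and single-qubit gates, together with the uncomputation of that resource. The scheme is the following.
\begin{enumerate}
    \item Prepare the resource.
    \item Apply the resource-consuming parity circuit. This uses controlled phases, which are generalised Toffolis conjugated by Hadamards.
    \item Unprepare the resource by running the preparation in reverse.
\end{enumerate}
Steps 1 and 3 must leave all ancillas exactly in $\ket0$. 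This is the point where exactness is needed, and Rosenthal's original preparation is only approximate.

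\textbf{Exact resource preparation.} Rosenthal's Theorem~1.1 gives a constant-depth circuit $W$ over our gate set that produces the resource as its successful branch. I expect that, up to a single-qubit rotation of a flag qubit, this can be arranged so that the success amplitude is exactly $1/2$ on the allowed input $\ket{0\cdots0}$. If the natural amplitude $p\ge 1/2$ is known, we attenuate it to exactly $1/2$ with an extra rotated qubit, as was done with the constant $c$ in Section~\ref{pbt-amplification-section}. We then apply Lemma~\ref{pbt-exact-amplification} with
\begin{itemize}
    \item $\Pi_{\mathrm{in}}$ projecting onto all-zero ancillas, and
    \item $\Pi_{\mathrm s}$ projecting onto the flag and work pattern that certifies success.
\end{itemize}
Both projectors specify computational-basis values, so each reflection has depth three by Section~\ref{pbt-conditional-operations}. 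One amplification step $-VR_{\mathrm{in}}V^\dagger R_{\mathrm s}V$ yields the resource exactly, up to a global sign that is absorbed into a single-qubit gate. The preparation therefore has depth $3\operatorname{depth}(W)+6$ plus an extra layer or two, and its inverse has the same depth.

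\textbf{Main obstacle and depth count.} The delicate step is verifying the hypothesis $A^\dagger A=\Pi_{\mathrm{in}}/4$ exactly. This requires that the success amplitude of Rosenthal's circuit be a known constant bounded below by $1/2$, independent of $n$, so that the attenuation is a fixed rotation. If his construction only gives success probability $\Omega(1)$ with an explicit value, I would first combine a constant number of copies in parallel, or amplify once approximately, to exceed $1/4$, and then attenuate. The final check is that the total depth is at most $280$:
\begin{itemize}
    \item Rosenthal's circuit depth, tripled for $V,V^\dagger,V$;
    \item plus the two reflections;
    \item doubled for preparation and unpreparation;
    \item plus the parity core and the two Hadamard layers.
\end{itemize}
I would take the explicit layer counts from his construction and sum them, and I expect this total to come to at most $280$.
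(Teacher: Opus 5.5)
Your overall route is the paper's: Rosenthal's approximate resource, made exact by one step of Lemma~\ref{pbt-exact-amplification}, then parity, then Hadamard conjugation. However, the step that actually makes the resource exact is missing. Rosenthal's theorem does not give you a circuit whose flagged ``successful branch'' is the resource. It gives an output with squared overlap at least $1-1/64$ with \emph{some} state $\frac1{\sqrt2}(\ket{0^t}\ket{\xi_0}+\ket{1^t}\ket{\xi_1})$. There the junk $\ket{\xi_b}$ is entangled with the pattern, so an exact cat state is not a realistic target.

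Suppose you take success to be ``the first $t$ qubits read $0^t$ or $1^t$''. The successful component is then $\sqrt{p_0}\ket{0^t}\ket{\xi_0}+\sqrt{p_1}\ket{1^t}\ket{\xi_1}$, and in general $p_0\neq p_1$. A uniform attenuating rotation like the constant $c$ of Section~\ref{pbt-amplification-section} only rescales both branches together. Exact amplification therefore prepares an \emph{unbalanced} state. After the controlled-$Z$ layer, the even- and odd-parity states have overlap $(p_0-p_1)/(p_0+p_1)\neq0$, so parity is not computed exactly.

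Verifying $A^\dagger A=\Pi_{\mathrm{in}}/4$ is therefore not the real difficulty, since $\Pi_{\mathrm{in}}$ is rank one and this is a single number. The missing idea is to rebalance the two branches while fixing that number. The paper does this in three moves:
\begin{enumerate}
    \item It records the tests for $0^t$ and $1^t$ in two extra qubits.
    \item It rotates a third qubit conditionally, so that branch $b$ gets amplitude $1/\sqrt{8p_b}$ on $\ket1$. This is at most one because the trace-distance bound gives $p_b\geq 3/8$.
    \item It uncomputes the two tests.
\end{enumerate}
The component with the third qubit equal to $1$ is then exactly balanced and has probability exactly $1/4$. Success is also a single-qubit pattern, so its reflection is one phase gate.

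Your parity step also fails as stated. Call the prepare--phase--unprepare circuit $Q$. It returns the work register to $\ket0$ only for even parity; for odd parity it leaves the register orthogonal to $\ket0$. So the claim that steps 1 and 3 leave all ancillas in $\ket0$ is false, and no target has yet received the parity. You need four stages:
\begin{enumerate}
    \item apply $Q$;
    \item flip the target conditioned on the whole work register being zero;
    \item apply $Q$ again, which cleans up because $Q^2=I$;
    \item apply a final NOT to the target.
\end{enumerate}
This second $Q$ is absent from your depth tally. Since the bound $280$ is part of the statement, it must be counted rather than expected:
\begin{itemize}
    \item $5+4+8+4=21$ for the balanced circuit, starting from Rosenthal's depth-$5$ circuit;
    \item $3\cdot21+3+1=67$ for the exact preparation;
    \item $2\cdot67+3=137$ for $Q$;
    \item $2\cdot137+3+1=278$ for parity;
    \item $280$ after the two Hadamard layers.
\end{itemize}
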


\begin{proof}
\emph{Preparing the auxiliary state.}
For $t\geq1$, we require a state of the form
\begin{equation}
    \ket{\nu}
    =\frac1{\sqrt2}
       \bigl(\ket{0^t}\ket{\xi_0}
             +\ket{1^t}\ket{\xi_1}\bigr),
    \label{pbt-balanced-state}
\end{equation}
where $\ket{\xi_0}$ and $\ket{\xi_1}$ are normalised
states determined by the construction. The two terms
are orthogonal and have equal weight. Changing their
relative sign thus gives a state orthogonal to
$\ket{\nu}$. We will use this property to distinguish
even and odd parity.

The cited theorem gives a circuit $C$ whose output has
squared overlap at least $1-1/64$ with some state of
the form~\eqref{pbt-balanced-state}. It uses two
multiqubit layers. At most three single-qubit layers
are needed before, between and after them, giving
depth at most five in our convention.

Let $p_0,p_1$ be the probabilities that the first $t$
qubits of $C\ket{0}$ are all zero and all one.
The trace distance from the approximated state is
at most $1/8$, so $p_0,p_1\geq1/2-1/8=3/8$.
We choose $\ket{\xi_b}$ to be the normalised state
of the remaining qubits in the component where
the first $t$ qubits are $b^t$. This component is
$\sqrt{p_b}\ket{b^t}\ket{\xi_b}$.

We use three additional qubits, initially zero,
to give these two components equal amplitudes.
The first two qubits record whether the first $t$
qubits are all zero or all one. These tests take
three layers and one layer respectively.

For each $b\in\{0,1\}$, we rotate the third qubit
conditional on the test for $b^t$, giving amplitude
$1/\sqrt{8p_b}$ on $\ket{1}$. These amplitudes are
at most one because $p_b\geq3/8$. The two patterns
are mutually exclusive, so only one of the rotations
acts on each corresponding component. Each conditional
rotation takes four layers, and we apply them in
sequence. We then reverse the two tests, returning
the first two additional qubits to zero.

Let the resulting circuit be $V_0$. Including $C$,
its depth is at most $5+4+8+4=21$. In the component
where the third additional qubit is 1, each of the
two terms has amplitude
$\sqrt{p_b}/\sqrt{8p_b}=1/\sqrt8$.
The total probability of this result is exactly $1/4$,
and the normalised state of the original registers
has the form~\eqref{pbt-balanced-state}.

We apply Lemma~\ref{pbt-exact-amplification} with the
initial all-zero state as the allowed input and value 1
of the third additional qubit as success.
The reflection about the initial state has depth three.
The reflection for success is a single-qubit phase gate.
The resulting unitary $U$ prepares the required state
exactly in depth at most $3\cdot21+3+1=67$.
We include the additional qubits, whose values are
now fixed, in the states $\ket{\xi_0}$ and
$\ket{\xi_1}$.

The rotation angles depend on $p_0,p_1$ and are chosen
as part of the circuit specification. As in our circuit
conventions, we do not require an efficient classical
procedure for finding these angles.

\emph{Computing parity and restoring the work qubits.}
Let $x_1,\ldots,x_t$ be the input bits and let
$z_1,\ldots,z_t$ be the first $t$ qubits of the
auxiliary state. Starting with all work qubits zero,
we apply $U$, then a controlled-$Z$ gate to each
pair $x_i,z_i$ in parallel, followed by $U^\dagger$.
Call this circuit $W$.

A controlled-$Z$ gate applies phase $-1$ to
$\ket{11}$ and leaves the other computational basis
states unchanged. It is a CNOT conjugated by Hadamards
on its target, so the parallel layer of these gates
has depth three. Hence $W$ has depth at most
$2\cdot67+3=137$.

For a fixed input string $x$, these gates leave the
first term of~\eqref{pbt-balanced-state} unchanged
and multiply the second by $(-1)^{x_1+\cdots+x_t}$.
For even parity, $U^\dagger$ returns all work qubits
to zero. For odd parity, the state before $U^\dagger$
is orthogonal to $\ket{\nu}$, so the work qubits
after $U^\dagger$ are orthogonal to the all-zero state.
Also, $W^2=I$, since each controlled-$Z$ gate squares
to the identity.

We now write the parity into a separate target qubit.
Apply $W$, flip the target when the entire work register
is zero, apply $W$ again, and negate the target.
For even parity, the two target flips cancel.
For odd parity, only the final NOT acts.
The second application of $W$ restores all work qubits
to zero in both cases.

The input bits are unchanged, and the target is
replaced by its XOR with their parity. Linearity
gives the same action on arbitrary quantum inputs.
The depth is at most $2\cdot137+3+1=278$.

\emph{Converting parity to fanout.}
The parity gate is the product of CNOT gates from
the input bits to the target. Conjugating every
data qubit by a Hadamard reverses each CNOT.
The original target becomes the fanout control,
and the other data qubits become its targets.
These two Hadamard layers give depth at most $280$.
All work qubits still return to zero.
Fanout with no targets is the identity.
\end{proof}

The bound is independent of the number of targets.
To replace several fanouts in one layer, we assign
disjoint work registers to their implementations
and run them in parallel. Each replacement returns
its work qubits to zero. Replacing fanout throughout
a circuit thus multiplies its depth by at most $280$.

\subsection{Total depth and dependence on the input dimension}
\label{pbt-depth-section}

We now combine the preceding bounds to count the complete
teleportation circuit, including preparation of the shared
pairs and selection of the output port. We first include
fanout in the gate set. Circuit size and the number of
ancillary qubits remain unrestricted.

The circuit $V$ contains the first conditional preparation,
the inverse of the second, and one single-qubit rotation.
Its depth is at most $42+42+1=85$.
Each reflection in~\eqref{pbt-amplification} has depth
at most three. Since $q=(k-1)/2$, the circuit $U_q$
contains $2q+1=k$ applications of $V$ or $V^\dagger$
and $2q=k-1$ reflections.

Each shared pair has nonnegative computational basis
coefficients. By Lemma~\ref{pbt-basic-depth}, all $M$
pairs can be prepared in parallel in depth at most $29$,
using separate work qubits for each preparation.

We use $B_1$ as the fixed output register.
When the success condition holds and the label specifies
a valid port $i$, we swap $B_i$ with $B_1$.
For every other result, we leave Bob's registers unchanged,
so $B_1$ is again the selected output.
The operation preserves Alice's registers and permutes
computational basis states. Lemma~\ref{pbt-basic-depth}
implements this entire permutation in depth at most $20$.

The depth count is
\begin{center}
\begin{tabular}{lr}
    Operation & Depth\\
    \hline
    Prepare the $M$ shared pairs & $29$\\
    Apply $V$ or $V^\dagger$ a total of $k$ times & $85k$\\
    Apply the $k-1$ reflections & $3(k-1)$\\
    Select the output port & $20$\\
    \hline
    Total & $88k+46$
\end{tabular}
\end{center}

We next replace fanout using
Lemma~\ref{pbt-fanout-depth}.
Gates in each original layer have disjoint supports,
so their replacements can run in parallel with separate
work registers. The complete circuit then has depth at most
$280(88k+46)$
using only single-qubit and generalised Toffoli gates.
Every replacement returns its work qubits to zero,
so the subsequent operations, including the reflections,
act as before.

For example, when $d=2$, we can take $k=3$.
The depth bound is then $310$ with fanout included
in the gate set and $86800$ using only single-qubit
and generalised Toffoli gates. These are upper bounds
obtained by composing the preceding circuits.
We have not optimised the constants.

For $0<\varepsilon<1$, taking
$M=\lceil(d^2-1)/\varepsilon\rceil$ gives
entanglement fidelity at least $1-\varepsilon$.
Increasing $M$ improves this fidelity bound without
changing $k$ or the depth bound. The preparation circuits
and their gate parameters may change with $M$.

The shared pairs occupy $2M\lceil\log_2d\rceil$ qubits.
The circuit also uses work qubits for the preparations,
permutations and fanout implementations. Their number
and the total gate count are unrestricted in this result.
All circuit operations are exact, while teleportation
is approximate for finite $M$.

The dependence on $d$ comes from the number of
amplification steps. For the allowed values of $M$,
the initial success probability satisfies
$1/(d+1)\leq\mu<1/d$. Our choice of $k$ follows
$k<\pi\sqrt{d+1}/2+2$, giving depth $O(\sqrt d)$
independently of $M$.

Thus every fixed input dimension admits a depth bound
that is independent of the desired accuracy.
For an input of $n$ qubits, $d=2^n$ and the bound
is $O(2^{n/2})$. Whether one depth bound can hold
for all input dimensions, even at a fixed positive
entanglement fidelity, remains open. 

\section*{Acknowledgements}
Strelchuk acknowledges support from the Wellcome Leap as part of the Q4Bio Program and the Royal Society University Research Fellowship. Subramanian acknowledges support from the Royal Society through a University Research Fellowship.

\clearpage

\printbibliography
\end{document}